\newcommand{\horsarity}[1]{\mathsf{arity}\left(#1\right)}
\newcommand{\horsorder}[1]{\mathsf{order}\left(#1\right)}
\newcommand{\ground}{o}
\newcommand{\contextplaceholder}{\bullet}
\newcommand{\abs}{\alpha}
\newcommand{\absof}[1]{\abs(#1)}
\newcommand{\concsem}[1]{\semunder{\modelconc}{#1}}
\newcommand{\disunion}{\dotcup}
\renewcommand{\sol}[1]{\sigma_{#1}}
\newcommand{\soliter}[2]{\sigma^{#2}_{#1}}
\newcommand{\solconc}{\sol{\modelconc}}
\newcommand{\solabs}{\sol{\modelabs}}
\newcommand{\solconcof}[1]{\solconc(#1)}
\newcommand{\wordend}{\$}
\newcommand{\QNFA}{Q_{\mathit{NFA}}}
\newcommand{\acceptfrom}[1]{\mathsf{acc}\left(#1\right)}
\newcommand{\acceptfromf}{\mathsf{acc}}
\newcommand{\predec}[2]{\mathsf{pre}_{#1}\left(#2\right)}
\newcommand{\prepend}[2]{\mathsf{prepend}_{#1}(#2)}
\newcommand{\predecf}[1]{\mathsf{pre}_{#1}}
\newcommand{\prependf}[1]{\mathsf{prepend}_{#1}}
\newcommand{\prightarrow}{\nrightarrow}
\newcommand{\soundfor}{\vdash}
\newcommand{\conctag}{\mathit{C}}
\newcommand{\abstag}{\mathit{A}}
\newcommand{\opttag}{\mathit{O}}
\newcommand{\meet}{\sqcap}
\newcommand{\bigmeet}{\bigsqcap}
\newcommand{\inter}{\mathcal{I}}
\newcommand{\interof}[1]{\inter(#1)}
\newcommand{\interconc}{\mathcal{I}^{\conctag}}
\newcommand{\interconcof}[1]{\inter^{\conctag}(#1)}
\newcommand{\interabsof}[1]{\inter^{\abstag}(#1)}
\newcommand{\interopt}{\mathcal{I}^{\opttag}}
\newcommand{\interoptof}[1]{\inter^{\opttag}(#1)}
\newcommand{\contfun}[2]{\mathit{Cont}(#1, #2)}
\newcommand{\model}{\mathcal{M}}
\newcommand{\modelconc}{\model^{\conctag}}
\newcommand{\modelabs}{\model^{\abstag}}
\newcommand{\modelopt}{\model^{\opttag}}
\newcommand{\semunder}[2]{#1\sem{#2}}
\newcommand{\domain}{\mathcal{D}}
\newcommand{\domainof}[1]{\domain({#1})}
\newcommand{\domainconc}{\mathcal{D}^{\conctag}}
\newcommand{\domainconcof}[1]{\domainconc(#1)}
\newcommand{\domainabs}{\mathcal{D}^{\abstag}}
\newcommand{\domainabsof}[1]{\domainabs(#1)}
\newcommand{\domainopt}{\mathcal{D}^{\opttag}}
\newcommand{\domainoptof}[1]{\domainopt(#1)}
\newcommand{\rhsfun}[1]{\mathit{rhs}_{#1}}
\newcommand{\rhsfunof}[2]{\rhsfun{#1}(#2)}
\newcommand{\precisioncont}{(\mathsf{P2})}
\newcommand{\precisionsur}{(\mathsf{P1})}
\newcommand{\precisionbot}{(\mathsf{P3})}
\newcommand{\precisionrel}{(\mathsf{P4})}
\newcommand{\precisioncomp}{(\mathsf{P5})}
\newcommand{\opof}[1]{\mathit{op}_{#1}}
\newcommand{\detof}[1]{#1^{\mathit{det}}}
\newcommand\modell{\model_l}
\newcommand\modelr{\model_r}
\newcommand\domainl{\domain_l}
\newcommand\domainr{\domain_r}
\newcommand\domainlof[1]{\domainl(#1)}
\newcommand\domainrof[1]{\domainr(#1)}
\newcommand\interl{\inter_l}
\newcommand\interr{\inter_r}
\newcommand\interlof[1]{\interl(#1)}
\newcommand\interrof[1]{\interr(#1)}
\newcommand\vall{v_l}
\newcommand\valr{v_r}
\newcommand\toplof[1]{\top^l_{#1}}
\newcommand\toprof[1]{\top^r_{#1}}
\newcommand\fl{f_l}
\newcommand\fr{f_r}
\newcommand\solfl{\sol{l}}
\newcommand\solfr{\sol{r}}
\newcommand\rhsfunl{\rhsfun{\modell}}
\newcommand\rhsfunr{\rhsfun{\modelr}}
\newcommand\rhsfunopt{\rhsfun{\modelopt}}
\newcommand\genop{\mathop{\mathit{op}}}
\newcommand\gamehorsreg{\mathsf{HOG}}
\newcommand\playere{\Diamond}
\newcommand\playera{\Box}
\newcommand\horsgame{\mathcal{G}}
\newcommand\boolof[1]{\mathsf{PBool}\mathord{\left(#1\right)}}
\newcommand\ftrue{\mathsf{true}}
\newcommand\fmla{\phi}
\newcommand\soln{\Xi}
\newcommand\solset{\Omega}
\newcommand\ldisj[1]{\bigvee{}_{\!#1}}
\newcommand{\textsfbf}[1]{\textbf{\sffamily #1}}
\newcommand{\setfmla}{\Phi}
\newcommand\heveaornot[2]{#2}
\title
{
    Domains for Higher-Order Games
}
\author[1]{Matthew Hague}
\author[2]{Roland Meyer\footnote{A part of the work was carried out when the author was at Aalto University.}}
\author[2]{Sebastian Muskalla}
\affil[1]
{
    Royal Holloway University of London, United Kingdom\\
    \texttt{matthew.hague@rhul.ac.uk}
}
\affil[2]
{
    TU Braunschweig, Germany\\
    \texttt{\{roland.meyer, s.muskalla\}@tu-braunschweig.de}
}
\authorrunning{
    M. Hague, R. Meyer, and S. Muskalla
}
\subjclass{F.1.1 Models of Computation}
\keywords{
    Higher-order recursion schemes,
    games,
    semantics,
    abstract interpretation,
    fixed points.
}
\begin{document}

\maketitle

\begin{abstract}
We study two-player inclusion games played over word-generating higher-order recursion schemes.
While inclusion checks are known to capture verification problems, two-player games generalize this relationship to program synthesis.
In such games, non-terminals of the grammar are controlled by opposing players.
The goal of the existential player is to avoid producing a word that lies outside of a regular language of safe words.

We contribute a new domain that provides a representation of the winning region of such games.
Our domain is based on (functions over) potentially infinite Boolean formulas with words as atomic propositions.
We develop an abstract interpretation framework that we instantiate to abstract this domain into a domain where the propositions are replaced by states of a finite automaton.
This second domain is therefore finite and we obtain, via standard fixed-point techniques, a direct algorithm for the analysis of two-player inclusion games.
We show, via a second instantiation of the framework, that our finite domain can be optimized, leading to a $(k+1)\mathsf{EXP}$ algorithm for order-$k$ recursion schemes.
We give a matching lower bound, showing that our approach is optimal.
Since our approach is based on standard Kleene iteration, existing techniques and tools for fixed-point computations can be applied.
\end{abstract}

\section{Introduction}
Inclusion checking has recently received considerable attention \cite{Wulf2006,FogartyVardi,Abdulla:Simulation,Abdulla:Advanced,NETYS}.
One of the reasons is a new verification loop, which invokes inclusion as a subroutine in an iterative fashion.
The loop has been proposed by Podelski et al.\ for the safety verification of recursive programs~\cite{HeizmannHoenickePodelski2010}, and then been generalized to parallel and parameterized programs~\cite{LanguageRefinement,FKP14,FKP15} and to liveness~\cite{FKP16}.
The idea of Podelski's loop is to iteratively approximate unsound data flow in the program of interest, and add the approximations to the specification.
Consider a program with control-flow language $\mathit{CF}$ that is supposed to satisfy a safety specification given by a regular language $R$.
If the check $\mathit{CF}\subseteq R$ succeeds, then the program is correct as the data flow only restricts the set of computations.
If a computation $w\in \mathit{CF}$ is found that lies outside $R$, then it depends on the data flow whether the program is correct.
If data is handled correctly, $w$ is a counterexample to $R$.
Otherwise, $w$ is generalized to a regular language $S$ of infeasible computations.
We set $R = R \cup S$ and repeat the procedure.

Podelski's loop has also been generalized to synthesis~\cite{HMM16,MMN17}.
In that setting, the program is assumed to have two kinds of non-determinism.
Some of the non-deterministic transitions are understood to be controlled by the environment.
They provide inputs that the system has to react to, and are also referred to as demonic non-determinism.
In contrast, the so-called angelic non-determinism are the alternatives of the system to react to an input.
The synthesis problem is to devise a controller that resolves the angelic non-determinism in a way that a given safety specification is met.
Technically, the synthesis problem corresponds to a two-player perfect information game, and the controller implements a winning strategy for the system player.
When generalizing Podelski's loop to the synthesis problem, the inclusion check thus amounts to solving a strategy-synthesis problem.

Our motivation is to synthesize functional programs with Podelski's loop.
We assume the program to be given as a non-deterministic higher-order recursion scheme where the non-terminals are assigned to two players.
One player is the system player who tries to enforce the derivation of words that belong to a given regular language.
The other player is the environment, trying to derive a word outside the language.
The use of the corresponding strategy-synthesis algorithm in Podelski's loop comes with three characteristics:
(1) The algorithm is invoked iteratively,
(2) the program is large and the specification is small, and
(3) the specification is non-deterministic.
The first point means that the strategy synthesis should not rely on costly precomputation.
Moreover, it should have the chance to terminate early.
The second says that the cost of the computation should depend on the size of the specification, not on the size of the program.
Computations on the program, in particular iterative ones, should be avoided.
Together with the third characteristic, these two consequences rule out reductions to reachability games.
The required determinization would mean a costly precomputation, and the reduction to reachability would mean a product with the program.
This discussion in particular forbids a reduction of the strategy-synthesis problem to higher-order model checking~\cite{O06}, which indeed can be achieved (see Appendix~\ref{Appendix:Introduction} for a comparison to intersection types~\cite{KO09}).
Instead, we need a strategy synthesis that can directly deal with non-deterministic specifications.

We show that the winning region of a higher-order inclusion game \wrt a non-deterministic right-hand side can be computed with a standard fixed-point iteration.
Our contribution is a domain suitable for this computation.
The key idea is to use Boolean formulas whose atomic propositions are the states of the targeted finite automaton.
While a formula-based domain has recently been proposed for context-free inclusion games~\cite{HMM16} (and generalized to infinite words~\cite{MMN17}), the generalization to higher-order is new.
Consider a non-terminal that is ground and for which we have computed a formula.
The Boolean structure reflects the alternation among the players in the plays that start from this non-terminal.
The words generated along the plays are abstracted to sets of states from which these words can be accepted.
Determining the winner of the game is done by evaluating the formula when sets of states containing the initial state are assigned the value true.
To our surprise, the above domain did not give the optimal complexity.
Instead, it was possible to further optimize it by resolving the determinization information.
Intuitively, the existential player can also resolve the non-determinism captured by a set.
Crucially, our approach handles the non-determinism of the specification inside the analysis, without preprocessing.

Besides offering the characteristics that are needed for Podelski's loop, our development also contributes to the research program of \emph{effective denotational semantics}, as recently proposed by Salvati and Walukiewicz~\cite{SW15} as well as Grellois and Melli\`es~\cite{GM15MFCS,GM15MFCS}, with
\cite{A07,Salvati09} being early works in this field.
The idea is to solve verification problems by computing the semantics of a program in a suitable domain.
Salvati and Walukiewicz studied the expressiveness of greatest fixed-point semantics and their correspondence to automata~\cite{SW15}, and constructions of enriched Scott models for parity conditions~\cite{SW15c,SW15b}.
A similar line of investigation has been followed in recent work by Grellois and Melli\`es~\cite{GrelloisMelliesLinear15,GrelloisMelliesRelational15}.
Hofmann and Chen considered the verification of more restricted $\omega$-path properties with a focus on the domain~\cite{HC14}.
They show that explicit automata constructions can be avoided and give a domain that directly captures subsets (so-called patches) of the $\omega$-language.
The work has been generalized to higher order~\cite{HL17}.
Our contribution is related in that we focus on the domain (suitable for capturing plays).

Besides the domain, the correctness proof may be of interest.
We employ an exact fixed-point transfer result as known from abstract interpretation.
First, we give a semantic characterization showing that the winning region can be captured by an infinite model (a greatest fixed point).
This domain has as elements (potentially infinite) sets of (finite) Boolean formulas.
The formulas capture plays (up to a certain depth) and the atomic propositions are terminal words.
The infinite set structure is to avoid infinite syntax.
Then we employ the exact fixed-point transfer result to replace the terminals by states and get rid of the sets.
The final step is another exact fixed-point transfer that justifies the optimization.
We give a matching lower bound.
The problem is $(k+1)\mathsf{EXP}$-complete for order-$k$ schemes.


%

\subparagraph*{Related Work.}
The relationship between recursion schemes and extensions of pushdown automata has been well studied~\cite{Damm82,DG86,KNU02,HMOS08}.
This means algorithms for recursion schemes can be transferred to extensions of pushdown automata and vice versa.
In the sequel, we will use \emph{pushdown automata} to refer to pushdown automata and their family of extensions.

The decidability of Monadic Second Order Logic (MSO) over trees generated by recursion schemes was first settled in the restricted case of \emph{safe} schemes by Knapik~\textit{et al.}~\cite{KNU02} and independently by Caucal~\cite{Caucal02}.
This result was generalized to all schemes by Ong~\cite{O06}.
Both of these results consider \emph{deterministic} schemes only.

Related results have also been obtained in the consideration of games played over the configuration graphs of pushdown automata~\cite{Walukiewicz2001234,Cachat03,KNUW05,HMOS08}.
Of particular interest are \emph{saturation} methods for pushdown games~\cite{BEM97,FWW97,Cachat2002,BM04,HO07,HO09,BCHS12}.
In these works, automata representing sets of winning configurations are constructed using fixed-point computations.

A related approach pioneered by Kobayashi~et al.\ operating directly on schemes is that of \emph{intersection types}~\cite{KobayashiPOPL09,KO09}, where types embedding a property automaton are assigned to terms of a scheme.
Recently, saturation techniques were transferred to intersection types by Broadbent and Kobayashi~\cite{BK13}.
The typing algorithm is then a least fixed-point computation analogous to an optimized version of our Kleene iteration, restricted to deterministic schemes.
This has led to one of the most competitive model-checking tools for schemes~\cite{horsat2}.

One may reduce our language inclusion problems to many of the above works.
E.g.\ from an inclusion game for schemes, we may build a game over an equivalent kind of pushdown automaton and take the product with a determinization of the NFA.
This obtains a reachability game over a pushdown automaton that can be solved by any of the above methods.
However, such constructions are undesirable for iterative invocations as in Podelski's~loop.

%

We already discussed the relationship to model-theoretic verification algorithms.
Abstract interpretation has also been used by Ramsay~\cite{R14}, Salvati and Walukiewicz~\cite{SW15c,SW15b}, and Grellois and Melli\`es~\cite{GM15MFCS,G16} for verification.
The former used a Galois connection between safety properties (concrete) and equivalence classes of intersection types (abstract) to recreate decidability results known in the literature.
The latter two strands gives a semantics capable of computing properties expressed in MSO.
Indeed, abstract interpretation has long been used for static analysis of higher-order programs~\cite{AH87}.

\subparagraph*{Acknowledgments.}

This work was supported by the Engineering and Physical Sciences Research Council [EP/K009907/1].
The work instigated while some of the authors were visiting the Institute for Mathematical Sciences, National University of Singapore in 2016.
The visit was partially supported by the Institute.


\section{Preliminaries}

\subparagraph*{Complete Partial Orders.}

Let $(D, \leq)$ be a \emph{partial order} with set $D$ and (partial) ordering $\leq$ on $D$.
We call $(D, \leq)$ \emph{pointed} if there is a greatest element, called the \emph{top element} and denoted by $\top \in D$.
A \emph{descending chain} in $D$ is a sequence $(d_i)_{i \in \N}$ of elements in $D$ with $d_i \geq d_{i+1}$.
We call $(D, \leq)$ \emph{$\omega$-complete} if every descending chain has a greatest lower bound, called the \emph{meet} or the \emph{infimum}, and denoted by $\bigsqcap_{i \in  \N} d_i$.
If $(D, \leq)$ is pointed and $\omega$-complete, we call it a \emph{pointed $\omega$-complete partial order (cppo)}.
In the following, we will only consider partial orders that are cppos.
Note, cppo is usually used to refer to the dual concept, \ie partial orders with a least element and least upper bounds for ascending chains.



A function $f : D \to D$ is \emph{$\meet$-continuous} if for all descending chains $(d_i)_{i \in \N}$ we have $f(\bigsqcap_{i \in \N} d_i) = \bigsqcap_{i \in \N} f(d_i)$.
We call a function $f : D \to D$ \emph{monotonic} if for all $d, d' \in D$, $d \leq d'$ implies $f(d) \leq f(d')$.
Any function that is $\meet$-continuous is also monotonic.
For a monotonic function,
$
    \top \geq f(\top) \geq f^2 (\top) = f(f(\top)) \geq f^3 (\top) \geq \ldots
$
is a descending chain.

If the function is $\meet$-continuous, then $\bigsqcap_{i \in  \N} f^i (\top)$ is by Kleene's theorem the greatest fixed point of $f$, \mbox{\ie $f(\bigsqcap_{i \in  \N} f^i (\top)) = \bigsqcap_{i \in  \N} f^i (\top)$} and $\bigsqcap_{i \in  \N} f^i (\top)$ is larger than any other element $d$ with $f(d) = d$.
We also say $\bigsqcap_{i \in  \N} f^i (\top)$ is the greatest solution to the equation $x = f(x)$.

A lattice satisfies the \emph{descending chain condition (DCC)} if every descending chain has to be stationary at some point.
In this case $\bigsqcap_{i \in  \N} f^i (\top) = \bigsqcap_{i = 0}^{i_0} f^i (\top)$ for some index $i_0$ in $\N$.
With this, we can compute the greatest fixed point:
Starting with $\top$, we iteratively apply $f$ until the result does not change.
This process is called \emph{Kleene iteration}.
Note that finite cppos, \ie with finitely many elements in $D$, trivially satisfy the descending chain condition.


\subparagraph{Finite Automata.}

A \emph{non-deterministic finite automaton (NFA)} is a tuple
\mbox{$A = (\QNFA, \Gamma, \delta, q_0, Q_f)$}
where
    $\QNFA$ is a finite set of states,
    $\Gamma$ is a finite alphabet,
    $\delta \subseteq \QNFA \times \Gamma \times \QNFA$ is a (non-deterministic) transition relation,
    $q_0 \in \QNFA$ is the initial state, and
    $Q_f \subseteq \QNFA$ is a set of final states.
We write
$q \tow{a} q'$
to denote
$(q, a, q') \in \delta$.
Moreover, given a word
$w = a_1\cdots a_\ell$,
we write
$q \tow{w} q'$
whenever there is a sequence of transitions, also called \emph{run},
$q_1 \tow{a_1} q_2 \tow{a_2} \cdots \tow{a_\ell} q_{\ell+1}$
with $q_1 = q$ and $q_{\ell+1} = q'$.
The run is accepting if $q = q_0$ and $q' \in Q_f$.
The language of $A$ is
$
    \lang{A} = \Set{w}{q_0 \tow{w} q \in Q_f} \ .
$

\section{Higher-Order Recursion Schemes}
\label{Section:HORS}

We introduce higher-order recursion schemes, \emph{schemes} for short, following the presentation in~\cite{HaddadIOvsOI}.
Schemes can be understood as grammars generating the computation trees of programs in a functional language.
As is common in functional languages, we need a typing discipline.
To avoid confusion with type-based approaches to higher-order model checking~\cite{KobayashiPOPL09,R13,KO09}, we refer to types as \emph{kinds}.
Kinds define the functionality of terms, without specifying the data domain.
Technically, the only data domain is the ground kind $\ground$, from which (potentially higher-order) function kinds are derived by composition:
\begin{align*}
    \kappa\ ::=\ \ground \ \mid \ ( \kappa_1 \to \kappa_2 )\ .
\end{align*}
We usually omit the brackets and assume that the arrow associates to the right.
The number of arguments to a kind is called the \emph{arity}.
The \emph{order} defines the functionality of the arguments: A first-order kind defines functions that act on values, a second-order kind functions that expect functions as parameters.
Formally, we have
\begin{align*}
        \horsarity{\ground} &= 0,
        & \horsorder{\ground} &= 0,\\
        \horsarity{\kappa_1 \to \kappa_2} &= \horsarity{\kappa_2}+1,
        & \horsorder{\kappa_1 \to \kappa_2} &= \max(\horsorder{\kappa_1} +1, \horsorder{\kappa_2})\ .
\end{align*}
Let $\Kappa$ be the set of all kinds.
Higher-order recursion schemes assign kinds to symbols from different alphabets, namely non-terminals, terminals, and variables.
Let $\Gamma$ be a set of such \emph{kinded symbols}.
For each kind $\kappa$, we denote by $\Gamma^\kappa$ the restriction of $\Gamma$ to the symbols with kind $\kappa$.
The \emph{terms} $\calT^\kappa (\Gamma)$ of kind $\kappa$ over $\Gamma$ are defined by simultaneous induction over all kinds.
They form the smallest set satisfying
\vspace*{-0.3cm}
\begin{enumerate}
\item
    $\Gamma^\kappa \subseteq \calT^\kappa (\Gamma)$,
\item
    $\bigcup_{\kappa_1} \Set{t\ v}{t \in \calT^{\kappa_1 \to \kappa_2}(\Gamma), v \in \calT^{\kappa_1} (\Gamma)} \subseteq \calT^{\kappa_2} (\Gamma)$, and
\item
    $\Set{\lambda x.t}{x\in \calT^{\kappa_1}(\Gamma), t\in \calT^{\kappa_2}(\Gamma)}\subseteq  \calT^{\kappa_1 \to \kappa_2}(\Gamma)$.
\end{enumerate}
If term $t$ is of kind $\kappa$, we also write $t \colon \kappa$.
We use $\calT(\Gamma)$ for the set of all terms over $\Gamma$.
We say a term is \emph{$\lambda$-free} if it contains no sub-term of the form $\lambda x . t$.
A term is \emph{variable-closed} if all occurring variables are bound by a preceding $\lambda$-expression.
\begin{definition}
    A \emph{higher-order recursion scheme}, (\emph{scheme} for short), is a tuple $G = (V, N, T, R, S)$, where
    $V$ is a finite set of kinded symbols called \emph{variables},
    $T$ is a finite set of kinded symbols called \emph{terminals}, and 
    $N$ is a finite set of kinded symbols called \emph{non-terminals} with
    $S \in N$ the \emph{initial symbol}.
    The sets $V$, $T$, and $N$ are pairwise disjoint.
    The finite set $R$ consists of \emph{rewriting rules} of the form $F = \lambda x_1 \ldots \lambda x_n.e$, where $F \in N$ is a non-terminal of kind $\kappa_1 \to \ldots \kappa_n \to \ground$,
    $x_1, \ldots, x_n \in V$ are variables of the required kinds,
    and $e$ is a $\lambda$-free, variable-closed term of ground kind from $\calT^\ground (T \dotcup N \dotcup \set{ x_1 \colon \kappa_1, \ldots, x_n \colon \kappa_n} )$.
\end{definition}
The semantics of $G$ is defined by rewriting subterms according to the rules in $R$.
A \emph{context} is a term $C[\contextplaceholder] \in \calT (\Gamma \dotcup \set{ \contextplaceholder \colon \ground })$ in which $\contextplaceholder$ occurs exactly once.
Given a context $C[\contextplaceholder]$ and a term $t:\ground$, we obtain $C[t]$ by replacing the unique occurrence of $\contextplaceholder$ in $C[\contextplaceholder]$ by $t$.
With this, $t \derive_G t'$ if there is a context $C[\contextplaceholder]$, a rule $F = \lambda x_1 \ldots \lambda x_n.e$, and a term $F\ t_1\ \ldots\ t_n:\ground$ such that $t = C[F\ t_1\ \ldots\ t_n]$ and
$t' = C\left[e[x_1 \mapsto t_1, \ldots, x_n \mapsto t_n]\right]$.
In other words, we replace one occurrence of $F$ in $t$ by a right-hand side of a rewriting rule, while properly instantiating the variables.
We call such a replaceable $F\ t_1\ \ldots\ t_n$ a \emph{reducible expression (redex)}.
The rewriting step is \emph{outermost to innermost (OI)} if there is no redex that contains the rewritten one as a proper subterm.
The  OI-language $\lang{G}$ of $G$ is the set of all (finite, ranked, labeled) trees $T$ over the terminal symbols that can be created from the initial symbol $S$ via OI-rewriting steps.
We will restrict the rewriting relation to OI-rewritings in the rest of this paper.
Note, all words derivable by IO-rewriting are also derivable with OI-rewriting.

\subparagraph*{Word-Generating Schemes.}

We consider \emph{word-generating schemes}, \ie schemes with terminals $T\disunion \set{\wordend:\ground}$ where exactly one terminal symbol $\wordend$ has kind $\ground$ and all others are of kind $\ground\to \ground$.
The generated trees have the shape
$a_1\ (a_2\ (\cdots\ (a_k\ \wordend)))$,
which we understand as the finite word $a_1 a_2 \ldots a_k \in T^*$.
We also see $\lang{G}$ as a language of finite words.

\subparagraph*{Determinism.}
The above schemes are non-deterministic in that several rules may rewrite a non-terminal.
We associate with a non-deterministic scheme $G=(V, N, T, R, S)$ a deterministic scheme $\detof{G}$ with exactly one rule per non-terminal.
Intuitively, $\detof{G}$ makes the non-determinism explicit with new terminal symbols.

Formally, let $F:\kappa$ be a non-terminal with rules $F= t_1$ to $F= t_{\ell}$.
We may assume each $t_i = \lambda x_{1} \ldots \lambda x_{k}. e_i$, where $e_i$ is $\lambda$-free.
We introduce a new terminal symbol
$\opof{F}: \ground \to \ground \to \ldots \to \ground $ of arity~$\ell$.
Let the set of all these terminals be $\detof{T}=\Set{\opof{F}}{F\in N}$.
The set of rules $\detof{R}$ now consists of a single rule for each non-terminal, namely
$F= \lambda x_{1} \ldots \lambda x_{k}. \opof{F}\ e_1\ \cdots\ e_{\ell}$.
The original rules in $R$ are removed.
This yields $\detof{G}=(V, N, T\disunion \detof{T}, \detof{R}, S)$.
The advantage of resolving the non-determinism explicitly is that we can give a semantics to non-deterministic choices that depends on the non-terminal instead of having to treat non-determinism uniformly.

\subparagraph*{Semantics.}

Let $G=(V, N, T, R, S)$ be a deterministic scheme.
A \emph{model} of $G$ is a pair $\model=(\domain, \inter)$, where $\domain$ is a family of domains $(\domainof{\kappa})_{\kappa\in\Kappa}$ that satisfies the following:
$\domainof{\ground}$ is a cppo and
$\domainof{\kappa_1\rightarrow \kappa_2}=\contfun{\domainof{\kappa_1}}{\domainof{\kappa_2}}$.
Here, $\contfun{A}{B}$ is the set of all \mbox{$\meet$-continuous} functions from domain $A$ to $B$.
We comment on this cppo in a moment.
The interpretation $\inter: T\rightarrow \domain$ assigns to each terminal $s:\kappa$ an element $\interof{s}\in\domainof{\kappa}$.


The ordering on functions is defined component-wise,
$f \leq_{\kappa_1 \to \kappa_2} g$ if
$(f\ x) \leq_{\kappa_2} (g\ x)$ for all $x\in\domainof{\kappa_1}$.
For each $\kappa$, we denote the top element of $\domainof{\kappa}$ by $\top_\kappa$.
For the ground kind, $\top_\ground$ exists since $\domainof{\kappa}$ is a cppo, and $\top_{\kappa_1 \to \kappa_2}$ is the function that maps every argument to $\top_{\kappa_2}$.
The meet of a descending chain of functions $(f_i)_{i \in \N}$ is the function defined by
$(\bigsqcap_{\kappa_1 \to \kappa_2} (f_i)_{i \in \N})\ x =
 \bigsqcap_{\kappa_2} (f_i\ x)_{i \in \N}$.
Note that the sequence on the right-hand side is a descending chain.

The \emph{semantics of terms} defined by a model is a function
\begin{align*}
    \semunder{\model}{-}:\calT\rightarrow (N\disunion V\prightarrow \domain)\rightarrow\domain \ .
\end{align*}
that assigns to each term built over the non-terminals and terminals again a function.
This function expects a valuation $\nu:N\disunion V\prightarrow \domain$ and returns an element from the domain.
A valuation is a partial function that is defined on all non-terminals and the free variables.
We lift $\meet$ to descending chains of valuations with
$(\bigsqcap_{i \in \N} \nu_i)(y) = \bigsqcap_{i \in \N} (\nu_i(y))$
for
$y \in N \disunion V$.
We obtain that the set of such valuations is a cppo where the greatest  elements are those valuations which assign the greatest elements of the appropriate domain to all arguments.

Since the right-hand sides of the rules in the scheme are variable-closed, we do not need a variable valuation for them.
We need the variable valuation, however, whenever we proceed by induction on the structure of terms.
The semantics is defined by such an induction:
\heveaornot{
    \[
        \begin{array}{rcl}
            \semunder{\model}{s}\ \nu  &=& \interof{s} \\
            \semunder{\model}{F}\ \nu  &=& \nu(F) \\
            \semunder{\model}{t_1\ t_2}\ \nu &=& (\semunder{\model}{t_1}\ \nu)\ (\semunder{\model}{t_2}\ \nu) \\
            \semunder{\model}{x}\ \nu &=& \nu(x) \\
            \semunder{\model}{\lambda x:\kappa.t_1}\ \nu &=& d\in \domainof{\kappa} \mapsto \semunder{\model}{t_1}\ \nu[x\mapsto d]
            \ .
        \end{array}
    \]
}{
    \begin{align*}
        \semunder{\model}{s}\ \nu
        & = \interof{s}
        &
        \semunder{\model}{F}\ \nu
        & = \nu(F)
        &
        \semunder{\model}{t_1\ t_2}\ \nu
        & = (\semunder{\model}{t_1}\ \nu)\ (\semunder{\model}{t_2}\ \nu)
        \\
        \semunder{\model}{x}\ \nu
        & = \nu(x)
        &&&
        \semunder{\model}{\lambda x:\kappa.t_1}\ \nu
        & = d\in \domainof{\kappa} \mapsto \semunder{\model}{t_1}\ \nu[x\mapsto d]
        \ .
    \end{align*}
}
We show that
$\semunder{\model}{t}$
is $\meet$-continuous for all terms $t$.
This follows from continuity of the functions in the domain, but requires some care when handling application.

\begin{proposition}
\label{prop:sem-cont}
    For all $t$, $\semunder{\model}{t}$ is $\meet$-continuous (in
    $\nu$) over the respective lattice.
\end{proposition}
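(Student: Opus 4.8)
The plan is to proceed by structural induction on $t$, establishing in each case that for every descending chain of valuations $(\nu_i)_{i\in\N}$ one has $\semunder{\model}{t}(\bigsqcap_{i\in\N}\nu_i) = \bigsqcap_{i\in\N}\semunder{\model}{t}(\nu_i)$, where the equality and the meets are taken in $\domainof{\kappa}$ for $t\colon\kappa$. Since $\meet$-continuity implies monotonicity, the induction hypothesis also supplies monotonicity of the subterm semantics, which I use freely. The base cases are immediate: for a terminal $s$ the value $\semunder{\model}{s}\ \nu = \interof{s}$ does not depend on $\nu$, so continuity reduces to the fact that the meet of a constant chain is that constant; for a non-terminal $F$ or a variable $x$, continuity is exactly the defining equation of the meet on valuations, namely $(\bigsqcap_i\nu_i)(y) = \bigsqcap_i\nu_i(y)$ for $y\in N\disunion V$.

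For the abstraction $\lambda x\colon\kappa.t_1$ I would first check that $d\mapsto\semunder{\model}{t_1}(\nu[x\mapsto d])$ is a legitimate element of $\domainof{\kappa_1\to\kappa_2}$, i.e.\ is itself $\meet$-continuous in $d$: for a descending chain $(d_i)$ the valuations $\nu[x\mapsto d_i]$ form a descending chain with meet $\nu[x\mapsto\bigsqcap_i d_i]$, so this follows from the induction hypothesis for $t_1$. Continuity in $\nu$ is then proved pointwise: since the meet of functions is computed argument-wise, it suffices to show for each $d$ that $\semunder{\model}{t_1}((\bigsqcap_i\nu_i)[x\mapsto d]) = \bigsqcap_i\semunder{\model}{t_1}(\nu_i[x\mapsto d])$, which follows from the identity $(\bigsqcap_i\nu_i)[x\mapsto d] = \bigsqcap_i(\nu_i[x\mapsto d])$ (both sides agree on $x$, giving $d$, and on every other symbol, giving the componentwise meet) together with the induction hypothesis for $t_1$.

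The application $t_1\ t_2$ is the step I expect to be the main obstacle. Write $f_i := \semunder{\model}{t_1}(\nu_i)$ and $a_i := \semunder{\model}{t_2}(\nu_i)$; by monotonicity of the subterm semantics these are descending chains, and by the induction hypothesis $\bigsqcap_i f_i = \semunder{\model}{t_1}(\bigsqcap_i\nu_i)$ and $\bigsqcap_i a_i = \semunder{\model}{t_2}(\bigsqcap_i\nu_i)$, so that $\semunder{\model}{t_1\ t_2}(\bigsqcap_i\nu_i) = (\bigsqcap_i f_i)(\bigsqcap_j a_j)$ while the target expression is $\bigsqcap_i(f_i\ a_i)$. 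Unfolding the pointwise meet of functions and using that each $f_i$ is $\meet$-continuous in its argument rewrites the left-hand side as the double meet $\bigsqcap_i\bigsqcap_j(f_i\ a_j)$, so it remains to prove the diagonalization identity $\bigsqcap_{i,j}(f_i\ a_j) = \bigsqcap_k(f_k\ a_k)$. The inequality $\leq$ holds because the diagonal is a subfamily, hence its meet is larger. For $\geq$, fix $i,j$ and set $m=\max(i,j)$; since $(f_i)$ and $(a_j)$ descend and each $f_i$ is monotonic, $f_m\ a_m \leq f_i\ a_m \leq f_i\ a_j$, whence $\bigsqcap_k(f_k\ a_k)\leq f_m\ a_m\leq f_i\ a_j$, and taking the meet over all $i,j$ gives the claim. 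The only care needed throughout is to verify at each application of a meet that the family in question really is a descending chain, which is guaranteed by the monotonicity from the induction hypothesis and by the $\meet$-continuity built into $\domainof{\kappa_1\to\kappa_2}=\contfun{\domainof{\kappa_1}}{\domainof{\kappa_2}}$.
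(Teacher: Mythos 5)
Your proof is correct and follows essentially the same route as the paper's: a structural induction with the same four cases, the same pointwise treatment of $\lambda$-abstraction via the identity $(\bigsqcap_{i}\nu_i)[x\mapsto d]=\bigsqcap_{i}(\nu_i[x\mapsto d])$, and the same key step in the application case of expanding $(\bigsqcap_i f_i)(\bigsqcap_j a_j)$ into the double meet $\bigsqcap_i\bigsqcap_j(f_i\ a_j)$ and collapsing it to the diagonal $\bigsqcap_k(f_k\ a_k)$ using monotonicity. As a minor remark, your choice $m=\max(i,j)$ in the diagonalization is the correct one (the paper writes $\min\set{i,j}$, apparently a slip, since in a descending chain the larger index yields the smaller element), and your explicit check that $d\mapsto\semunder{\model}{t_1}\ \nu[x\mapsto d]$ is itself $\meet$-continuous is a welcome well-definedness detail the paper leaves implicit.
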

Given $\model$, the rules $F_1=t_1,\ldots, F_k=t_k$ of the (deterministic) scheme give a function
\[
    \rhsfun{\model} : (N \rightarrow \domain) \rightarrow (N \rightarrow \domain)\ ,
    \quad\text{ where}\quad
    \rhsfunof{\model}{\nu}(F_j) = \semunder{\model}{t_j}\ \nu \ .
\]
Since the right-hand sides are variable-closed, the $\semunder{\model}{t_j}$ are functions in the non-terminals.
Provided $\semunder{\model}{t_1}$ to $\semunder{\model}{t_k}$ are $\meet$-continuous (in the valuation of the non-terminals), the function $\rhsfun{\model}$ will be $\meet$-continuous.
This allows us to apply Kleene iteration as follows.
The initial value is the greatest element $\soliter{\model}{0}$ where
$\soliter{\model}{0}(F_j) = \top_j$
with $\top_j$ the top element of $\domainof{\kappa_j}$.
The $\nth{(i+1)}$ approximant is computed by evaluating the right-hand side at the  $\nth{i}$ solution,
$\soliter{\model}{i+1} = \rhsfunof{\model}{\sol{\model}^{i}}$.
The greatest fixed point is the tuple $\sol{\model}$ defined below.
It can be understood as the greatest solution to the equation $\nu = \rhsfunof{\model}{\nu}$.
We call this greatest solution $\sol{\model}$ the \emph{semantics of the scheme} in the model.
\[
    \sol{\model} = \bigsqcap_{i \in \N} {\soliter{\model}{i}} = \bigsqcap_{i \in \N} \rhsfun{\model}^i (\soliter{\model}{0})
\]
%


\section{Higher-Order Inclusion Games}
\label{Section:Games}

Our goal is to solve higher-order games, whose arena is defined by a scheme.
We assume that the derivation process is controlled by two players.
To this end, we divide the non-terminals of a word-generating scheme into those owned by the existential player $\playere$ and those owned by the universal player $\playera$.
Whenever a non-terminal is to be replaced during the derivation, it is the owner who chooses which rule to apply.
The winning condition is given by an automaton~$A$, Player $\playere$ attempts to produce a word that is in $\lang{A}$, while Player $\playera$ attempts to produce a word outside of $\lang{A}$.

\begin{definition}
    A \emph{higher-order game} is a triple
    $\horsgame = (G, A, O)$
    where
    $G$ is a word-generating scheme,
    $A$ is an NFA,
    $O : N \rightarrow \{\playere, \playera\}$ is a partitioning of the non-terminals of $G$.
\end{definition}
A play of the game is a sequence of OI-rewriting steps.
Since terms generate words, it is unambiguous which term forms the next redex to be rewritten.
In particular, all terms are of the form
$a_1 ( a_2 ( \cdots (a_k (t) ) ) )$,
where $t$ is either $\wordend$ or a redex
$F\ t_1\ \cdots\ t_m$.
If
$O(F) = \playere$
then Player $\playere$ chooses a rule
$F = \lambda x_1 \ldots \lambda x_m . e$
to apply, else Player $\playera$ chooses the rule.
This moves the play to
$a_1\ (a_2\ (\cdots\ (a_k\ e[x_1 \mapsto t_1, \ldots, x_m \mapsto t_m])))$.

Each play begins at the initial non-terminal $S$, and continues either ad infinitum or until a term
$a_1\ (a_2\ (\cdots\ (a_k\ \wordend)))$, understood as the word $w=a_1\ldots a_k$, is produced.
Infinite plays do not produce a word and are won by Player $\playere$.
Finite maximal plays produce such a word $w$.
Player $\playere$ wins whenever $w \in \lang{A}$, Player $\playera$ wins if $w \in \overline{\lang{A}}$.
Since the winning condition is Borel, either Player $\playere$ or Player $\playera$ has a winning strategy~\cite{M75}.

{
\sffamily
\begin{problem}
    \problemtitle{The Winner of a Higher-Order Game}
    \problemshort{($\gamehorsreg$)}
    \probleminput{A higher-order game $\horsgame$.}
    \problemquestion{
        Does Player $\playere$ win $\horsgame$?
        If so, effectively represent Player $\playere$'s strategy.}
\end{problem}
}

\noindent
Our contribution is a fixed-point algorithm to decide $\gamehorsreg$.
We derive it in three steps.
First, we develop a concrete model for higher-order games whose semantics captures the above winning condition.
Second, we introduce a framework that for two models and a mapping between them guarantees that the mapping of the greatest fixed point with respect to the one model is the greatest fixed point with respect to the other model. 
Finally, we introduce an abstract model that uses a finite ground domain.
The solution of $\gamehorsreg$ can be read off from the semantics in the abstract model, which in turn can be computed via Kleene iteration.
Moreover, this semantics can be used to define Player~$\playere$'s winning strategy.
We instantiate the framework for the concrete and abstract model to prove the soundness of the algorithm.

\subsection*{Concrete Semantics}

Consider a $\gamehorsreg$ instance $\horsgame = (G, A, O)$.
Let $\detof{G}$ be the determinized version of $G$.
Our goal is to define a model $\modelconc = (\domainconc, \interconc)$ such that the semantics of $\detof{G}$ in this model allows us to decide $\gamehorsreg$.
Recall that we only have to define the ground domain.
For composed kinds, we use the functional lifting discussed in Section~\ref{Section:HORS}.

Our idea is to associate to kind $\ground$ the set of positive Boolean formulas where the atomic propositions are words in $T^\ast$.
To be able to reuse the definition, we define formula domains in more generality as follows.

\subparagraph{Domains of Boolean Formulas}

Given a (potentially infinite) set $P$ of atomic propositions, the \emph{positive Boolean formulas} $\boolof{P}$ over $P$ are defined to contain $\ftrue$, every $p$ from $P$, and compositions of formulas via conjunction and disjunction.
We work up to logical equivalence, which means we treat $\fmla_1$ and $\fmla_2$ as equal as long as they are logically equivalent.

Unfortunately, if the set $P$ is infinite, $\boolof{P}$ is not a cppo, because the meet of a descending chain of formulas might not be a finite formula.
The idea of our domain is to have conjunctions of infinitely many formulas.
As is common in logic, we represent them as infinite sets.
Therefore, we consider the set of all sets of (finite) positive Boolean formulas
$\pwrset{\boolof{T^*}}\setminus\set{\emptyset}$
factorized modulo logical equivalence, denoted
$(\pwrset{\boolof{T^*}}\setminus\set{\emptyset}) \slash_\Leftrightarrow$.
To be precise, the sets may be finite or infinite, but they must be non-empty.

To define the factorization, let an assignment to the atomic propositions be given by a subset of $P' \subseteq P$.
The atomic proposition $p$ is true if $p \in P'$.
An assignment satisfies a Boolean formula, if the formula evaluates to true in that assignment.
It satisfies a set of Boolean formulas, if it satisfies all elements.
Given two sets of formulas $\setfmla_1$ and $\setfmla_2$, we write $\setfmla_1\Rightarrow \setfmla_2$, if every assignment that satisfies $\setfmla_1$ also satisfies $\setfmla_2$.
Two sets of formulas are equivalent, denoted $\setfmla_1\Leftrightarrow \setfmla_2$, if $\setfmla_1\Rightarrow \setfmla_2$ and $\setfmla_2\Rightarrow \setfmla_1$ holds.

The ordering on these factorized sets is implication (which by transitivity is independent of the representative).
The top element is the set $\set{\ftrue}$, which is implied by every set.
The conjunction of two sets is union.
Note that it forms the meet in the partial order, and moreover note that meets over arbitrary sets exist, in particular the domain is a cppo.
We will also need an operation of disjunction, which is defined by
\mbox{$
    \setfmla_1\vee \setfmla_2 = \Set{\fmla_1\vee \fmla_2}{\fmla_1\in \setfmla_1, \fmla_2\in \setfmla_2}.
$}
We will also use disjunctions of higher (but finite) arity where convenient.
Note that the disjunction on finite formulas is guaranteed to result in a finite formula.
Therefore, the above is well-defined.

In our case, the assignment $P' \subseteq T^*$ of interest is the language of the automaton $A$.
Player $\playere$ will win the game iff the concrete semantics assigns a set of formulas to $S$ that is satisfied by $\lang{A}$.

\subparagraph{The Concrete Domains and Interpretation of Terminals.}

From a ground domain, higher-order domains are defined as continuous functions as in Section~\ref{Section:HORS}.
Thus we only need
\[
    \domainconcof{\ground} = \left( \pwrset{\boolof{T^*}}\setminus\set{\emptyset} \right) \slash_\Leftrightarrow \ .
\]
The endmarker $\wordend$ yields the set of formulas $\set{\varepsilon}$, \ie $\interconcof{\wordend}=\set{\varepsilon}$.
A terminal $a:\ground\to \ground$ prepends $a$ to a given word $w$.
That is $\interconcof{a} = \prependf{a}$, where $\prependf{a}$ distributes over conjunction and disjunction:
\[
    \prepend{a}{\fmla}
    =
    \left\{
    \begin{array}{ll}
        aw & \fmla = w\ ,
        \\
        \prepend{a}{\fmla_1}
        \genop
        \prepend{a}{\fmla_2}
        &
        \fmla = \fmla_1\genop\fmla_2 \text{ and } \genop \in \set{\land,\lor}\ ,
        \\
        \fmla & \fmla = \ftrue \ .
        \\
    \end{array}
    \right.
\]
We apply $\prependf{a}$ to sets of formulas by applying it to every element.
Finally, $\interconcof{\opof{F}}$
where $\opof{F}$ has arity $\ell$ is an $\ell$-ary conjunction (\resp disjunction) if Player~$\playera$ (\resp $\playere$) owns~$F$.

For $\modelconc = (\domainconc, \interconc)$ to be a model, we need our interpretation of terminals to be $\meet$-continuous.
This follows largely by the distributivity of our definitions.

\begin{lemma}
\label{Lemma:ContinuityRHSConc}
    For all non-ground terminals $s$, $\interconcof{s}$ is $\meet$-continuous.
\end{lemma}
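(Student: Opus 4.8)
The plan is to verify $\meet$-continuity directly from the explicit definitions of $\interconcof{s}$ for the two families of non-ground terminals: the prepend operations $\prependf{a}$ for $a : \ground \to \ground$, and the $\ell$-ary conjunctions/disjunctions $\interconcof{\opof{F}}$. Since $\meet$-continuity for a unary function $f : \domainconcof{\ground} \to \domainconcof{\ground}$ means $f(\bigmeet_{i\in\N}\setfmla_i) = \bigmeet_{i\in\N} f(\setfmla_i)$ for every descending chain $(\setfmla_i)_{i\in\N}$, and $\bigmeet$ in this domain is union (i.e.\ $\bigmeet_{i\in\N}\setfmla_i \Leftrightarrow \bigcup_{i\in\N}\setfmla_i$), the proof reduces to showing that each operation commutes with arbitrary unions of sets of formulas, up to logical equivalence. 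I would state this reduction first, so that the remaining work is purely about distributivity over $\cup$.

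First I would handle $\prependf{a}$. Recall it is applied to a set of formulas elementwise, so $\prependf{a}(\bigcup_{i} \setfmla_i) = \bigcup_i \prependf{a}(\setfmla_i)$ holds on the nose at the level of sets, before factoring by $\Leftrightarrow$. The only point needing an argument is that this set-level equality is compatible with the factorization: if $\setfmla \Leftrightarrow \setfmla'$ then $\prependf{a}(\setfmla) \Leftrightarrow \prependf{a}(\setfmla')$, so that $\prependf{a}$ is well-defined and monotone on the quotient. This follows because prepending $a$ is, semantically, a relabeling of atomic propositions $w \mapsto aw$ that is injective on $T^*$; any assignment $P' \subseteq T^*$ satisfying $\prependf{a}(\setfmla)$ corresponds to the assignment $\Set{w}{aw \in P'}$ satisfying $\setfmla$, and this correspondence preserves $\Rightarrow$ in both directions. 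Hence $\prependf{a}$ descends to a monotone map commuting with unions, which is exactly $\meet$-continuity.

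Next I would treat $\interconcof{\opof{F}}$ for $\opof{F}$ of arity $\ell$. When Player $\playera$ owns $F$ this is $\ell$-ary conjunction, realized as union of the argument sets; commutation with $\bigmeet = \bigcup$ in each coordinate is immediate from associativity and commutativity of union, again checking compatibility with $\Leftrightarrow$, which is routine. The interesting case is the disjunction (Player $\playere$), defined by $\setfmla_1 \vee \setfmla_2 = \Set{\fmla_1 \vee \fmla_2}{\fmla_1 \in \setfmla_1, \fmla_2 \in \setfmla_2}$. Here I expect the main obstacle: $\meet$-continuity in one argument requires $(\bigcup_i \setfmla_i) \vee \setfmla \Leftrightarrow \bigcup_i (\setfmla_i \vee \setfmla)$, and while the forward implication is clear, the reverse implication relies on the fact that a conjunction distributes over an infinite disjunction only when that disjunction is represented as a set-conjunction of the pairwise disjunctions. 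Concretely, I would show that an assignment satisfying $\bigcup_i(\setfmla_i \vee \setfmla)$ satisfies $(\bigcup_i \setfmla_i)\vee\setfmla$ by a case split: either the assignment satisfies $\setfmla$, in which case it trivially satisfies the product form, or it falsifies $\setfmla$, in which case satisfying every $\setfmla_i \vee \setfmla$ forces it to satisfy every $\setfmla_i$, hence $\bigcup_i\setfmla_i$, hence the disjunction. This semantic argument, rather than a syntactic manipulation, is what makes the infinite distributivity work, and it is the step I would write out most carefully; the remaining higher-arity and multi-coordinate cases then follow by iterating the binary argument.
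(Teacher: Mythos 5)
Your proof is correct and follows essentially the same route as the paper's: a direct case analysis on the two families of non-ground terminals, reducing $\meet$-continuity to distributivity of each interpretation over the union of sets of formulas that realizes the meet. One remark: the step you single out as the main obstacle, namely $(\bigcup_i \setfmla_i) \vee \setfmla \Leftrightarrow \bigcup_i(\setfmla_i \vee \setfmla)$, is in fact an exact equality of sets of representatives (an element of either side is $\fmla' \vee \fmla''$ with $\fmla' \in \setfmla_j$ for some $j$ and $\fmla'' \in \setfmla$, since the meet is union and $\vee$ is defined by pairwise disjunction), so your semantic case-split, while valid, is not needed; the paper correspondingly treats this distributivity as standard and spends its effort instead on the induction over the arity of $\opof{F}$ and on lifting the argument to function domains.
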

%
%
%
\begin{example}
Consider the higher-order game defined by the scheme $S=H\ a\ \$\ |\ b\ \$$ and $H=\lambda f.\lambda x.f\ (f\ x)\ |\ \lambda f.\lambda x.H\ (H\ f)\ x$.
Assume $S$ is owned by Player~$\playere$ and $H$ is owned by Player~$\playera$.
Let the automaton accept the language $\set{b}$.
Player~$\playere$ can choose to rewrite $S$ to $b\ \$$ and therefore has a strategy to produce a word in the language. 
To derive this information from the concrete semantics, we compute $\sol{\modelconc}(H)$. 
It is the function mapping $f\in \contfun{\domainconcof{\ground}}{\domainconcof{\ground}}$ and $d\in \domainconcof{\ground}$ to $\bigcup_{k>0}f^{2k}(d)$. 
Note that the union is the conjunction of sets of formulas, which is the interpretation of $\opof{H}$ for the universal player. 
Moreover, note that due to non-determinism we obtain all even numbers of applications of~$f$, not only the powers of $2$. 
With this, the semantics of the initial symbol is
\begin{align*}
\sol{\modelconc}(S) = \bigcup_{k>0}\prependf{a}^{2k}(\set{\varepsilon}) \vee \prepend{b}{\set{\varepsilon}}=\Set{a^{2k}\vee b}{k>0}.
\end{align*}
The assignment $\set{b}$ given by the language of the NFA satisfies $\Set{a^{2k}\vee b}{k>0}$. Indeed, since $b$ evaluates to true, every formula in the set evaluates to true.
\end{example}

\subparagraph{Correctness of Semantics and Winning Strategies.}

We need to show that the concrete semantics matches the original semantics of the game.

\begin{theorem} \label{Theorem:ConcGood}
    $\solconcof{S}$ is satisfied by $\lang{A}$ iff there is a winning strategy for Player $\playere$.
\end{theorem}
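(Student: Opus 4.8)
The guiding intuition I would exploit is that $\solconcof{S}$ is a syntactic transcript of the game tree unfolding from $S$: by construction of $\interconc$, a choice of Player $\playere$ at an $\playere$-owned non-terminal becomes a disjunction, a choice of Player $\playera$ becomes a conjunction (via the interpretation of the $\opof{F}$), and a maximal terminating play producing the word $w$ contributes the atom $w$. Under the assignment $\lang{A}$ (where an atom $w$ is true exactly when $w \in \lang{A}$), evaluating such a formula is precisely the two-player formula game in which $\playere$ resolves disjunctions and $\playera$ resolves conjunctions, aiming for a true atom. Thus the statement reduces to the claim that this formula game has the same winner as the original inclusion game. Since $\solconc = \bigsqcap_{i \in \N}\solconciter{i}$ and the meet on the ground domain is union of sets of formulas, we have $\solconcof{S} \Leftrightarrow \bigcup_{i \in \N}\solconciter{i}(S)$; hence $\lang{A}$ satisfies $\solconcof{S}$ iff it satisfies $\solconciter{i}(S)$ for every $i$, and I would work with these approximants.

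For the direction ``satisfied $\Rightarrow$ winning strategy'' I would extract a strategy for $\playere$ directly from the fixed point, never mentioning approximants. The strategy maintains the invariant that $\lang{A}$ satisfies $\semunder{\modelconc}{t}\,\solconc$ for the current term $t$; this holds initially for $t = S$ by assumption. If the next redex is $F\ t_1 \cdots t_m$ under a terminal prefix, then, using the fixed-point equation $\solconc = \rhsconcof{\solconc}$, the substitution lemma, and the fact that $\prependf{a}$ distributes over $\wedge$ and $\vee$, the semantics of the current term equals a prefix-prepend applied to the $\ell$-ary conjunction (if $F$ is $\playera$-owned) or disjunction (if $\playere$-owned) of the semantics of the instantiated rule bodies. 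Using that $\lang{A}$ satisfies a set-disjunction $\setfmla_1 \vee \setfmla_2$ iff it satisfies $\setfmla_1$ or $\setfmla_2$, and a union iff it satisfies both, the invariant is preserved: at an $\playere$-node $\playere$ picks a satisfied disjunct (a rule), at an $\playera$-node every choice keeps the set satisfied. Infinite plays are won by $\playere$ by definition; a finite play ends in a term whose semantics is $\set{w}$, and the invariant forces $w \in \lang{A}$. Note this direction uses only that $\solconc$ is a fixed point.

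For ``winning strategy $\Rightarrow$ satisfied'' I would show that every approximant $\solconciter{i}(S)$ is satisfied. Here I use an adequacy lemma, proved by induction on terms, identifying $\semunder{\modelconc}{t}\,\solconciter{i}$ with the Boolean formula obtained by unfolding the OI game tree from $t$ to Kleene-depth $i$, where every branch not yet resolved into a word contributes $\ftrue$ (this is where greatest-fixed-point iteration from $\top = \set{\ftrue}$ is essential: pending, in particular infinite, plays are recorded as true). Given a winning strategy $\tau$ for $\playere$, I restrict it to this finite unfolding: along $\tau$'s disjunct choices every maximal branch either is still pending, hence $\ftrue$ and true, or has produced a word $w$, which because $\tau$ wins satisfies $w \in \lang{A}$ and is true; every conjunct (i.e.\ $\playera$) branch is covered because $\tau$ wins against all of $\playera$'s choices. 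Hence the formula is true under $\lang{A}$, so $\solconciter{i}(S)$ is satisfied for every $i$, and therefore so is $\solconcof{S}$.

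The main obstacle is the adequacy lemma underlying the second direction: making precise the correspondence between the functional greatest-fixed-point semantics and the OI game tree in the presence of higher-order arguments. Threading the formula structure through application and $\lambda$-abstraction requires the substitution lemma at every kind, and one must reconcile the single integer parameter $i$ of Kleene iteration (which unfolds all non-terminals simultaneously) with the length of OI-rewriting sequences; the natural fix is to show each finite formula of $\solconcof{S}$ already appears at some finite $i$ and encodes a finite, finitely-branching fragment of the game tree, so that truncating $\tau$ suffices. The finite branching of the arena (finitely many rules per non-terminal) together with Borel determinacy~\cite{M75} provides a safety net: exactly one player has a winning strategy, and the two extraction arguments show that player's victory is equivalent to (non-)satisfaction of $\solconcof{S}$, closing the equivalence.
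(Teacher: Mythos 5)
Your first direction coincides with the paper's own argument (its Lemma on Player $\playere$): the same invariant that $\lang{A}$ satisfies $\concsem{t}\ \solconc$ for the current term, maintained using the fixed-point property, the substitution lemma, distributivity of $\prependf{a}$, and the observation that under a fixed assignment a set-level disjunction is satisfied iff one of its disjuncts is. The second direction is where you genuinely depart from the paper. The paper never proves ``$\playere$ wins $\Rightarrow$ satisfied'' directly; instead it proves that if $\solconcof{S}$ is \emph{not} satisfied then Player $\playera$ has a winning strategy, which yields the equivalence since both players cannot win. Technically it introduces a kind-indexed correctness relation $\fmla \soundfor t$ (``whenever $\prepend{w}{\fmla}$ is unsatisfied, $\playera$ wins from $w(t)$''), lifted to function kinds by quantifying over related arguments, and establishes $\concsem{t}\ \solconciter{i} \soundfor t$ by a nested induction (outer on the Kleene index, inner on term structure), followed by a limit argument by induction on kinds. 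Your route --- prove that a winning strategy for $\playere$ forces satisfaction of every approximant $\solconciter{i}(S)$ via an adequacy lemma equating $\semunder{\modelconc}{S}\ \solconciter{i}$ with a truncated game tree, then restrict the strategy to that truncation --- is sound in outline, and your limit step is actually simpler than the paper's: at ground kind the meet is union of formula sets, so satisfaction of $\solconcof{S}$ is equivalent to satisfaction of all approximants, with no induction on kinds. What you lose is $\playera$'s explicit winning strategy (the paper obtains determinacy of these games constructively, as a byproduct); note also that your closing appeal to Borel determinacy does no work once both of your directions are proven directly.

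The soft spot is exactly the adequacy lemma you flag, and the difficulty is more specific than ``reconciling the Kleene index with OI-rewriting length.'' In $\semunder{\modelconc}{t}\ \solconciter{i}$ the unfolding budget is per \emph{occurrence}: unfolding the head $F$ interprets the non-terminals of its body at level $i-1$, while non-terminals inside the substituted arguments keep level $i$. A valuation, however, assigns one value per non-terminal \emph{symbol}, so this occurrence-wise bookkeeping cannot be expressed by the term semantics under a single valuation, and a structural induction on ground terms plus the substitution lemma will not establish the correspondence you need. To discharge the lemma you need either (i) a kind-indexed relation between denotational values and game behaviors of terms, quantified over related arguments at function kinds --- that is, the very logical-relation machinery the paper packages into $\soundfor$ --- or (ii) a detour through syntax: show that $\solconciter{i}$ equals the semantics of the $i$-fold symbolic unfolding of the determinized scheme (induction on $i$), use $\beta$-invariance of the semantics and normalization of simply-typed terms to reduce that unfolding to a finite tree over the terminals and the $\opof{F}$, and read the truncated game tree (pending nodes becoming $\ftrue$) off this normal form. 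Either way, the key lemma carries essentially the same technical weight as the paper's relation; your proposal repackages the hard part rather than avoiding it, but it is a viable and in places tidier repackaging.
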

When $\solconcof{S}$ is satisfied by $\lang{A}$ the concrete semantics gives a winning strategy for~$\playere$: 
From a term $t$ such that $\semunder{\modelconc}{t}\ \solconc$ is satisfied by $\lang{A}$, Player $\playere$, when able to choose, picks a rewrite rule that transforms $t$ to $t'$, where $\semunder{\modelconc}{t'}\ \solconc$ remains satisfied.
The proof of Theorem~\ref{Theorem:ConcGood} shows this is always possible, and, moreover, Player $\playera$ is unable to reach a term for which satisfaction does not hold.
This does not yet give an effective strategy since we cannot compute $\semunder{\modelconc}{t}\ \solconc$.
However, the abstract semantics will be computable, and can be used in place of the concrete semantics by Player $\playere$ to implement the winning strategy.

The proof that $\solconcof{S}$ being unsatisfied implies a winning strategy for Player~$\playera$ is more involved and requires the definition of a correctness relation between semantics and terms that is lifted to the level of functions, and shown to hold inductively.


\section{Framework for Exact Fixed-Point Transfer}
\label{Section:Framework}

The concrete model $\modelconc$ does not lead to an algorithm for solving $\gamehorsreg$ since its domains are infinite.
Here, we consider an abstract model $\modelabs$ with finite domains.
The soundness of the resulting Kleene iteration relies on the two semantics being related by a precise abstraction~$\abs$.
Since both semantics are defined by fixed points, this requires us to prove $\abs (\solconc) = \solabs$.
In this section, we provide a general framework to this end.

Consider the deterministic scheme $G$ together with two models (left and right) $\modell~=~(\domainl, \interl)$ and
$\modelr = (\domainr, \interr)$.
Our goal is to relate the semantics in these models in the sense that $\sol{\modelr}=\absof{\sol{\modell}}$.
Such exact fixed-point transfer results are well-known in abstract interpretation.
To generalize them to higher-order we give easy to instantiate conditions on $\abs$, $\modell$, and $\modelr$ that yield the above equality.
Interestingly, exact fixed-point transfer results seem to be rare for higher-order (e.g.~\cite{R13}).
Our development is inspired by Abramsky's lifting of abstraction functions to logical relations~\cite{Abramsky90}, which generalizes~\cite{BHA86,AH87}.
These works focus on approximation and the compatibility we need for exactness is missing.
Our framework is easier to apply than~\cite{CousotCousot94,BackhouseBackhouse04}, which are again concerned with approximation and do not offer (but may lead to) exact fixed-point transfer results.

For the terminology, an \emph{abstraction} is a function $\abs:\domainlof{\ground}\rightarrow \domainrof{\ground}$.
To lift the abstraction to function domains, we define the notion of \emph{being compatible with $\abs$}.
Compatibility intuitively states that the function on the concrete domain is not more precise than what the abstraction function distinguishes.
This allows us to define the abstraction of a function by applying the function and abstracting the result, $\absof{f}\ \absof{\vall}=\absof{f\ \vall}$.
Compatibility ensures the independence of the choice of $\vall$.

By definition, all ground elements $\vall\in\domainlof{\ground}$ are compatible with $\abs$.
For function domains, compatibility and the abstraction are defined as follows.

\begin{definition}
    Assume $\abs$ and the notion of compatibility are defined on $\domainlof{\kappa_1}$ and $\domainlof{\kappa_2}$.
    Let $\toplof{\kappa}$ (resp.\ $\toprof{\kappa}$) be the greatest element of $\domainlof{\kappa}$ (resp. $\domainrof{\kappa}$) for each $\kappa$.

    \begin{compactenum}
        \item
            Function $f\in \domainlof{\kappa_1\rightarrow \kappa_2}$ is compatible with $\abs$, if
            \begin{compactenum}
                \item
                    for all compatible
                    $\vall, \vall'\in \domainlof{\kappa_1}$
                    with
                    $\absof{\vall}=\absof{\vall'}$
                    we have
                    $\absof{f\ \vall}=\absof{f\ \vall'}$, and

                \item
                    for all compatible
                    $\vall \in \domainlof{\kappa_1}$
                    we have that
                    $f\ \vall$
                    is compatible.
            \end{compactenum}
        \item
            We define
            $\absof{f} \in \domainrof{\kappa_1\rightarrow \kappa_2}$ as follows.
            \begin{compactenum}
                \item
                    If $f$ is compatible,
                    we set $\absof{f}\ \valr= \absof{f\ \vall}$, provided there is a compatible $\vall\in \domainlof{\kappa_1}$ with $\valr=\absof{\vall}$, and $\absof{f}\ \valr=\toprof{\kappa_2}$ otherwise.
                \item
                    If $f$ is not compatible, $\absof{f}=\toprof{\kappa_1\rightarrow \kappa_2}$.
            \end{compactenum}
    \end{compactenum}
%
    We lift $\abs$ to valuations
    $\nu : N \disunion V \prightarrow \domainl$
    by
    $\absof{\nu}(F) = \absof{\nu(F)}$
    and similar for $x$.
    We also lift compatibility to valuations
    $\nu:N\disunion V\prightarrow \domainl$
    by requiring $\nu(F)$ to be compatible for all
    $F\in N$ and similar for $x\in V$.
\end{definition}
%
%
%
The conditions needed for the exact fixed-point transfer are the following.

\begin{definition}
    Function $\abs$ is \emph{precise} for $\modell$ and $\modelr$, if
    \begin{compactitem}
    \item[$\precisionsur$] $\absof{\domainlof{\ground}}=\domainrof{\ground}$,
    \item[$\precisioncont$] $\abs:\domainlof{\ground}\rightarrow \domainrof{\ground}$ is $\meet$-continuous,
    \item[$\precisionbot$] $\absof{\toplof{\ground}}=\toprof{\ground}$,
    \item[$\precisionrel$]
            $\absof{\interlof{s}}= \interrof{s}$ for all terminals $s\colon\ground$,
            and similarly $\absof{\interlof{s}\ \vall}= \interrof{s}\ \absof{\vall}$ for all terminals $s:\kappa_1 \rightarrow \kappa_2$ and all compatible $\vall\in\domainlof{\kappa_1}$,
    \item[$\precisioncomp$] $\interlof{s}\ \vall$ is compatible for all terminals $s:\kappa_1 \rightarrow \kappa_2$, and all compatible $\vall\in\domainlof{\kappa_1}$.
    \end{compactitem}
\end{definition}

\noindent
$\precisionsur$ is surjectivity of $\abs$.
$\precisioncont$ states that $\abs$ is well-behaved wrt.\ $\meet$.
$\precisionbot$ says that the greatest element is mapped as expected.
Note that $\precisionsur$-$\precisionbot$ are only posed for the ground domain.
One can prove that they generalize to function domains by the definition of function abstraction.
$\precisionrel$ is that the interpretations of terminals in $\modelconc$ and $\modelabs$ are suitably related.
Finally~$\precisioncomp$ is compatibility.
$\precisionrel$ and $\precisioncomp$ are generalized to terms in Lemma~\ref{Lemma:MainTransfer}.

To prove $\absof{\sol{\modell}}=\sol{\modelr}$, we need that $\rhsfunr$ is an exact abstract transformer of $\rhsfunl$.
The following lemma states this for all terms $t$, in particular those that occur in the equations.
The generalization to product domains is immediate.
Note that the result is limited to compatible valuations, but this will be sufficient for our purposes.
The proof proceeds by induction on the structure of terms, while simultaneously proving $\semunder{\modell}{t}$ compatible with $\abs$.
With this result, we obtain the required exact fixed-point transfer for precise abstractions.

\begin{lemma}\label{Lemma:MainTransfer}
    Assume $\precisionsur$, $\precisionrel$, and $\precisioncomp$ hold.
    For all terms $t$ and all compatible $\nu$, we have
    $\semunder{\modell}{t}\ \nu$ compatible and
    $\absof{\semunder{\modell}{t}\ \nu}=\semunder{\modelr}{t}\ \absof{\nu}$.
\end{lemma}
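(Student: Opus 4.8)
The plan is to prove both conclusions---compatibility of $\semunder{\modell}{t}\ \nu$ and the exactness equality $\absof{\semunder{\modell}{t}\ \nu}=\semunder{\modelr}{t}\ \absof{\nu}$---simultaneously by structural induction on the term $t$, for all compatible valuations $\nu$. I would follow the five syntactic cases in the definition of the term semantics. The base cases are the terminal $s$, the non-terminal $F$, and the variable $x$. For a ground terminal $s\colon\ground$, compatibility is automatic (all ground elements are compatible), and the equality is exactly $\precisionrel$, namely $\absof{\interlof{s}}=\interrof{s}$. For $F$ and $x$, the semantics is just the lookup $\nu(F)$ resp.\ $\nu(x)$; compatibility holds because $\nu$ is assumed compatible, and the equality $\absof{\nu(F)}=\absof{\nu}(F)$ is the definition of how $\abs$ is lifted to valuations.

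The two inductive cases are application and $\lambda$-abstraction. For application $t_1\ t_2$, I would invoke the induction hypothesis on $t_1$ and $t_2$: both $\semunder{\modell}{t_1}\ \nu$ and $\semunder{\modell}{t_2}\ \nu$ are compatible, and their abstractions equal $\semunder{\modelr}{t_1}\ \absof{\nu}$ and $\semunder{\modelr}{t_2}\ \absof{\nu}$ respectively. Writing $f = \semunder{\modell}{t_1}\ \nu$ and $\vall = \semunder{\modell}{t_2}\ \nu$, compatibility of $f\ \vall$ follows from clause (1b) of the compatibility definition (a compatible function applied to a compatible argument yields a compatible result). The equality is the crux: I want $\absof{f\ \vall}=(\absof{f})\ (\absof{\vall})$. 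This is precisely the defining equation $\absof{f}\ \absof{\vall}=\absof{f\ \vall}$ for the abstraction of a compatible function, which is well-defined exactly because clause (1a) guarantees independence of the representative $\vall$ with $\valr=\absof{\vall}$. Chaining with the induction hypotheses then gives $\absof{f\ \vall}=(\semunder{\modelr}{t_1}\ \absof{\nu})\ (\semunder{\modelr}{t_2}\ \absof{\nu})=\semunder{\modelr}{t_1\ t_2}\ \absof{\nu}$.

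The $\lambda$-abstraction case $\lambda x\colon\kappa.t_1$ is where I expect the main obstacle, since here I must verify both compatibility clauses (1a) and (1b) for the function $g = d\mapsto\semunder{\modell}{t_1}\ \nu[x\mapsto d]$ rather than merely use them. For clause (1b), given any compatible $d$, the extended valuation $\nu[x\mapsto d]$ is compatible, so the induction hypothesis yields that $g\ d = \semunder{\modell}{t_1}\ \nu[x\mapsto d]$ is compatible. For clause (1a), I take compatible $d,d'$ with $\absof{d}=\absof{d'}$; then $\absof{\nu[x\mapsto d]}=\absof{\nu[x\mapsto d']}$ because the two extended valuations agree on all other symbols and agree at $x$ after abstraction, and the induction hypothesis gives $\absof{g\ d}=\semunder{\modelr}{t_1}\ \absof{\nu[x\mapsto d]}=\semunder{\modelr}{t_1}\ \absof{\nu[x\mapsto d']}=\absof{g\ d'}$. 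Having established that $g$ is compatible, the equality $\absof{g}=\semunder{\modelr}{\lambda x.t_1}\ \absof{\nu}$ must be checked pointwise on arguments $\valr\in\domainrof{\kappa}$. By $\precisionsur$ every such $\valr$ has a compatible preimage $d$ with $\valr=\absof{d}$, so the ``otherwise'' branch of the abstraction never triggers, and $\absof{g}\ \valr=\absof{g\ d}=\semunder{\modelr}{t_1}\ \absof{\nu}[x\mapsto\valr]$, which is the value of the right-hand semantics by definition.

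The delicate points to handle carefully are the appeals to surjectivity $\precisionsur$ (needed so that the abstraction of a compatible function is determined on every abstract argument, never falling into the $\toprof{}$ default) and the well-definedness provided by clause (1a) (so that $\absof{f}\ \valr$ does not depend on which compatible preimage of $\valr$ we pick). Notably the hypotheses of the lemma list only $\precisionsur$, $\precisionrel$, and $\precisioncomp$; I would use $\precisioncomp$ to drive the terminal-application subcase of compatibility (a terminal $s\colon\kappa_1\to\kappa_2$ applied to a compatible argument is compatible) and the second clause of $\precisionrel$ for the corresponding equality $\absof{\interlof{s}\ \vall}=\interrof{s}\ \absof{\vall}$. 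The threading of the two conclusions through one induction is what makes the argument work: compatibility of subterms is exactly the precondition that makes the abstraction equalities meaningful, so neither conclusion can be proved in isolation.
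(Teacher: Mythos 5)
Your induction mirrors the paper's proof very closely: the same simultaneous induction on terms proving compatibility and the abstraction equality together, the same base cases, the same use of clause (1b) and of the defining equation $\absof{f}\ \absof{\vall}=\absof{f\ \vall}$ in the application case (where, correctly, no surjectivity is needed since $\vall$ itself is a compatible preimage of $\absof{\vall}$), and the same pointwise verification in the $\lambda$-case. The one genuine gap is your repeated appeal to $\precisionsur$ at arbitrary kinds. In the $\lambda$-case, and likewise in the equality part of the non-ground terminal case, you need that every $\valr\in\domainrof{\kappa}$, for $\kappa$ an \emph{arbitrary} kind, has a \emph{compatible} preimage $\vall\in\domainlof{\kappa}$ with $\absof{\vall}=\valr$. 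But $\precisionsur$ as stated only asserts $\absof{\domainlof{\ground}}=\domainrof{\ground}$, i.e.\ surjectivity at the ground kind; at function kinds the existence of compatible preimages is not an assumption but a claim that itself requires proof. The paper isolates this as Lemma~\ref{Lemma:LiftingSurjectivity} and proves it by a separate induction on kinds: given $\fr\in\domainrof{\kappa_1\rightarrow\kappa_2}$ one must \emph{construct} a preimage $\fl$, by choosing for all compatible $\vall\in\abs^{-1}(\valr)$ a fixed compatible element of $\abs^{-1}(\fr\ \valr)$ as the value $\fl\ \vall$, and then verify that this $\fl$ is compatible and abstracts to $\fr$. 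Without this lifting, your assertion that ``the `otherwise' branch of the abstraction never triggers'' is unjustified precisely at the higher kinds where it matters, and the pointwise computation of $\absof{g}\ \valr$ in your $\lambda$-case cannot get started.

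A smaller point: your treatment of non-ground terminals $s\colon\kappa_1\to\kappa_2$ is only gestured at in your final paragraph. Clause (1a) for $\interlof{s}$ should be spelled out --- it follows by applying the second clause of $\precisionrel$ twice, $\absof{\interlof{s}\ \vall}=\interrof{s}\ \absof{\vall}=\interrof{s}\ \absof{\vall'}=\absof{\interlof{s}\ \vall'}$ --- and the functional equality $\absof{\interlof{s}}=\interrof{s}$ again needs the lifted surjectivity of Lemma~\ref{Lemma:LiftingSurjectivity} to evaluate $\absof{\interlof{s}}$ on an arbitrary abstract argument. Once these appeals are routed through the lifting lemma, your argument coincides with the paper's.
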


\begin{theorem}[Exact Fixed-Point Transfer]
\label{Theorem:FixedPointTransfer}
    Let $G$ be a scheme with models $\modell$ and $\modelr$.
    Let $\solfl$ and $\solfr$ be the corresponding semantics.
    If $\abs: \domainl \rightarrow \domainr$ is precise, we have $\solfr=\absof{\solfl}$.
\end{theorem}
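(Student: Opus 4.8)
The plan is to exploit that both semantics arise as greatest fixed points obtained by Kleene iteration from the respective top valuations, $\solfl = \bigsqcap_{i \in \N} \soliter{\modell}{i}$ and $\solfr = \bigsqcap_{i \in \N} \soliter{\modelr}{i}$, where $\soliter{\modell}{0}$ and $\soliter{\modelr}{0}$ assign the top element of the appropriate domain to every non-terminal and $\soliter{\model}{i+1} = \rhsfunof{\model}{\soliter{\model}{i}}$. The theorem then reduces to two facts: (i) $\abs$ commutes with the finite iteration stage by stage, $\absof{\soliter{\modell}{i}} = \soliter{\modelr}{i}$ for every $i$; and (ii) $\abs$ commutes with the infinite meet that closes the iteration, $\absof{\bigsqcap_i \soliter{\modell}{i}} = \bigsqcap_i \absof{\soliter{\modell}{i}}$. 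Combining these immediately yields $\absof{\solfl} = \bigsqcap_i \absof{\soliter{\modell}{i}} = \bigsqcap_i \soliter{\modelr}{i} = \solfr$.

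First I would establish the lifting of the precision hypotheses from the ground domain to all function domains and to valuations. Using the definition of the function abstraction, surjectivity $\precisionsur$, top-preservation $\precisionbot$, and $\meet$-continuity $\precisioncont$ all propagate by induction on the kind $\kappa$; simultaneously one shows that the meet of a descending chain of compatible elements is again compatible. For valuations these properties then follow componentwise, since both $\abs$ and $\bigsqcap$ act componentwise on $N \disunion V \prightarrow \domainl$. This is the step already signalled in the text, but it needs care because the function abstraction is defined only via compatible arguments: in the inductive step for $\meet$-continuity one rewrites $\absof{\bigsqcap_i f_i}\ \valr$, chooses by lifted $\precisionsur$ a compatible $\vall$ with $\valr = \absof{\vall}$, and moves $\abs$ through the pointwise meet $\bigsqcap_i (f_i\ \vall)$ using $\meet$-continuity at $\kappa_2$, which requires the chain $f_i\ \vall$ to consist of compatible elements and $\bigsqcap_i f_i$ itself to be compatible.

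For fact (i) I would induct on $i$, proving simultaneously that every iterate $\soliter{\modell}{i}$ is a compatible valuation. The base case $\absof{\soliter{\modell}{0}} = \soliter{\modelr}{0}$ is the lifted $\precisionbot$, and the top valuation is compatible because ground elements are compatible and tops of function domains are compatible by the lifting above. For the step, recall $\rhsfunof{\modell}{\nu}(F_j) = \semunder{\modell}{t_j}\ \nu$; applying Lemma~\ref{Lemma:MainTransfer} to the compatible valuation $\nu = \soliter{\modell}{i}$ shows each $\semunder{\modell}{t_j}\ \nu$ is compatible and satisfies $\absof{\semunder{\modell}{t_j}\ \nu} = \semunder{\modelr}{t_j}\ \absof{\nu}$. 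Reading this componentwise over all $F_j$ gives both that $\soliter{\modell}{i+1} = \rhsfunof{\modell}{\nu}$ is compatible and that $\absof{\soliter{\modell}{i+1}} = \rhsfunof{\modelr}{\absof{\nu}} = \rhsfunof{\modelr}{\soliter{\modelr}{i}} = \soliter{\modelr}{i+1}$, closing the induction.

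Fact (ii) is then exactly the lifted $\meet$-continuity of $\abs$ applied to the descending chain $(\soliter{\modell}{i})_{i \in \N}$, which we have just shown consists of compatible valuations. The main obstacle is precisely this interplay: the function abstraction is well-behaved only on compatible elements, so $\meet$-continuity cannot be invoked as a black box but must be threaded through the compatibility invariant maintained along the whole iteration, and the lifting lemmas of the second paragraph must be proved by a single simultaneous induction on kinds. Once those lemmas and the compatibility of all iterates are in place, facts (i) and (ii) combine to give $\absof{\solfl} = \solfr$ as above.
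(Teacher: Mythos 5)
Your proposal is correct and follows essentially the same route as the paper: both express the two semantics via Kleene iteration, prove $\absof{\soliter{\modell}{i}} = \soliter{\modelr}{i}$ by induction on $i$ using Lemma~\ref{Lemma:MainTransfer} together with the lifted top-preservation $\precisionbot$, and commute $\abs$ with the closing meet via the lifted $\meet$-continuity, threading the compatibility invariant through both steps exactly as the paper does with Lemmas~\ref{Lemma:LiftingSurjectivity}--\ref{Lemma:LiftingBottom}.
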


\section{Domains for Higher-Order Games}
\label{Section:Instantiation}

We propose two domains, \emph{abstract} and \emph{optimized}, that allow us to solve $\gamehorsreg$.
The computation is a standard fixed-point iteration, and, in the optimized domain, this iteration has optimal complexity.
Correctness follows by instantiating the previous framework.

\subparagraph*{Abstract Semantics.}

Our goal is to define an abstract model for games that
(1) suitably relates to the concrete model from Section~\ref{Section:Games} and
(2) is computable.
By a suitable relation, we mean the two models should relate via an abstraction function.
Provided the conditions on precision hold, correctness of the abstraction then follows from Theorem~\ref{Theorem:FixedPointTransfer}.
Combined with Theorem~\ref{Theorem:ConcGood}, this will allow us to solve $\gamehorsreg$.
Computable in particular means the domain should be finite and the operations should be efficiently computable.

We define the $\modelabs = (\domain^\abstag, \inter^\abstag)$ as follows.
Again, we resolve the non-determinism into Boolean formulas.
But rather than tracking the precise words generated by the scheme, we only track the current set of states of the automaton.
To achieve the surjectivity required by precision, we restrict the powerset to those sets of states from which a word is accepted.
Let $\acceptfrom{w} = \Set{ q }{ q \tow{w} q_f \in Q_f }$.
For a language $L$ we have
$\acceptfrom{L} = \Set{ \acceptfrom{w} }{ w \in L }$.
The abstract domain for terms of ground kind is $\domainabsof{\ground} = \boolof{\acceptfrom{T^*}}$.
The lifting to functions is as explained in Section~\ref{Section:HORS}.
Satisfaction is now defined relative to a set $\solset$ of elements of $\pwrset{\QNFA}$ (cf. Section~\ref{Section:Games}).
With finitely many atomic propositions, there are only finitely many formulas (up to logical equivalence).
This means we no longer need sets of formulas to represent infinite conjunctions, but can work with plain formulas.
The ordering is thus the ordinary implication with the meet being conjunction and top being $\ftrue$.

The interpretation of ground terms is $\interabsof{\wordend} = Q_f$ and $\interabsof{a}= \predecf{a}$. Here $\predecf{a}$ is the predecessor computation under label $a$, $\predecf{a}(Q)=\Set{q'\in \QNFA}{q'\tow{a} q \in Q}$.
It is lifted to formulas by distributing it over conjunction and disjunction.
The composition operators are again interpreted as conjunctions and disjunctions, depending on the owner of the non-terminal. 
Since we restrict the atomic propositions to $\acceptfrom{T^*}$, we have to show that the interpretations use only this restricted set.
Proving $\interabsof{s}$ is $\meet$-continuous is standard.

\begin{lemma}
\label{lem:abs-defined}
    The interpretations are defined on the abstract domain.
\end{lemma}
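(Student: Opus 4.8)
The plan is to verify that each interpretation $\interabsof{s}$ produces only formulas whose atomic propositions lie in $\acceptfrom{T^*}$, so that its output is a genuine element of $\domainabsof{\ground} = \boolof{\acceptfrom{T^*}}$. Conjunction and disjunction of formulas over $\acceptfrom{T^*}$ clearly stay within $\boolof{\acceptfrom{T^*}}$, so the interpretations of the operators $\opof{F}$ (an $\ell$-ary conjunction or disjunction according to whether $\playera$ or $\playere$ owns $F$) need no separate argument. It therefore suffices to treat the two atomic interpretations $\interabsof{\wordend}$ and $\interabsof{a}$.

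First I would handle the endmarker. Since $q \tow{\varepsilon} q'$ holds exactly when $q = q'$, we have $\acceptfrom{\varepsilon} = \Set{q}{q \in Q_f} = Q_f$, so $\interabsof{\wordend} = Q_f = \acceptfrom{\varepsilon}$ lies in $\acceptfrom{T^*}$, as required.

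Next, for a terminal $a \colon \ground \to \ground$, the interpretation $\predecf{a}$ distributes over conjunction and disjunction, so it maps $\boolof{\acceptfrom{T^*}}$ into itself as soon as it sends each atom $\acceptfrom{w}$ into $\acceptfrom{T^*}$. The key identity is $\predecf{a}(\acceptfrom{w}) = \acceptfrom{aw}$: unfolding the definitions, $q' \in \predecf{a}(\acceptfrom{w})$ iff there is a state $q$ with $q' \tow{a} q$ and $q \tow{w} q_f$ for some $q_f \in Q_f$, which by concatenating the two runs is exactly $q' \tow{aw} q_f \in Q_f$, i.e.\ $q' \in \acceptfrom{aw}$. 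Thus the image of every atom is again of the form $\acceptfrom{aw} \in \acceptfrom{T^*}$, and the check is complete.

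All steps are routine; the only point requiring care is the run-concatenation argument behind $\predecf{a}(\acceptfrom{w}) = \acceptfrom{aw}$, together with the observation that distributivity reduces the problem to atomic propositions. The $\meet$-continuity mentioned in the surrounding text is orthogonal to this lemma and follows from the same distributivity.
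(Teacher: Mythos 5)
Your proof is correct and follows essentially the same route as the paper's: both treat $\interabsof{\wordend} = Q_f = \acceptfrom{\varepsilon}$, reduce the case of $\interabsof{a}$ to atomic propositions via distributivity and the identity $\predecf{a}(\acceptfrom{w}) = \acceptfrom{a.w}$, and dismiss $\interabsof{\opof{F}}$ because conjunction and disjunction do not leave $\boolof{\acceptfrom{T^*}}$. The only difference is that you spell out the run-concatenation argument behind the key identity, which the paper states without proof.
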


\begin{lemma}
    \label{lem:abs-continuous}
    For all terminals $s$, $\interabsof{s}$ is $\meet$-continuous over the respective lattices.
\end{lemma}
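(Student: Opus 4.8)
The plan is to exploit the finiteness of the abstract ground domain. Since $\QNFA$ is finite, there are only finitely many sets of states, so the atomic propositions $\acceptfrom{T^*} \subseteq \pwrset{\QNFA}$ form a finite set. Consequently $\domainabsof{\ground} = \boolof{\acceptfrom{T^*}}$ contains only finitely many formulas up to logical equivalence, and all higher-order domains built from it are finite as well. In particular every descending chain in each of these lattices is eventually stationary (the DCC holds), so the relevant meets are attained by a tail of the chain.

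First I would record the elementary fact that on a lattice satisfying the descending chain condition every monotone function is $\meet$-continuous. Indeed, a descending chain $(d_i)_{i\in\N}$ stabilises at $d = \bigsqcap_i d_i$, and monotonicity makes $(f(d_i))_i$ a descending chain stabilising at $f(d)$, so $\bigsqcap_i f(d_i) = f(d) = f(\bigsqcap_i d_i)$. The same argument applies argument-wise to the curried interpretations of the $\ell$-ary composition operators, since in a finite lattice continuity in each argument separately suffices for an element of the function domain. This reduces the lemma to showing that each $\interabsof{s}$ is monotone with respect to implication (recall the order is $\fmla_1 \leq \fmla_2$ iff $\fmla_1 \Rightarrow \fmla_2$, with meet conjunction and top $\ftrue$).

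For the composition operators $\interabsof{\opof{F}}$, which are $\ell$-ary conjunctions or disjunctions, monotonicity in each argument is immediate, as meet and join are monotone in any lattice. For the endmarker there is nothing to prove, since $\interabsof{\wordend}=Q_f$ is a constant. The only substantial case is the predecessor operation $\interabsof{a} = \predecf{a}$ for $a:\ground\to\ground$. By definition $\predecf{a}$ distributes over $\land$ and $\lor$ and fixes $\ftrue$, so on a formula it acts as the uniform substitution replacing each atom $Q$ by the atom $\predecf{a}(Q)$; by Lemma~\ref{lem:abs-defined} the latter is again an atom of $\acceptfrom{T^*}$, so $\predecf{a}$ is a well-defined endofunction of $\domainabsof{\ground}$. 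To obtain monotonicity I would appeal to the standard fact that uniform substitution preserves propositional entailment: for any assignment $\solset$ to the substituted atoms, put $\solset' = \Set{Q}{\predecf{a}(Q) \in \solset}$; then $\solset \models \predecf{a}(\fmla)$ iff $\solset' \models \fmla$, so $\fmla_1 \Rightarrow \fmla_2$ yields $\predecf{a}(\fmla_1)\Rightarrow\predecf{a}(\fmla_2)$, which is exactly monotonicity in the implication order.

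I expect the one point that needs care, and hence the main obstacle, to be this last monotonicity claim for $\predecf{a}$: because the substitution $Q \mapsto \predecf{a}(Q)$ need not be injective, one cannot argue by a mere renaming of variables, and must instead verify the substitution identity $\solset \models \predecf{a}(\fmla) \Leftrightarrow \solset' \models \fmla$ by a straightforward induction on $\fmla$ (base case the atoms, using $Q \in \solset'$ iff $\predecf{a}(Q)\in\solset$, inductive step the distributivity over $\land,\lor$). Everything else is either a direct consequence of finiteness or a routine lattice fact, matching the observation that the argument is standard.
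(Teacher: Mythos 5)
Your proof is correct, but it takes a genuinely different route from the paper's. The paper treats the two non-trivial cases separately and never argues via monotonicity: for $s = \opof{F}$ it simply appeals to Lemma~\ref{Lemma:ContinuityRHSConc}, whose induction on arity (with distributivity of $\vee$ over $\meet$) transfers verbatim to the abstract domain; for $s = a$ it notes that, the domain being finite, the meet $\bigsqcap_{i\in\N} x_i$ of a descending chain is in fact a \emph{finite} conjunction, over which $\predecf{a}$ distributes by definition, so $\interabsof{a}\ (\bigsqcap_{i\in\N} x_i) = \bigsqcap_{i\in\N} (\interabsof{a}\ x_i)$ follows in a one-line computation. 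You instead prove one uniform lattice-theoretic reduction --- under the descending chain condition every monotone map is $\meet$-continuous, applied argument-wise through the currying --- and then verify monotonicity of each interpretation, which for $\predecf{a}$ you establish semantically via the pullback assignment $\solset' = \Set{Q}{\predecf{a}(Q)\in\solset}$. What your route buys: uniformity (one reduction covers $\opof{F}$ and $a$ alike, with no induction on arity), and your substitution lemma makes explicit that the syntactically defined $\predecf{a}$ respects logical implication and hence is well defined on the quotient $\boolof{\acceptfrom{T^*}}$ --- a point the paper absorbs silently into the phrase ``distributes over conjunction and disjunction''; indeed, once that invariance is granted, monotonicity comes even more cheaply than your induction, since $\fmla_1 \Rightarrow \fmla_2$ means $\fmla_1 \Leftrightarrow \fmla_1\wedge\fmla_2$, whence $\predecf{a}(\fmla_1) \Leftrightarrow \predecf{a}(\fmla_1)\wedge\predecf{a}(\fmla_2)$. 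What the paper's route buys: brevity and reuse --- only the $\predecf{a}$ case needs any fresh argument, and that argument is a direct distributivity computation rather than a semantic induction over formulas.
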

Recall our concrete model is $\modelconc=(\domainconc, \interconc)$,  where $\domainconc=\pwrset{\boolof{T^*}}$.
To relate this model to $\modelabs$, we define the abstraction function $\abs:\domainconcof{\ground}\rightarrow\domainabsof{\ground}$.
It leaves the Boolean structure of a formula unchanged but maps every word (which is an atomic proposition) to the set of states from which this word is accepted. 
For a set of formulas, we take the conjunction of the abstraction of the elements. 
This conjunction is finite as we work over a finite domain, so there is no need to worry about infinite syntax.
Technically, we define $\abs$ on $\boolof{T^*}$ by
$\absof{\setfmla}=\bigwedge_{\fmla\in\setfmla}\absof{\fmla}$
for a set of formulas $\setfmla\in\pwrset{\boolof{T^*}}$, and
\[
    \absof{\fmla}
    =
    \left\{
    \begin{array}{ll}
        \acceptfrom{w} & \text{if } \fmla = w,
        \\
        \absof{\fmla_1}\genop\absof{\fmla_2} &\text{if }
        \fmla = \fmla_1\genop\fmla_2 \text{ and } \genop \in \set{\land,\lor},
        \\
        \fmla &\text{if } \fmla = \ftrue \ .\\
    \end{array}
    \right.
\]
This definition is suitable in that $\absof{\sol{\modelconc}}=\sol{\modelabs}$ entails the following.
\begin{theorem}
\label{Theorem:GameAbstract}
    $\sol{\modelabs}(S)$ is satisfied by
    $\Set{Q \in \acceptfrom{T^*}}{q_0 \in Q}$
    iff Player $\playere$ wins $\horsgame$.
\end{theorem}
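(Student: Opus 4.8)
The plan is to reduce Theorem~\ref{Theorem:GameAbstract} to the already-established Theorem~\ref{Theorem:ConcGood} via the exact fixed-point transfer (Theorem~\ref{Theorem:FixedPointTransfer}). Theorem~\ref{Theorem:ConcGood} tells us that Player~$\playere$ wins $\horsgame$ iff $\solconcof{S}$ is satisfied by $\lang{A}$. So it suffices to prove two things: (i) that $\abs$ is \emph{precise} for $\modelconc$ and $\modelabs$, so that Theorem~\ref{Theorem:FixedPointTransfer} yields $\solabs=\absof{\solconc}$ and in particular $\solabsof{S}=\absof{\solconcof{S}}$; and (ii) that the abstraction $\abs$ translates the concrete satisfaction condition ``satisfied by $\lang{A}$'' into the abstract satisfaction condition ``satisfied by $\Set{Q\in\acceptfrom{T^*}}{q_0\in Q}$''. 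Combining (i) and (ii) gives the theorem.

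\textbf{Establishing precision.} First I would verify the five precision conditions $\precisionsur$--$\precisioncomp$ for $\abs$. Surjectivity $\precisionsur$ holds because every set $Q\in\acceptfrom{T^*}$ is by definition $\acceptfrom{w}$ for some $w$, so $\abs$ hits every atomic proposition of $\domainabsof{\ground}$, and the Boolean structure is preserved verbatim. For $\precisioncont$, I would check that $\abs$ commutes with meets: a meet in $\domainconc$ is union of formula-sets (infinite conjunction), and since $\abs$ sends a set to the \emph{finite} conjunction $\bigwedge_{\fmla\in\setfmla}\absof{\fmla}$ over the finite abstract domain, continuity amounts to the fact that an infinite conjunction in the abstract domain stabilizes to a finite one. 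Condition $\precisionbot$ is $\absof{\set{\ftrue}}=\ftrue$, immediate from the definition. The crux is $\precisionrel$: I must show $\absof{\interconcof{\wordend}}=\interabsof{\wordend}$, i.e. $\absof{\set{\varepsilon}}=\acceptfrom{\varepsilon}=Q_f$, and $\absof{\prependf{a}(\fmla)}=\predecf{a}(\absof{\fmla})$ for the prepend/predecessor pair, plus compatibility of the conjunction/disjunction interpretations of $\opof{F}$. The key algebraic identity here is $\acceptfrom{aw}=\predecf{a}(\acceptfrom{w})$, which says prepending a letter concretely corresponds to taking automaton-predecessors abstractly; since both $\prependf{a}$ and $\predecf{a}$ distribute over $\land,\lor$, this lifts from atomic words to all formulas by structural induction. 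Condition $\precisioncomp$ follows since the ground elements are all trivially compatible and the terminal interpretations preserve this.

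\textbf{Translating the satisfaction condition.} For (ii) I would show that the concrete assignment $\lang{A}\subseteq T^*$ and the abstract assignment $\solele=\Set{Q\in\acceptfrom{T^*}}{q_0\in Q}$ agree under $\abs$ in the following precise sense: a formula $\fmla\in\boolof{T^*}$ is satisfied by $\lang{A}$ iff $\absof{\fmla}$ is satisfied by $\solele$. At the atomic level this is exactly the observation that $w\in\lang{A}$ iff $q_0\in\acceptfrom{w}$ (since $w\in\lang{A}$ means $q_0\tow{w}q\in Q_f$, which is precisely $q_0\in\acceptfrom{w}$), and since $\abs$ preserves Boolean structure, satisfaction is preserved by induction on the formula. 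The same then holds for sets of formulas, so $\solconcof{S}$ is satisfied by $\lang{A}$ iff $\absof{\solconcof{S}}=\solabsof{S}$ is satisfied by $\solele$.

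\textbf{The main obstacle.} I expect the bulk of the difficulty to lie in verifying $\precisionrel$ carefully, specifically the continuity interplay $\precisioncont$ together with the fact that $\abs$ is well-defined as a map into $\boolof{\acceptfrom{T^*}}$ rather than an arbitrary powerset (this is Lemma~\ref{lem:abs-defined}'s concern, which I would invoke). The conceptual point that needs care is that the concrete domain uses \emph{sets} of formulas to encode infinite conjunctions, whereas the abstract domain collapses these to single finite formulas because the atomic proposition set $\acceptfrom{T^*}$ is finite; I must make sure $\abs$ respects this collapse compatibly with meets, so that the hypothesis of Theorem~\ref{Theorem:FixedPointTransfer} is genuinely met. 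Once precision is in hand, the theorem is a direct chaining of Theorem~\ref{Theorem:FixedPointTransfer}, Theorem~\ref{Theorem:ConcGood}, and the satisfaction-translation of step~(ii), with no further fixed-point reasoning required.
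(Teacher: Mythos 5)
Your proposal is correct and follows essentially the same route as the paper: the paper proves Proposition~\ref{Proposition:PrecisionAbs} (precision of $\abs$, with exactly the ingredients you list, including the key identity $\acceptfrom{aw}=\predecf{a}(\acceptfrom{w})$), applies Theorem~\ref{Theorem:FixedPointTransfer} to get $\solabs=\absof{\solconc}$, observes $\Set{Q \in \acceptfrom{T^*}}{q_0 \in Q}=\acceptfrom{\lang{A}}$, and transfers satisfaction by the same induction over formulas (atomic case $w\in\lang{A}$ iff $q_0\in\acceptfrom{w}$) before invoking Theorem~\ref{Theorem:ConcGood}.
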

To see that the theorem is a consequence of the exact fixed-point transfer, observe that
$\Set{Q \in \acceptfrom{T^*}}{q_0 \in Q}
 =
 \acceptfrom{\lang{A}}$.
Then, by $\sol{\modelabs}=\absof{\sol{\modelconc}}$ we have
$\acceptfrom{\lang{A}}$
satisfies
$\sol{\modelabs}(S)$
iff it also satisfies
$\absof{\sol{\modelconc}(S)}$.
This holds iff $\lang{A}$ satisfies $\sol{\modelconc}(S)$ (a simple induction over formulas).
By Theorem~\ref{Theorem:ConcGood},
this occurs iff Player $\playere$ wins the game.

It remains to establish $\absof{\sol{\modelconc}}=\sol{\modelabs}$.
With the framework, the exact fixed-point transfer follows from precision, Theorem~\ref{Theorem:FixedPointTransfer}.
The proof of the following is routine.

\begin{proposition}
\label{Proposition:PrecisionAbs}
    $\abs$ is precise.
    Hence, $\absof{\sol{\modelconc}}=\sol{\modelabs}$.
\end{proposition}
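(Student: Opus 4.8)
The plan is to verify the five precision conditions $\precisionsur$–$\precisioncomp$ directly for the abstraction $\abs \colon \domainconcof{\ground} \to \domainabsof{\ground}$ defined above, and then invoke Theorem~\ref{Theorem:FixedPointTransfer} to conclude $\absof{\sol{\modelconc}} = \sol{\modelabs}$. Since the definition of precision only imposes requirements at the ground kind (together with the terminal conditions $\precisionrel$ and $\precisioncomp$ that quantify over compatible arguments), the bulk of the work is a checklist of routine verifications, one per condition. I would organize the proof as five short paragraphs, one per label.

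For $\precisionsur$ (surjectivity), I would note that the abstract ground domain is $\boolof{\acceptfrom{T^*}}$, every atomic proposition of which is of the form $\acceptfrom{w}$ for some $w \in T^*$; since $\absof{w} = \acceptfrom{w}$ and $\abs$ preserves the Boolean structure, every abstract formula is hit, so $\absof{\domainconcof{\ground}} = \domainabsof{\ground}$. This is exactly where the restriction of the powerset to $\acceptfrom{T^*}$ (rather than all of $\pwrset{\QNFA}$) is needed. For $\precisioncont$ ($\meet$-continuity), I would recall that the concrete meet is union of sets of formulas and the abstract meet is conjunction; since $\abs$ sends a set $\setfmla$ to $\bigwedge_{\fmla \in \setfmla} \absof{\fmla}$, it sends unions to conjunctions by construction, and because the abstract domain is finite every descending chain stabilizes, so the infinitary $\meet$-continuity reduces to this finite distribution. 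For $\precisionbot$, the concrete top is $\set{\ftrue}$ and $\absof{\ftrue} = \ftrue$, the abstract top, so $\absof{\toplof{\ground}} = \toprof{\ground}$.

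The conditions $\precisionrel$ and $\precisioncomp$ concern the terminals $\wordend$, the letters $a \colon \ground \to \ground$, and the operators $\opof{F}$. For $\wordend$, I would check $\absof{\interconcof{\wordend}} = \absof{\set{\varepsilon}} = \acceptfrom{\varepsilon} = Q_f = \interabsof{\wordend}$. For a letter $a$, the content of $\precisionrel$ is the commutation $\absof{\prepend{a}{\fmla}} = \predecf{a}(\absof{\fmla})$, which at the atomic level is the identity $\acceptfrom{aw} = \predecf{a}(\acceptfrom{w})$; this is the step I expect to be the crux, but it is immediate from the definitions, since $q \in \acceptfrom{aw}$ iff $q \tow{aw} q_f \in Q_f$ iff $q \tow{a} q'$ for some $q' \in \acceptfrom{w}$, which is exactly membership in $\predecf{a}(\acceptfrom{w})$. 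Because both $\prependf{a}$ and $\predecf{a}$ distribute over $\land$ and $\lor$ and both fix $\ftrue$, the equality lifts from atoms to all formulas by a routine structural induction, and then to sets of formulas. For the operators $\opof{F}$, both $\interconc$ and $\interabs$ interpret them as the same Boolean connective (conjunction or disjunction, according to the owner of $F$), and $\abs$ commutes with those connectives by definition, giving $\precisionrel$; the corresponding $\precisioncomp$ claims follow because every ground element is compatible by definition and the relevant operations stay within the ground domain.

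\textbf{The main obstacle} I anticipate is purely bookkeeping rather than conceptual: one must be careful that the two formulations of $\precisionrel$—the equation $\absof{\interconcof{s}} = \interabsof{s}$ at ground kind and the equation $\absof{\interconcof{s}\ \vall} = \interabsof{s}\ \absof{\vall}$ at first-order kind for compatible $\vall$—are each discharged with the correct commutation lemma, and that the structural induction lifting $\acceptfrom{aw} = \predecf{a}(\acceptfrom{w})$ from atoms to arbitrary (finite) formulas and then to (possibly infinite) sets is set up so that the infinite-set case is handled by the finiteness of the abstract domain, mirroring the argument for $\precisioncont$. Once all five conditions are in place, Theorem~\ref{Theorem:FixedPointTransfer} applies verbatim and yields $\absof{\sol{\modelconc}} = \sol{\modelabs}$, completing the proposition.
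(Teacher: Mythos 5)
Your overall route is exactly the paper's: verify $\precisionsur$--$\precisioncomp$ one by one and then invoke Theorem~\ref{Theorem:FixedPointTransfer}; your treatments of $\precisionsur$, $\precisioncont$, $\precisionbot$, and of $\wordend$ and the letters $a$ in $\precisionrel$ (including the crux $\acceptfrom{aw} = \predecf{a}(\acceptfrom{w})$ and its lifting through the Boolean structure) coincide with the paper's proof. However, two of your justifications would not survive as written. First, for $\opof{F}$ owned by Player~$\playere$ you claim that $\abs$ commutes with the connective ``by definition''. That is true on single formulas, but the concrete interpretation acts on \emph{sets}: $\interconcof{\opof{F}}\ \setfmla_1\ \setfmla_2 = \Set{\fmla_1 \vee \fmla_2}{\fmla_1 \in \setfmla_1, \fmla_2 \in \setfmla_2}$, whose abstraction is $\bigwedge_{\fmla_1 \in \setfmla_1,\, \fmla_2 \in \setfmla_2} \left( \absof{\fmla_1} \vee \absof{\fmla_2} \right)$. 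Equating this with $\absof{\setfmla_1} \vee \absof{\setfmla_2} = \left( \bigwedge_{\fmla_1 \in \setfmla_1} \absof{\fmla_1} \right) \vee \left( \bigwedge_{\fmla_2 \in \setfmla_2} \absof{\fmla_2} \right)$ is not definitional: it requires distributivity of (possibly infinite) conjunction over disjunction, which holds only because formulas are taken modulo logical equivalence. This is precisely the step on which the paper's proof spends the most care.

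Second, your justification for $\precisioncomp$ --- that every ground element is compatible and ``the relevant operations stay within the ground domain'' --- is incorrect for the operators $\opof{F}$ of arity at least two. Condition $\precisioncomp$ concerns a \emph{single} application $\interconcof{\opof{F}}\ \setfmla$, which is not a ground element but a function (for instance $\setfmla \cup {-}$ of kind $\ground \to \ground$ in the Player~$\playera$ case), and compatibility of functions is not automatic: one must check that $\absof{\setfmla_1} = \absof{\setfmla_2}$ implies $\absof{\setfmla \cup \setfmla_1} = \absof{\setfmla \cup \setfmla_2}$. The paper does this explicitly, deriving it from the homomorphism identity $\absof{\setfmla \cup \setfmla_1} = \absof{\setfmla} \wedge \absof{\setfmla_1}$ obtained in the $\precisionrel$ computation. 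Both repairs are short and use only material you already have, so your plan is sound overall; but as stated, these two steps are genuine gaps rather than routine bookkeeping.
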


\subparagraph*{Optimized Semantics.}
The above model yields a decision procedure for $\gamehorsreg$ via Kleene iteration.
Unfortunately, the complexity is one exponential too high:
The height of the domain for a symbol of order $k$ in the abstract model is $(k+2)$-times exponential,
where the height is the length of the longest strictly descending chain in the domain.
This gives the maximum number of steps of Kleene iteration needed to reach the fixed point.

We present an optimized version of our model that is able to close the gap:
In this model, the domain for an order-$k$ symbol is only $(k+1)$-times exponentially~high.
The idea is to resolve the atomic propositions in $\modelabs$, which are sets of states, into disjunctions among the states.
The reader familiar with inclusion algorithms will find this decomposition surprising.

We first define $\abs:\boolof{\acceptfrom{T^*}}\rightarrow \boolof{\QNFA}$.
The optimized domain will then be based on the image of $\abs$.
This guarantees surjectivity.
For a set of states $Q$, we define $\absof{Q}=\bigvee Q= \bigvee_{q\in Q} q$.
For a formula, the abstraction function is defined to distribute over conjunction and disjunction.
The optimized model is $\modelopt = (\domain^\opttag, \interopt)$ with ground domain $\absof{\boolof{\acceptfrom{T^*}}}$.
The interpretation is $\interoptof{\wordend}=\bigvee Q_f$.
For $a$, we resolve the set of predecessors into a disjunction,
$\interoptof{a}\ q = \bigvee \predecf{a}(\set{q})$.
The function distributes over conjunction and disjunction.
Finally, $\interoptof{\opof{F}}$ is conjunction or disjunction of formulas, depending on the owner of the non-terminal.
Since we use a restricted domain, we have to argue that the operations do not leave the domain.
It is also straightforward to prove our interpretation is $\meet$-continuous as required.

\begin{lemma}\label{Lemma:OptDefined}
    The interpretations are defined on the optimized domain.
\end{lemma}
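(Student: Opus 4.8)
The claim is that each terminal $s:\kappa$ is interpreted by an element $\interoptof{s}$ that genuinely lies in $\domainoptof{\kappa}$; concretely, that the operations never produce a formula outside the image $\domainoptof{\ground}=\absof{\boolof{\acceptfrom{T^*}}}$ on which the ground domain is built. The first step I would take is to record a closure property that immediately handles the composition operators. Since $\abs:\boolof{\acceptfrom{T^*}}\rightarrow\boolof{\QNFA}$ distributes over $\land$ and $\lor$, and $\boolof{\acceptfrom{T^*}}$ is itself closed under $\land$ and $\lor$, we get $\absof{\fmla_1}\genop\absof{\fmla_2}=\absof{\fmla_1\genop\fmla_2}$ for $\genop\in\set{\land,\lor}$. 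Hence $\domainoptof{\ground}$ is closed under conjunction and disjunction, so $\interoptof{\opof{F}}$, being an $\ell$-ary $\land$ or $\lor$, sends tuples from the domain back into the domain. This reduces the statement to the two generating terminals $\wordend$ and $a:\ground\to\ground$.

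For the endmarker, note that because the NFA has no $\varepsilon$-transitions we have $\acceptfrom{\varepsilon}=Q_f$, so $Q_f\in\acceptfrom{T^*}$ is a legal atomic proposition of the abstract domain and $\interoptof{\wordend}=\bigvee Q_f=\absof{Q_f}\in\domainoptof{\ground}$. The real work is the prepending terminal, where the plan is to establish the commutation
\[
    \interoptof{a}\bigl(\absof{\fmla}\bigr)=\absof{\interabsof{a}(\fmla)}
    \qquad\text{for all }\fmla\in\boolof{\acceptfrom{T^*}}.
\]
Both sides distribute over $\land$ and $\lor$ (the left by the definition of $\interoptof{a}$; the right because $\interabsof{a}=\predecf{a}$ and $\abs$ both distribute), so it suffices to check it on an atom $Q\in\acceptfrom{T^*}$. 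On the left, $\interoptof{a}(\absof{Q})=\interoptof{a}\bigl(\bigvee_{q\in Q}q\bigr)=\bigvee_{q\in Q}\bigvee\predecf{a}(\set{q})$. On the right, using $\predecf{a}(Q)=\bigcup_{q\in Q}\predecf{a}(\set{q})$, we get $\absof{\predecf{a}(Q)}=\bigvee_{q'\in\predecf{a}(Q)}q'=\bigvee_{q\in Q}\bigvee\predecf{a}(\set{q})$, the same formula.

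With the commutation in hand the lemma follows: every element of $\domainoptof{\ground}$ is of the form $\absof{\fmla}$ for some $\fmla\in\boolof{\acceptfrom{T^*}}$, and the commutation rewrites $\interoptof{a}(\absof{\fmla})$ as $\absof{\interabsof{a}(\fmla)}$; since $\interabsof{a}(\fmla)\in\boolof{\acceptfrom{T^*}}$ by Lemma~\ref{lem:abs-defined}, its abstraction lies in $\domainoptof{\ground}$ by definition. The combinatorial fact that drives everything is the identity $\acceptfrom{aw}=\predecf{a}(\acceptfrom{w})$, which is exactly what guarantees that $\predecf{a}$ (and hence the optimized prepend) keeps us inside $\acceptfrom{T^*}$ rather than escaping to arbitrary subsets of $\QNFA$. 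The point that I expect to need the most care — and which I regard as the main obstacle — is that the optimized domain is taken modulo logical equivalence, so one must verify that $\interoptof{a}$ and the Boolean operations are well-defined on equivalence classes, i.e.\ send equivalent formulas to equivalent formulas. The commutation is again the lever here: the abstract operations $\predecf{a},\land,\lor$ are already known to respect equivalence on $\boolof{\acceptfrom{T^*}}$, and $\abs$ transports this to $\boolof{\QNFA}$, so the optimized operations inherit well-definedness.
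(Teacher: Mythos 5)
Your proof is correct and takes essentially the same route as the paper's: the composition operators $\opof{F}$ are handled by distributivity of $\abs$ over $\land$ and $\lor$, the endmarker by $\interoptof{\wordend}=\bigvee Q_f=\absof{\acceptfrom{\varepsilon}}$, and the prepending terminal by reducing to atoms via distributivity, where the identities $\predecf{a}(Q)=\bigcup_{q\in Q}\predecf{a}(\set{q})$ and $\predecf{a}(\acceptfrom{w})=\acceptfrom{aw}$ yield the paper's chain $\interoptof{a}\ \absof{\acceptfrom{w}}=\absof{\acceptfrom{a.w}}$. The commutation $\interoptof{a}(\absof{\fmla})=\absof{\interabsof{a}(\fmla)}$ that you single out is exactly what that chain establishes, so the two arguments coincide up to presentation.
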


\begin{lemma}
\label{Lemma:OptContinuous}
    For all terminals $s$, $\interoptof{s}$ is $\meet$-continuous over the respective lattices.
\end{lemma}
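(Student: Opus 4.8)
The plan is to exploit two structural features of the optimized ground domain $\domainoptof{\ground}=\absof{\boolof{\acceptfrom{T^*}}}$. First, it is \emph{finite}: since $\QNFA$ is finite there are only finitely many positive Boolean formulas over $\QNFA$ up to logical equivalence, so the subdomain consists of finitely many classes. Hence it satisfies the descending chain condition, every descending chain $(\fmla_i)_{i\in\N}$ is eventually stationary, and its meet $\bigmeet_{i\in\N}\fmla_i$ is realised by some $\fmla_{i_0}$. Consequently, for any monotonic $f$ the sequence $(f(\fmla_i))_i$ is again a descending chain that stabilises at $f(\fmla_{i_0})$, giving $f(\bigmeet_i\fmla_i)=f(\fmla_{i_0})=\bigmeet_i f(\fmla_i)$. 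Thus on this domain $\meet$-continuity reduces to \emph{monotonicity together with commuting with finite conjunctions}. Second, the ambient lattice $\boolof{\QNFA}$ is \emph{distributive} (we work modulo $\Leftrightarrow$). Since the meets on function domains are computed pointwise, it suffices to verify these properties argument-by-argument at ground kind, with all other arguments fixed.

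First I would treat $\interoptof{a}$ for $a\colon\ground\to\ground$. By Lemma~\ref{Lemma:OptDefined} the operation keeps us inside the domain, and by definition $\interoptof{a}$ distributes over $\land$ and $\lor$ with $\interoptof{a}\ q=\bigvee\predecf{a}(\set{q})$. This makes $\interoptof{a}$ a positive substitution of propositions by formulas: positivity gives monotonicity with respect to implication, and distribution over $\land$ is exactly the statement that $\interoptof{a}$ commutes with finite conjunctions. By the reduction above, $\interoptof{a}$ is therefore $\meet$-continuous on the finite ground domain.

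Next I would handle $\interoptof{\opof{F}}$, which is an $\ell$-ary conjunction or disjunction according to the owner of $F$; here I check continuity in each argument with the others fixed. For a universally owned $F$ the interpretation is conjunction, and $(\fmla\land{-})$ commutes with meets because meet \emph{is} conjunction; monotonicity is immediate. For an existentially owned $F$ the interpretation is disjunction, and $(\fmla\lor{-})$ commutes with finite meets precisely because $\boolof{\QNFA}$ is distributive, so $\fmla\lor\bigmeet_i\psi_i=\bigmeet_i(\fmla\lor\psi_i)$; finiteness turns the meet over any descending chain into such a finite meet, and all the $\fmla\lor\psi_i$ stay in the domain by Lemma~\ref{Lemma:OptDefined}. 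Finally, $\interoptof{\wordend}=\bigvee Q_f$ is a ground constant, for which the continuity requirement is vacuous.

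The only step that is not entirely mechanical is the disjunction case for existentially owned non-terminals: unlike conjunction, disjunction does not commute with meets in an arbitrary lattice, so the argument genuinely relies on distributivity of $\boolof{\QNFA}$ together with finiteness. Everything else follows the same pattern as the abstract model (Lemma~\ref{lem:abs-continuous}), the sole difference being that the atomic propositions are individual states rather than sets of states.
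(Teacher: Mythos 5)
Your proof is correct and takes essentially the same approach as the paper, whose proof simply cites the finiteness-plus-distributivity argument of Lemma~\ref{lem:abs-continuous} for the case $s = a$ and the distributivity of $\vee$ over $\meet$ from Lemma~\ref{Lemma:ContinuityRHSConc} for $s = \opof{F}$ --- exactly the two ingredients (a finite ground domain and distributivity of the interpretations over finite conjunctions) that drive your argument. The only difference is cosmetic: your DCC observation in fact shows that on the finite optimized domain monotonicity alone already implies $\meet$-continuity, so the explicit appeal to distributivity in the disjunction case is sound but, strictly speaking, redundant.
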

We again show precision, enabling the required exact fixed-point transfer.
\begin{proposition}
\label{Proposition:PrecisionOpt}
    $\abs$ is precise.
    Hence, $\absof{\sol{\modelabs}}=\sol{\modelopt}$.
\end{proposition}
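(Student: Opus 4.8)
The plan is to establish that $\abs\colon\domainabsof{\ground}\to\domainoptof{\ground}$ is precise for the left model $\modelabs$ and the right model $\modelopt$, and then to read off $\absof{\sol{\modelabs}}=\sol{\modelopt}$ directly from Theorem~\ref{Theorem:FixedPointTransfer}. Three of the five conditions are immediate. Condition $\precisionsur$ holds by construction, since the optimized ground domain is \emph{defined} as the image $\absof{\boolof{\acceptfrom{T^*}}}$. For $\precisionbot$, both top elements are $\ftrue$, and $\absof{\ftrue}=\ftrue$ because $\abs$ leaves $\ftrue$ untouched and distributes over the connectives. For $\precisioncont$, observe that $\abs$ is a homomorphism for conjunction and hence preserves finite meets; since both ground domains contain only finitely many formulas up to logical equivalence, every descending chain is eventually stationary, so $\meet$-continuity reduces to preservation of finite meets.

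The heart of the argument is $\precisionrel$, and its crux is a single ground-level identity for the terminal $a$. For the endmarker, $\absof{\interabsof{\wordend}}=\absof{Q_f}=\bigvee Q_f=\interoptof{\wordend}$. For $a\colon\ground\to\ground$ I would first verify the commutation on a single atomic proposition $Q$, i.e.\ a set of states: since the predecessor computation distributes over unions, $\predec{a}{Q}=\bigcup_{q\in Q}\predec{a}{\set q}$, and therefore $\absof{\interabsof{a}\ Q}=\bigvee\predec{a}{Q}=\bigvee_{q\in Q}\bigvee\predec{a}{\set q}=\interoptof{a}\ (\bigvee Q)=\interoptof{a}\ \absof{Q}$, using that $\interoptof{a}$ distributes over disjunction. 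Because $\interabsof{a}$, $\interoptof{a}$, and $\abs$ all distribute over $\wedge$ and $\vee$, this lifts from atomic propositions to arbitrary formulas by a routine induction, establishing $\precisionrel$ for $a$. For the operators $\opof{F}$, both models interpret $\opof{F}$ as the same $\ell$-ary connective ($\wedge$ for $\playera$, $\vee$ for $\playere$), so the commutation is immediate from $\abs$ being a homomorphism for these connectives; the required equality of the partially applied functions then follows by extensionality from the fully applied case.

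It remains to verify $\precisioncomp$. All ground elements are compatible by definition, so for $a$ the only obligation is the indistinguishability condition, and this follows directly from the $\precisionrel$ identity already established: if $\absof{\vall}=\absof{\vall'}$ then $\absof{\interabsof{a}\ \vall}=\interoptof{a}\ \absof{\vall}=\interoptof{a}\ \absof{\vall'}=\absof{\interabsof{a}\ \vall'}$. For the operators $\opof{F}$ the partial applications remain connectives of ground formulas, so preservation of compatibility is trivial, and indistinguishability again reduces, level by level, to $\abs$ being a homomorphism for the connective in question. The one point deserving care --- and in my view the only genuine obstacle --- is that $\abs$ is not injective: it conflates a composite atomic proposition $\set{q_1,q_2}$ with the disjunction $\set{q_1}\vee\set{q_2}$ of its parts, both of which abstract to $q_1\vee q_2$. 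Soundness of the optimization hinges precisely on the interpretations being unable to separate such $\abs$-equivalent formulas after abstraction, which is exactly the distributivity $\predec{a}{\bigcup}=\bigcup\predec{a}{-}$ mirroring the distributivity of $\abs$ over $\vee$. With all five conditions in hand, Theorem~\ref{Theorem:FixedPointTransfer} yields $\absof{\sol{\modelabs}}=\sol{\modelopt}$.
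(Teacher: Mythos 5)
Your proposal is correct and follows essentially the same route as the paper: it checks the five precision conditions with the same arguments (surjectivity by construction of the optimized domain, trivial $\precisionbot$, continuity from finiteness of meets, the ground-level commutation $\absof{\interabsof{a}\ Q}=\interoptof{a}\ \absof{Q}$ via distributivity of $\predecf{a}$ over unions lifted to formulas, homomorphism of $\abs$ for the connectives interpreting $\opof{F}$, and compatibility via distributivity) and then invokes Theorem~\ref{Theorem:FixedPointTransfer}. The only cosmetic difference is that you inline the computation for $\wordend$ and $a$ that the paper delegates to Lemma~\ref{Lemma:OptDefined}.
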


\begin{theorem}\label{Theorem:GameOptimized}
$\sol{\modelopt}(S)$ is satisfied by $\set{q_0}$ iff Player~$\playere$ wins $\horsgame$.
\end{theorem}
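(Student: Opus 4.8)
The plan is to reduce this statement, via two exact fixed-point transfers already in hand, to the established correctness of the abstract semantics in Theorem~\ref{Theorem:GameAbstract}. The key observation is that the satisfaction condition transforms correctly under the optimizing abstraction $\abs:\boolof{\acceptfrom{T^*}}\rightarrow\boolof{\QNFA}$, just as the condition transformed correctly from $\modelconc$ to $\modelabs$ in the argument following Theorem~\ref{Theorem:GameAbstract}. By Proposition~\ref{Proposition:PrecisionOpt} we have $\absof{\sol{\modelabs}}=\sol{\modelopt}$, so it suffices to relate the two satisfaction conditions across $\abs$.

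First I would make precise how an assignment on the optimized propositions (states in $\QNFA$) corresponds to an assignment on the abstract propositions (sets in $\acceptfrom{T^*}$). The optimized abstraction sends a set $Q$ to the disjunction $\bigvee_{q\in Q}q$. Under the singleton-state assignment $\set{q_0}$ that makes only $q_0$ true, a disjunction $\bigvee_{q\in Q}q$ evaluates to true exactly when $q_0\in Q$. Thus $\set{q_0}$ (as an assignment over $\QNFA$) satisfies $\absof{\fmla}$ if and only if the assignment $\Set{Q\in\acceptfrom{T^*}}{q_0\in Q}$ (as an assignment over $\acceptfrom{T^*}$) satisfies $\fmla$. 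This equivalence is the content of a short structural induction over formulas: the atomic case is precisely the observation just made, and the Boolean connectives are preserved because $\abs$ distributes over $\land$ and $\lor$ by definition, while $\ftrue$ is mapped to $\ftrue$.

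With this bridge established, I would chain the equivalences. Since $\sol{\modelopt}=\absof{\sol{\modelabs}}$ and in particular $\sol{\modelopt}(S)=\absof{\sol{\modelabs}(S)}$, the induction of the previous paragraph gives that $\set{q_0}$ satisfies $\sol{\modelopt}(S)$ if and only if $\Set{Q\in\acceptfrom{T^*}}{q_0\in Q}$ satisfies $\sol{\modelabs}(S)$. By Theorem~\ref{Theorem:GameAbstract}, the latter holds if and only if Player~$\playere$ wins $\horsgame$, which closes the argument.

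The only genuinely delicate point is the atomic case of the structural induction, since one must be careful that the correspondence between the two assignments is exact and not merely an implication in one direction: the optimized assignment distinguishes only whether $q_0$ lies in a state set, whereas the abstract assignment ranges over all of $\acceptfrom{T^*}$. The reason no information is lost is that $\sol{\modelopt}(S)$ arises as $\absof{\sol{\modelabs}(S)}$, so every atomic proposition occurring in it is of the form $\absof{Q}=\bigvee_{q\in Q}q$ with $Q\in\acceptfrom{T^*}$; we never need to evaluate the optimized formula on propositions outside this image. Hence the equivalence is tight, and the theorem follows. Everything else is the routine bookkeeping of transporting a satisfaction condition through a homomorphism of Boolean structure.
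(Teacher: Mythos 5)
Your proposal is correct and follows essentially the same route as the paper's own argument: both use Proposition~\ref{Proposition:PrecisionOpt} to get $\absof{\sol{\modelabs}(S)}=\sol{\modelopt}(S)$, observe that an atomic proposition $Q$ satisfies $q_0 \in Q$ iff its image $\bigvee Q$ is satisfied by $\set{q_0}$, propagate this equivalence through the coinciding Boolean structure, and then conclude via Theorem~\ref{Theorem:GameAbstract}. Your explicit remark about why the correspondence of assignments is tight (every atom of $\sol{\modelopt}(S)$ lies in the image of $\abs$) is a point the paper uses implicitly; making it explicit is a fine addition but not a different approach.
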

It is sufficient to show $\sol{\modelabs}(S)$ is satisfied by $\Set{Q \in \acceptfrom{T^*}}{q_0 \in Q}$ iff $\sol{\modelopt}(S)$ is satisfied by $\set{q_0}$.
Theorem~\ref{Theorem:GameAbstract} then yields the statement. 
Propositions $Q$ in $\sol{\modelabs}(S)$ are resolved into disjunctions $\bigvee Q$ in $\sol{\modelopt}(S)$.
For such a proposition, we have $Q\in \Set{Q \in \acceptfrom{T^*}}{q_0 \in Q}$ iff $\bigvee Q$ is satisfied by $\set{q_0}$.
This equivalence propagates to the formulas $\sol{\modelabs}(S)$ and $\sol{\modelopt}(S)$ as the Boolean structure coincides.
The latter follows from $\absof{\sol{\modelabs}(S)}=\sol{\modelopt}(S)$.

\subparagraph*{Complexity.}

To solve $\gamehorsreg$, we compute the semantics $\sol{\modelopt}$ and then evaluate $\sol{\modelopt}(S)$ at the assignment $\set{q_0}$.
For the complexity, assume that the highest order of any non-terminal in $\horsgame$ is $k$.
We show the number of iterations needed to compute the greatest fixed point is at most  $(k+1)$-times exponential.
We do this via a suitable upper bound on the length of strictly descending chains in the domains assigned by $\domainopt$.

\begin{proposition}
\label{lem:membership}
    The semantics $\sol{\modelopt}$ can be computed in $(k+1)\EXPTIME$, where $k$ is the highest order of any non-terminal in the input scheme.
\end{proposition}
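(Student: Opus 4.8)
The plan is to bound the number of Kleene iteration steps by the height of the domain at the ground kind and then propagate this bound through the function-kind construction by a tower of exponentials, one exponential per order. Since the optimized ground domain $\domainoptof{\ground}$ consists of positive Boolean formulas over the finitely many atomic propositions $\QNFA$ (taken up to logical equivalence), its number of elements is at most doubly exponential in $|\QNFA|$, and its height $\domheight{\domainoptof{\ground}}$ — the length of the longest strictly descending chain — is also at most doubly exponential, hence $1$-times exponential in the size of the input after accounting for $|\QNFA|$. This is the base case ($k=0$). By the descending chain condition for finite cppos, Kleene iteration stabilizes after at most $\domheight{\domainoptof{\ground}}$ steps, so for an order-$0$ scheme the fixed point is reached in at most exponentially many iterations.

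For the inductive step I would analyze how the height grows under the functional lifting $\domainoptof{\kappa_1 \rightarrow \kappa_2} = \contfun{\domainoptof{\kappa_1}}{\domainoptof{\kappa_2}}$. The key combinatorial fact is that a monotone/continuous function from a finite poset $A$ into a finite poset $B$ can be represented by its values, and the height of the function space $\contfun{A}{B}$ is bounded by roughly $|A| \cdot \domheight{B}$, while the cardinality is at most $|B|^{|A|}$. The point is that each arrow in the kind raises the order by one and multiplies the relevant size bound into the exponent, so an order-$k$ kind yields a domain whose height is $(k+1)$-times exponential in the input size. Concretely, I would prove by induction on $\horsorder{\kappa}$ a statement of the form: $\domheight{\domainoptof{\kappa}}$ and $\log|\domainoptof{\kappa}|$ are bounded by a tower of height $\horsorder{\kappa}+1$. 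The base case uses the doubly exponential cardinality of $\domainoptof{\ground}$; the step uses that passing from $\kappa_2$ to $\kappa_1 \rightarrow \kappa_2$ exponentiates the cardinality (because functions are determined by their value tables) and that the order strictly increases.

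Having bounded $\domheight{\domainoptof{\kappa}}$ for every kind $\kappa$ by an $(\horsorder{\kappa}+1)$-fold exponential, I would combine this across all non-terminals. The function $\rhsfunopt$ operates on the product domain $\prod_{F \in N} \domainoptof{\kappa_F}$, and since the product of finitely many domains has height equal to the sum of the individual heights, the overall height is dominated by the single largest contribution, namely that of the non-terminal of maximal order $k$. By $\precisioncont$ and Kleene's theorem the iteration $\soliter{\modelopt}{i+1} = \rhsfunof{\modelopt}{\soliter{\modelopt}{i}}$ is a descending chain that stabilizes after at most this many steps. Each individual iteration step — evaluating the right-hand sides $\semunder{\modelopt}{t_j}$ — costs at most polynomial time in the (at most $(k+1)$-fold exponential) size of the domain elements, using $\meet$-continuity and the explicit interpretations of $\predecf{a}$ and the Boolean connectives. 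Multiplying the $(k+1)$-fold exponential bound on the number of iterations by the per-step cost keeps us within $(k+1)\EXPTIME$, and the final evaluation of $\sol{\modelopt}(S)$ at $\set{q_0}$ is then immediate by Theorem~\ref{Theorem:GameOptimized}.

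The main obstacle I anticipate is making the height bound on $\contfun{A}{B}$ tight enough and provably monotone across the induction: a naive bound on chains in a function space can lose an extra exponential, which is exactly the gap between the abstract model (where the height was $(k+2)$-fold exponential) and the optimized one (the desired $(k+1)$-fold). The essential improvement exploited here is that resolving sets of states $Q$ into disjunctions $\bigvee Q$ shrinks the ground domain from $\boolof{\acceptfrom{T^*}}$ (formulas over subsets of states, hence a triply-exponential count) down to $\boolof{\QNFA}$ (formulas over individual states, only doubly exponential). I would therefore take care to use the \emph{optimized} ground domain's improved cardinality as the base of the tower, and to verify that the functional lifting preserves exactly one exponential per order rather than inflating it; this careful accounting of where the saved exponential comes from is the technical heart of the proposition.
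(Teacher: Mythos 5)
Your overall plan coincides with the paper's: bound the number of Kleene iterations by the height of the product domain over all non-terminals, propagate height and cardinality through the functional lifting via $\domheight{\contfun{A}{B}} \leq \card{A} \cdot \domheight{B}$ and $\card{\contfun{A}{B}} \leq \card{B}^{\card{A}}$, and charge polynomial time per operation on explicit representations. Your stated invariant --- $\domheight{\domainoptof{\kappa}}$ and $\log\card{\domainoptof{\kappa}}$ bounded by a tower of height $\horsorder{\kappa}+1$ --- is exactly the right one. However, your base case does not establish this invariant, and that is a genuine gap. You claim the height of $\domainoptof{\ground}$ is ``at most doubly exponential, hence $1$-times exponential''; the first bound (inherited from the cardinality) is too weak, and the ``hence'' is a non-sequitur. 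If the ground height is only known to be doubly exponential, the induction delivers $(k+2)$-fold exponential height at order $k$, hence a $(k+2)\mathsf{EXP}$ algorithm --- precisely the complexity of the unoptimized abstract model that the proposition is meant to beat. The missing argument, which is the technical heart of the paper's proof, is that the height of $\boolof{\QNFA}$ up to logical equivalence is genuinely \emph{singly} exponential: a strictly descending chain of formulas $(\fmla_j)_j$ induces a strictly monotone chain of the sets $\calQ_j = \Set{Q \subseteq \QNFA}{Q \text{ satisfies } \fmla_j}$ in $\pwrset{\pwrset{\QNFA}}$, and since each step must change membership of at least one assignment and there are only $\card{\pwrset{\QNFA}} = 2^{\card{\QNFA}}$ assignments, such a chain has length at most $2^{\card{\QNFA}}$. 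Height and cardinality must be tracked as genuinely different quantities: the saved exponential comes from the ground \emph{height} dropping to $2^{\card{\QNFA}}$ (because the atomic propositions drop from exponentially many sets of states to polynomially many states), not merely from the cardinality dropping from triply to doubly exponential, as your last paragraph suggests. As written, your accounting falls into exactly the trap you yourself flag as the main obstacle.

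A secondary, fixable issue: your induction ``on $\horsorder{\kappa}$'' does not go through as stated, because for $\kappa = \kappa_1 \to \kappa_2$ one has $\horsorder{\kappa_2} \leq \horsorder{\kappa}$ with equality possible, so the induction hypothesis cannot be applied to $\kappa_2$. The paper handles this with a nested induction: an outer induction on the order and an inner induction on the arity --- only $\kappa_1$ is guaranteed to have strictly smaller order, while $\kappa_2$ has order at most that of $\kappa$ but strictly smaller arity. A structural induction on kinds would serve equally well, but some such refinement is needed for the step case to be well-founded.
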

The lower bound is via a reduction from the word membership problem for alternating $k$-iterated pushdown automata with polynomially-bounded auxiliary work-tape.
This problem was shown by Engelfriet to be $(k+1)\mathsf{EXP}$-hard.
We can reduce this problem to $\gamehorsreg$ via well-known translations between iterated stack automata and recursion schemes, using the regular language specifying the winning condition to help simulate the work-tape.

\begin{proposition}
\label{lem:hardness}
    Determining whether Player $\playere$ wins $\horsgame$ is $(k+1)\mathsf{EXP}$-hard for $k > 0$.
\end{proposition}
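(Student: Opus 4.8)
The statement to prove is Proposition~\ref{lem:hardness}: that deciding whether Player~$\playere$ wins a higher-order game is $(k+1)\mathsf{EXP}$-hard for order-$k$ schemes.

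\medskip

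The plan is to reduce from the word membership problem for alternating $k$-iterated pushdown automata with a polynomially-bounded auxiliary work-tape, which Engelfriet showed to be $(k+1)\mathsf{EXP}$-hard. The overall strategy has three ingredients. First, I would invoke the well-known correspondence between $k$-iterated pushdown automata and order-$k$ recursion schemes (cf.\ the citations \cite{Damm82,DG86,KNU02,HMOS08} in the related-work discussion): an alternating $k$-iterated pushdown automaton can be translated into an order-$k$ scheme whose derivation process simulates the runs of the automaton, with the existential/universal alternation of the automaton matched by the ownership partition $O$ on non-terminals. Second, I would let the scheme emit, as a generated word, a description of the automaton's behaviour along a run—crucially including the contents and operations on the polynomially-bounded work-tape. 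Third, I would design the NFA $A$ so that $\lang{A}$ is exactly the set of words encoding \emph{rejecting} or \emph{inconsistent} work-tape behaviour, so that Player~$\playere$ (who wins by producing a word in $\lang{A}$, or by forcing an infinite play) can win precisely when the automaton rejects, and dually Player~$\playera$ can force a word outside $\lang{A}$ exactly when the automaton accepts.

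\medskip

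The key steps, in order, are as follows. First, fix the encoding: a word generated by the scheme should spell out the sequence of work-tape configurations (or the transitions touching the work-tape) produced along one branch of the alternating computation. Since the work-tape is only polynomially bounded, each tape configuration has polynomial length and can be tracked by a regular language; this is the point where the regular winning condition ``helps simulate the work-tape'', as the paper remarks. Second, build the order-$k$ scheme from the iterated pushdown automaton so that the higher-order recursion faithfully simulates the $k$-fold iterated stack, while the existential player's rule choices correspond to the automaton's existential transitions and the universal player's choices to its universal transitions. Third, let $A$ check, along the generated word, that the work-tape operations are consistent with the claimed transitions and that the run is rejecting; a single NFA of polynomial size suffices because at each step only a polynomial-size tape snapshot must be verified against the next. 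Fourth, verify the equivalence of winners: Player~$\playere$ has a winning strategy in $\horsgame$ iff the alternating automaton does \emph{not} accept the input word (matching the convention that infinite plays and words in $\lang{A}$ are won by~$\playere$), so that the membership question is answered by the game. Finally, check that the reduction is polynomial-time computable and that the constructed scheme has order exactly~$k$, yielding $(k+1)\mathsf{EXP}$-hardness.

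\medskip

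The main obstacle I anticipate is the faithful simulation of the polynomially-bounded work-tape \emph{within the regular winning condition} rather than inside the scheme. The iterated-stack-to-scheme translation handles the higher-order store, but the auxiliary tape must be threaded through the generated word and validated by the finite automaton $A$; the delicate part is ensuring that $A$, which reads the single linear output word, can enforce both local consistency of successive tape snapshots and the global acceptance/rejection condition while remaining of size polynomial in the input. A second subtlety is aligning the alternation: one must be careful that the ownership map $O$ on non-terminals reproduces exactly the existential and universal branching of the alternating automaton, and that the ``infinite plays are won by~$\playere$'' convention does not spuriously give~$\playere$ extra winning plays—this typically requires arranging the scheme so that every maximal play terminates, or so that non-terminating plays correspond to the intended outcome. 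Once the encoding and the automaton are set up correctly, the correctness argument is a routine induction matching plays of $\horsgame$ to computations of the automaton, and the complexity bound follows directly from Engelfriet's lower bound together with the polynomiality of the reduction.
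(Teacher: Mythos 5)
Your overall route is the same as the paper's: reduce from Engelfriet's word-membership problem for alternating $k$-iterated pushdown automata with a polynomially bounded work-tape, remove the work-tape by having an ordinary $k$-iterated pushdown automaton output guessed tape snapshots, let a polynomial-size NFA $A$ catch errors in the guessed sequence, and pass to schemes via the standard pushdown--scheme correspondence with an ownership map on non-terminals. There is, however, a genuine flaw in how you assign ownership. In your second step you keep the alternation aligned (``the existential player's rule choices correspond to the automaton's existential transitions and the universal player's choices to its universal transitions''), and this contradicts the winning-condition design and the equivalence you claim in step four. Since $\lang{A}$ consists of the words exhibiting an error, a word \emph{outside} $\lang{A}$ is precisely a finite witness that the automaton accepts; hence the game player whose goal is to produce a word outside $\lang{A}$, namely Player~$\playera$, must be the one controlling the automaton's \emph{existential} (acceptance-seeking) transitions, and must also be the one emitting the work-tape guesses. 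With aligned ownership the reduction fails: when the automaton accepts the input word, Player~$\playere$ of the game --- who by your assignment controls the existential transitions --- can simply sabotage the simulation, steering the run away from acceptance (so that no end marker $\wordend$ is ever produced and the play is infinite) or, if $\playere$ also made the guesses, guessing wrongly on purpose so the output lands in $\lang{A}$. In either case $\playere$ wins even though the automaton accepts, so ``Player~$\playere$ wins iff the automaton rejects'' is false.

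The paper is explicit on exactly this point (``Note, here, the reversal of the roles of the Players''): control states belonging to $\playera$ in $B$ become $\playere$-owned in the game, and \emph{all other} control states --- including those of $B$'s existential player and the intermediate states that output the work-tape guesses --- become $\playera$-owned. Note also that the polarity you state in step four is not merely a convention but is forced: rejection by an alternating pushdown automaton can manifest as an infinite run, and infinite plays are won by $\playere$, so $\playere$ must be the player who wins under rejection, while acceptance must be certified by a finite, error-free word produced by $\playera$. (This also rules out the alternative fix of keeping aligned ownership and flipping $A$ to accept the consistent accepting encodings.) The repair is local --- reverse the ownership map in your second step and give the guessing states to $\playera$ --- after which the rest of your argument goes through essentially as in the paper.
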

Together, these results show the following corollary and final result.

\begin{corollary}
    \label{Theorem:Complexity}
    $\gamehorsreg$ is $(k+1)\mathsf{EXP}$-complete for order-$k$ schemes and $k > 0$.
\end{corollary}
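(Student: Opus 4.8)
The plan is to derive completeness by combining the two preceding propositions: the algorithmic upper bound of Proposition~\ref{lem:membership} and the matching lower bound of Proposition~\ref{lem:hardness}. For the upper bound, I would spell out the decision procedure for $\gamehorsreg$ obtained from the optimized model. Given an instance $\horsgame$, the procedure computes the semantics $\sol{\modelopt}$ by Kleene iteration in the finite optimized domain, and then evaluates the ground formula $\sol{\modelopt}(S)$ under the assignment $\set{q_0}$, reporting that Player~$\playere$ wins precisely when the formula evaluates to $\ftrue$. Correctness of this test is exactly Theorem~\ref{Theorem:GameOptimized}, and the cost of producing $\sol{\modelopt}$ is the $(k+1)\EXPTIME$ bound of Proposition~\ref{lem:membership}. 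The concluding evaluation merely reads off the truth value of a single formula over the finite propositions $\QNFA$ under one assignment, which is dominated by the iteration cost; hence the whole procedure runs in $(k+1)\mathsf{EXP}$.

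For the lower bound, I would invoke Proposition~\ref{lem:hardness} directly, which establishes $(k+1)\mathsf{EXP}$-hardness for $k > 0$ via the reduction from word membership for alternating $k$-iterated pushdown automata with polynomially bounded work tape. Pairing the upper and lower bounds yields $(k+1)\mathsf{EXP}$-completeness of $\gamehorsreg$ for order-$k$ schemes with $k > 0$, as claimed.

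Granting the two propositions, the corollary itself is immediate, so no step of the combination is genuinely difficult. The real content lies in Proposition~\ref{lem:membership}, and that is where I would locate the main obstacle: since Kleene iteration over a domain satisfying the descending chain condition stabilizes after a number of steps bounded by the length of its longest strictly descending chain, the complexity hinges on bounding this height for the optimized ground domain $\absof{\boolof{\acceptfrom{T^*}}}$ and for its functional liftings. I would argue by induction on kinds that an order-$k$ symbol has a domain of height at most $(k+1)$-times exponential in $\lvert\QNFA\rvert$, using that the ground domain of positive formulas over the $\lvert\QNFA\rvert$ atomic propositions already has only singly exponential height (each strict implication step removing at least one satisfying assignment), and that forming the continuous-function space adds exactly one exponential per order. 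This is precisely the saving over the unoptimized domain $\modelabs$, whose set-valued propositions push the ground height to doubly exponential and thereby shift every bound up by one.
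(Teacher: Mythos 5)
Your proposal is correct and takes essentially the same route as the paper: the corollary is obtained exactly by pairing the upper bound of Proposition~\ref{lem:membership} (Kleene iteration over the finite optimized domain, with correctness of the final test given by Theorem~\ref{Theorem:GameOptimized}) with the lower bound of Proposition~\ref{lem:hardness}. Your side remark locating the real work in the height bound also tracks the paper's proof of Proposition~\ref{lem:membership}, with one small caveat: the paper's induction on kinds must carry the \emph{cardinality} of each domain along with its height, since the height of a function domain $\contfun{\domainoptof{\kappa_1}}{\domainoptof{\kappa_2}}$ is bounded by $h(\domainoptof{\kappa_2}) \cdot \card{\domainoptof{\kappa_1}}$, so the slogan that each order adds one exponential is the conclusion of a two-parameter induction rather than its mechanism.
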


%

\newpage
\bibliographystyle{plainurl}
\bibliography{cited}

\newpage
\appendix

\section{Relation to Higher-Order Model Checing}
\label{Appendix:Introduction}

We elaborate on the relation of our work to the influential line of research on intersection types as pioneered by~\cite{KO09}.
With intersection types, it is usually proven that \emph{there is} a word or tree derivable by a HORS that is accepted by an automaton, i.e.\ a well-typed type environment can be certificate for the non-emptiness of the intersection $\lang{\mathit{scheme}} \cap \lang{\mathit{Automaton}} \neq \emptyset$.
If the HORS is deterministic, $\lang{\mathit{scheme}}$ consists of a single tree, so this is also decides the inclusion $\lang{\mathit{scheme}} \subseteq \lang{\mathit{Automaton}}$.
If we naively extend intersection types to non-deterministic schemes, this is not true anymore.
To prove the inclusion in this case, we will need to complement the automaton and prove the emptiness of the intersection, \ie $\lang{\mathit{scheme}} \cap \lang{\overline{\mathit{Automaton}}} = \emptyset$.
Note that a well-typing (a well-typed type environment) cannot prove the emptiness by itself: If the type for the initial symbol does not contain a transition from a final to an initial state, that can either stem from the non-existence of a such a transition sequence, or from the typing not being strong enough.
For example, the empty typing that does not assign any type to any symbol (or the empty intersection, if you want), is a well-typing and does not prove anything.
Therefore, an algorithm that decides the non-emptiness of the intersection by using intersection-types has to guarantee that it constructs a well-typing strong enough to prove the existence of an accepting transition sequence if such a sequence exists.
Note that algorithms that compute intersection types usually allow alternating automata as the specification.
It is conceptually easier to complement an alternating automaton than it is to complement a non-deterministic automaton: The transition for each origin and label is given as a Boolean formula, and we can get the complement automaton by considering the dual formula (\ie the formula in which conjunctions and disjunctions are swapped).
Note that usually, the transition formulas are normalized to disjunctive normal form (DNF), so computing the dual formula (which will then be in CNF) and re-normalizing it to DNF can lead to an exponential blowup.

Work by Neatherway~\textit{et al.}~\cite{NRO12} and Ramsay~\cite{R13} considers schemes with non-determinism in the form of case statements.
To handle this non-determinism they introduce union types as a ground type.
Neatherway~\textit{et al.} give an optimised algorithm for checking such schemes against deterministic trivial automata (where all infinite runs are accepting -- i.e.\ a B\"uchi condition where all states are accepting).
In his thesis, Ramsay extends this to checking non-deterministic schemes against non-deterministic trivial automata using abstract interpretation from schemes to types.
In our work, we generalise non-determinism to games (played over word-generating schemes), with a non-deterministic target language.


\section{Proofs for Section~\ref{Section:HORS}}

\subsection{Proof of Proposition~\ref{prop:sem-cont}}
\label{Section:sem-cont-proof}

\begin{proof}
    Let $\left(\nu_i\right)_{i \in \N}$ be a descending chain of evaluations, \ie $\nu_i \geq \nu_{i+1}$ for all $i \in \N$.
    It is to show that for all $t$, $\semunder{\model}{t}$ is $\meet$-continuous (in the argument $\nu$) over the respective lattice, \ie
    \[
        \semunder{\model}{t} \left( \bigsqcap_{i \in \N} \nu_i \right)
        =
        \bigsqcap\limits_{i \in \N} (\semunder{\model}{F}\ \nu_i)
        \ .
    \]
    We proceed by induction over $t$.
    \begin{enumerate}
        \item
            \textsfbf{Case $t = F$ or $t = x$.}\\
            Both of these cases are identical, hence we only show the former.
            We have
            \[
                \semunder{\model}{F}\ (\bigsqcap\limits_{i \in \N} \nu_i) =
                (\bigsqcap\limits_{i \in \N} \nu_i)(F) =
                \bigsqcap\limits_{i \in \N} (\nu_i(F)) =
                \bigsqcap\limits_{i \in \N} (\semunder{\model}{F}\ \nu_i)
            \]
            where the first and final equalities are by definition of the concrete semantics, and the second is by definition of $\meet$ over valuations $\nu_i$.
        \item
            \textsfbf{Case $t = s$ for some terminal $s$.}\\
            Similar to the previous case, we have
            \[
                \semunder{\model}{s}\ (\bigsqcap\limits_{i \in \N} \nu_i) =
                \interof{s} =
                \bigsqcap\limits_{i \in \N} (\semunder{\model}{s}\ \nu_i)
            \]
            by definition.

        \item
            \textsfbf{Case $t = t_1\ t_2$.}\\
            We have
            \begin{align*}
                &\ \semunder{\model}{t_1\ t_2}\ (\bigsqcap\limits_{i \in \N} \nu_i)
                \\
                \text{(Definition of semantics)}
                =&\ %
                (\semunder{\model}{t_1}\ (\bigsqcap\limits_{i \in \N} \nu_i))\ %
                (\semunder{\model}{t_2}\ (\bigsqcap\limits_{i \in \N} \nu_i))
                \\
                \text{(Induction hypothesis)}
                =&\ %
                (\bigsqcap\limits_{i \in \N} (\semunder{\model}{t_1}\ \nu_i))\ %
                (\bigsqcap\limits_{i \in \N} (\semunder{\model}{t_2}\ \nu_i))
                \\
                \text{(Definition of $\meet$ for functions)}
                =&\ %
                \bigsqcap\limits_{i \in \N} (
                    (\semunder{\model}{t_1}\ \nu_i)\ %
                    (\bigsqcap\limits_{i \in \N} (\semunder{\model}{t_2}\ \nu_i))
                )
                \\
                \text{(Continuity of $\semunder{\model}{t_1}\ \nu_i \in \domain$)}
                =&\ %
                \bigsqcap\limits_{i \in \N}
                \bigsqcap\limits_{j \in \N} (
                    (\semunder{\model}{t_1}\ \nu_i)\ (\semunder{\model}{t_2}\ \nu_j))
                )
                \\
                \text{(Argued below)}
                =&\ %
                \bigsqcap\limits_{i \in \N} (
                    (\semunder{\model}{t_1}\ \nu_i)\ (\semunder{\model}{t_2}\ \nu_i))
                )
                \\
                \text{(Definition of semantics)}
                =&\ %
                \bigsqcap\limits_{i \in \N} (\semunder{\model}{t_1\ t_2}\ \nu_i) \ .
            \end{align*}
            We have to argue the step indicated above.  That is,
            \[
                \bigsqcap\limits_{i \in \N}
                \bigsqcap\limits_{j \in \N} (
                    (\semunder{\model}{t_1}\ \nu_i)\ (\semunder{\model}{t_2}\ \nu_j))
                )
                =
                \bigsqcap\limits_{i \in \N} (
                    (\semunder{\model}{t_1}\ \nu_i)\ (\semunder{\model}{t_2}\ \nu_i))
                ) \ .
            \]
            The right-hand side is greater than the left-hand side, because terms of the form
            $((\semunder{\model}{t_1}\ \nu_i)\ (\semunder{\model}{t_2}\ \nu_j))$
            where $\nu_i \neq \nu_j$ are missing in the RHS.
            To see that it is in fact equal, note that for two indices $i,j \in \N$, we have either
            $\nu_i \leq \nu_j$
            or
            $\nu_j \leq \nu_i$,
            since the valuations form a descending chain.
            Let $m = \min \set{i,j}$.
            We now use that $\meet$-continuity implies monotonicity, and thus we have
            \[
                ((\semunder{\model}{t_1}\ \nu_m)\ (\semunder{\model}{t_2}\ \nu_m))
                 \leq
                ((\semunder{\model}{t_1}\ \nu_i)\ (\semunder{\model}{t_2}\ \nu_j))
                \ .
            \]
            Hence, for any expression $((\semunder{\model}{t_1}\ \nu_i)\ (\semunder{\model}{t_2}\ \nu_j))$ that is missing in the meet in the RHS, the meet in the RHS contains an expression that is smaller, hence, they are equal.

        \item
            \textsfbf{Case $t = \lambda x . t'$.}\\
            We have
            \begin{align*}
                &\ \semunder{\model}{\lambda x . t'}\ (\bigsqcap\limits_{i \in \N} \nu_i)
                \\
                \text{(Definition of semantics)}
                =&\ %
                v \mapsto (\semunder{\model}{t'}\ (\bigsqcap\limits_{i \in \N} \nu_i)[x \mapsto v])
                \\
                \text{(Induction hypothesis)}
                =&\ %
                v \mapsto (\bigsqcap\limits_{i \in \N} (\semunder{\model}{t'}\ \nu_i[x \mapsto v]))
                \\
                \text{(Definition of $\meet$ for functions)}
                =&\ %
                \bigsqcap\limits_{i \in \N} (
                    (v \mapsto \semunder{\model}{t_1}\ \nu_i[x \mapsto v])
                )
                \\
                \text{(Definition of semantics)}
                =&\ %
                \bigsqcap\limits_{i \in \N} (\semunder{\model}{\lambda x . t'}\ \nu_i) \ .
            \end{align*}
    \end{enumerate}
\end{proof}

\subsection{Substitution Lemma}
\label{sec:substitution-proof}

Since we have not syntactically defined the evaluation of a $\lambda$-term, our development will need a simple substitution lemma.

\begin{lemma}
\label{lemma:substitution-conc-games}
    For all $\nu : N\disunion V\prightarrow \domain$, we have
    $
        \semunder{\model}{(\lambda x . t)\ t'}\ \nu
        =
        \semunder{\model}{t[x \mapsto t']}\ {\nu}
    $.
\end{lemma}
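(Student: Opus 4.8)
The plan is to first unfold the left-hand side using the semantic clauses and then reduce the statement to the classical denotational substitution identity, which I would prove by structural induction on $t$. Applying the clause for application and then the clause for abstraction, instantiated at the argument $\semunder{\model}{t'}\ \nu$, gives
\[
    \semunder{\model}{(\lambda x . t)\ t'}\ \nu
    = (\semunder{\model}{\lambda x . t}\ \nu)\ (\semunder{\model}{t'}\ \nu)
    = \semunder{\model}{t}\ \nu[x \mapsto \semunder{\model}{t'}\ \nu]\ .
\]
It therefore suffices to establish, for every term $t$ and every valuation $\nu$, the identity
\[
    \semunder{\model}{t}\ \nu[x \mapsto \semunder{\model}{t'}\ \nu] = \semunder{\model}{t[x \mapsto t']}\ \nu\ ,
\]
and the whole lemma reduces to proving this by induction on the structure of $t$.

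The base cases are immediate from the semantic clauses. For $t = x$ both sides equal $\semunder{\model}{t'}\ \nu$, since $x[x \mapsto t'] = t'$ and the updated valuation returns $\semunder{\model}{t'}\ \nu$ at $x$. For a variable $y \neq x$, a non-terminal $F$, or a terminal $s$, the substitution acts as the identity and the update at $x$ is irrelevant, so both sides coincide with $\nu(y)$, $\nu(F)$, and $\interof{s}$ respectively. The application case $t = t_1\ t_2$ follows routinely: I would apply the induction hypothesis separately to $t_1$ and $t_2$ and recombine the results using the application clause, noting that substitution distributes over application.

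The main obstacle will be the abstraction case $t = \lambda y . t_1$, which requires care with variable capture. Here I would first invoke $\alpha$-renaming to assume $y \neq x$ and that $y$ does not occur free in $t'$, so that substitution commutes with the binder, $(\lambda y . t_1)[x \mapsto t'] = \lambda y . (t_1[x \mapsto t'])$. Unfolding the abstraction clause on both sides and using that two functions are equal iff they agree on every argument, the goal becomes, for each $d \in \domainof{\kappa}$,
\[
    \semunder{\model}{t_1}\ \nu[x \mapsto \semunder{\model}{t'}\ \nu][y \mapsto d]
    = \semunder{\model}{t_1[x \mapsto t']}\ \nu[y \mapsto d]\ .
\]
To bring this into the shape of the induction hypothesis at the valuation $\nu[y \mapsto d]$, I need two auxiliary observations. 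First, a coincidence lemma: $\semunder{\model}{t'}$ depends only on the valuation of the free variables of $t'$, so since $y \notin \mathrm{fv}(t')$ we may replace $\semunder{\model}{t'}\ \nu$ by $\semunder{\model}{t'}\ \nu[y \mapsto d]$. Second, the valuation-swap identity $\nu[x \mapsto a][y \mapsto d] = \nu[y \mapsto d][x \mapsto a]$ for $x \neq y$, which lets me reorder the two updates so that the outermost update is at $y$ and the induction hypothesis for $t_1$ at $\nu[y \mapsto d]$ applies directly.

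Both auxiliary facts are themselves routine structural inductions and require no appeal to $\meet$-continuity, since the entire argument is a pointwise equality at a fixed valuation. The coincidence lemma is the only one meriting a short separate statement; I would record it as a preliminary observation before carrying out the main induction. With these in hand, the abstraction case closes and the induction is complete.
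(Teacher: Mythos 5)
Your proof follows essentially the same route as the paper's: unfold the application and abstraction clauses to reduce the lemma to the identity $\semunder{\model}{t}\ \nu[x \mapsto \semunder{\model}{t'}\ \nu] = \semunder{\model}{t[x \mapsto t']}\ \nu$, then establish that identity by structural induction on $t$, with identical base and application cases and the abstraction case handled via $\alpha$-renaming. In that last case you are in fact slightly more careful than the paper, which silently merges the coincidence property of $\semunder{\model}{t'}$ and the reordering of valuation updates into a single step claimed \emph{by induction}; making these two auxiliary facts explicit is a correct refinement of the same argument, not a different approach.
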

\begin{proof}
    We show that for all $\nu : N\disunion V\prightarrow \domain$ and all suitable terms $t, t'$, we have
    \[
        \semunder{\model}{(\lambda x . t)\ t'}\ \nu
        =
        \semunder{\model}{t[x \mapsto t']}\ {\nu}
        \ .
    \]
    We have by definition
    \[
        \semunder{\model}{(\lambda x . t)\ t'}\ {\nu}
        =
        (\semunder{\model}{(\lambda x . t)}\ {\nu})\ (\semunder{\model}{t'}\ {\nu})
        =
        \semunder{\model}{t}\ (\nu[x \mapsto \semunder{\model}{t'}\ {\nu}])
    \]
    and show by induction over $t$ that
    \[
        \semunder{\model}{t}\ \nu[x \mapsto \semunder{\model}{t'}\ {\nu}]
        =
        \semunder{\model}{t[x \mapsto t']}\ \nu \ .
    \]
    In the base cases we have
    \begin{enumerate}
        \item
            $
                \semunder{\model}{F}\ {\nu[x \mapsto \semunder{\model}{t'}\ {\nu}]}
                =
                \left( \nu[x \mapsto \semunder{\model}{t'}\ {\nu}] \right) (F)
                =
                \nu (F)
                =
                \semunder{\model}{F}\ {\nu}
                =
                \semunder{\model}{F[x \mapsto t']}\ {\nu}
            $,
        \item
            $
                \semunder{\model}{s}\ {\nu[x \mapsto \semunder{\model}{t'}\ {\nu}]}
                =
                \inter (s)
                =
                \semunder{\model}{s}\ {\nu}
                =
                \semunder{\model}{s[x \mapsto t']}\ {\nu}
            $,
        \item
            $
                \semunder{\model}{x}\ {\nu[x \mapsto \semunder{\model}{t'}\ {\nu}]}
                =
                \semunder{\model}{t'}\ {\nu}
                =
                \semunder{\model}{x[x \mapsto t']}\ {\nu}
            $, and

        \item
            $
                \semunder{\model}{y}\ {\nu[x \mapsto \semunder{\model}{t'}\ {\nu}]}
                =
                \left( \nu[x \mapsto \semunder{\model}{t'}\ {\nu}] \right) (y)
                =
                \nu (y)
                =
                \semunder{\model}{y}\ {\nu}
                =
                \semunder{\model}{y[x \mapsto t']}\ {\nu}
            $,
            for variable $y \neq x$.
    \end{enumerate}
    Then, for the induction step, we first consider application.  That is
    \[
        \semunder{\model}{t_1\ t_2}\ {\nu[x \mapsto \semunder{\model}{t'}\ {\nu}]}
        =
        \left(
        \semunder{\model}{t_1}\ {\nu[x \mapsto \semunder{\model}{t'}\ {\nu}]}
        \right)
        \
        \left(
        \semunder{\model}{t_2}\ {\nu[x \mapsto \semunder{\model}{t'}\ {\nu}]}
        \right)
    \]
    which is equal to, by induction,
    \[
        \left(
        \semunder{\model}{t_1[x \mapsto t']}\ {\nu}
        \right)
        \
        \left(
        \semunder{\model}{t_2[x \mapsto t']}\ {\nu}
        \right)
        =
        \semunder{\model}{t_1[x \mapsto t']\ t_2[x \mapsto t']}\ {\nu}
        =
        \semunder{\model}{(t_1\ t_2)[x \mapsto t']}\ {\nu} \ .
    \]
    Finally, for abstraction, we can assume by $\alpha$-conversion that $y \neq x$, and we have
    \[
        \semunder{\model}{\lambda y . t_1}\ {\nu[x \mapsto \semunder{\model}{t'}\ {\nu}]}
        =
        v \mapsto
        \semunder{\model}{t_1}\ {\nu[x \mapsto \semunder{\model}{t'}\ {\nu}, y \mapsto v]}
    \]
    which is by induction equal to the function
    \[
        v \mapsto
        \semunder{\model}{t_1[x \mapsto t']}\ {\nu[y \mapsto v]}
        =
        \semunder{\model}{\lambda y . t_1[x \mapsto t']}\ {\nu}
        =
        \semunder{\model}{(\lambda y . t_1)[x \mapsto t']}\ {\nu} \ .
    \]
    Thus, by induction, we have the lemma as required.
\end{proof}

\section{Proofs for Section~\ref{Section:Games}}

\subsection{Proof of Lemma~\ref{Lemma:ContinuityRHSConc}}

\begin{proof}
    We show that for all non-ground terminals $s$, $\interconcof{s}$ is $\meet$-continuous.
    We need to treat the terminals $a \colon \ground \to \ground$ of the original scheme and the terminals $\opof{F}$ that were introduced for the determinisation separately.
    In each case, assume a descending chain of arguments $(x_i)_{i \in \N}$.
    \begin{enumerate}
        \item
            \textsfbf{Case $s = a$.}\\
            Since
            $\interconcof{a} = \prependf{a}$
            and we have
            \[
                \prependf{a} (\bigsqcap\limits_{i \in \N} x_i) =
                \bigsqcap\limits_{i \in \N}(\prependf{a}\ x_i)
            \]
            by definition of $\prependf{a}$, we have the property as required.
        \item
            \textsfbf{Case $s = \opof{F}$.}\\
            We show the property when $\opof{F}$ is owned by $\playere$, and thus interpreted as $\ell$-fold disjunction, conjunction is similar.
            We proceed by induction on the arity $\ell$.
            In the base case $\ell = 1$, $\interconcof{\opof{F}}$ is the identity function that is $\meet$-continuous

            Now assume $\opof{F}$ has arity $\ell+1$, and $\interconcof{\opof{F}} = \ldisj{\ell+1}$ is an $\ell+1$-fold disjunction.
            We have
            \begin{align*}
                &\ \ldisj{\ell+1} (\bigsqcap\limits_{i \in \N} x_i)
                \\
                \text{(Definition of $\ldisj{\ell+1}$)}
                =&\ y_1, \ldots, y_{\ell} \ \mapsto \ (\bigsqcap\limits_{i \in \N} x_i) \vee \ldisj{\ell}\ y_1\ \cdots\ y_\ell
                \\
                \text{(Distributivity (see below))}
                =&\ y_1, \ldots, y_{\ell} \ \mapsto \ \bigsqcap\limits_{i \in \N} (x_i \vee \ldisj{\ell}\ y_1\ \cdots\ y_\ell)
                \\
                \text{(Definition of $\sqcap$ for functions)}
                =&\ \bigsqcap\limits_{i \in \N} (y_1, \ldots, y_{\ell} \ \mapsto \ x_i \vee \ldisj{\ell}\ y_1\ \cdots\ y_\ell)
                \\
                \text{(Definition of $\ldisj{\ell+1}$)}
                =&\ \bigsqcap\limits_{i \in \N} \ldisj{\ell+1} \ x_i
            \end{align*}
            In the above we required $\vee$ to distribute over $\meet$, which can be seen by induction over types.
            In the base case, that $\vee$ distributes over $\meet = \wedge$ is standard.
            For the step case, we have for all $f_i$, $g$, and $v$
            \begin{align*}
                &\ %
                \big((\bigsqcap\limits_{i \in \N} f_i) \vee g\big)\ v
                \\
                \quad \quad \quad \quad
                \text{(Definition of $\vee$ and $\meet$)}
                =&\ %
                \big(\bigsqcap\limits_{i \in \N} (f_i\ v) \big) \vee (g\ v)
                \\
                \text{(Induction)}
                =&\ %
                \bigsqcap\limits_{i \in \N} \big((f_i\ v) \vee (g\ v) \big)
                \\
                \text{(Definition of $\vee$ and $\meet$)}
                =&\ %
                \big(\bigsqcap\limits_{i \in \N}(f_i \vee g) \big)\ v \ .
            \end{align*}
    \end{enumerate}
\end{proof}

\subsection{Proof of Theorem~\ref{Theorem:ConcGood}}
\label{Section:ConcGood-proof}

We are required to show $\solconcof{S}$ is satisfied by $\lang{A}$ iff there is a winning strategy foryer Player $\playere$.
The theorem is shown in the following two lemmas.

First, we introduce some notation.
We write
$\prependf{w}$ for $w=a_1\ldots a_n$ to abbreviate $\prependf{a_1} \circ \cdots \circ \prependf{a_n}$.

\begin{lemma}[Player $\playere$]
\label{lemma:player-e-wins}
    If $\solconcof{S}$ is satisfied by $\lang{A}$ there is a winning strategy for $\playere$.
\end{lemma}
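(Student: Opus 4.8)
The plan is to extract an explicit winning strategy for Player $\playere$ from the concrete semantics and to prove it correct by a satisfaction-invariant argument. The guiding idea, already hinted at in the discussion following Theorem~\ref{Theorem:ConcGood}, is that satisfaction of $\semunder{\modelconc}{t}\ \solconc$ by $\lang{A}$ is an invariant that Player $\playere$ can maintain and that Player $\playera$ cannot destroy. I would phrase the whole argument in terms of the invariant
\[
    \text{$\semunder{\modelconc}{t}\ \solconc$ is satisfied by $\lang{A}$,}
\]
where $t$ ranges over the ground terms reachable during a play. The hypothesis gives this invariant for the initial term $S$, since $\semunder{\modelconc}{S}\ \solconc = \solconcof{S}$.

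First I would establish the key one-step preservation facts, distinguishing by the owner of the non-terminal at the head of the current redex. Recall every reachable ground term has the shape $a_1(a_2(\cdots(a_k\ r)))$ where $r$ is either $\wordend$ or a redex $F\ t_1\cdots t_m$. Using the interpretation of terminals, one sees $\semunder{\modelconc}{t}\ \solconc = \prependf{a_1\cdots a_k}(\semunder{\modelconc}{r}\ \solconc)$, so (since $\prependf{w}$ distributes over the Boolean connectives and merely renames atoms) satisfaction of $t$ by $\lang{A}$ reduces to satisfaction of the inner redex's semantics by $\acceptfrom{a_1\cdots a_k}^{-1}\lang{A}$, i.e.\ by the appropriate residual assignment. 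The crucial computation is that, because $\solconc$ is a fixed point of $\rhsconc$, we have $\semunder{\modelconc}{F\ t_1\cdots t_m}\ \solconc = \semunder{\modelconc}{e[x_i \mapsto t_i]}\ \solconc$ summed (via the owner-dependent connective) over all rules $F = \lambda\vec{x}.e$; here I would invoke the substitution lemma (Lemma~\ref{lemma:substitution-conc-games}) to identify $\semunder{\modelconc}{e[x_i\mapsto t_i]}\ \solconc$ with the semantics of the actual successor term. When $O(F)=\playere$, the connective is a disjunction, so if the whole disjunction is satisfied then \emph{some} disjunct is, and Player $\playere$ picks the corresponding rule; when $O(F)=\playera$, the connective is a conjunction, so satisfaction of the whole conjunction forces satisfaction of \emph{every} conjunct, hence the invariant survives \emph{whatever} rule Player $\playera$ chooses.

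From these two facts I would define Player $\playere$'s strategy (at each $\playere$-controlled redex, choose a rule keeping the invariant true — well-defined by the disjunction case) and then argue correctness of any resulting play. A maximal play is either infinite, in which case Player $\playere$ wins outright by the rules of the game, or it terminates in a word $w = a_1\cdots a_k$, i.e.\ the term $a_1(\cdots(a_k\ \wordend))$. For the finite case I would push the invariant to the end: at the terminal term, $\semunder{\modelconc}{a_1\cdots a_k\ \wordend}\ \solconc = \prependf{a_1\cdots a_k}(\set{\varepsilon})=\set{w}$ (using $\interconcof{\wordend}=\set{\varepsilon}$), and the assignment $\lang{A}$ satisfies the atom $w$ exactly when $w\in\lang{A}$; since the invariant holds, $w\in\lang{A}$ and Player $\playere$ wins. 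The main subtlety to handle carefully is the interaction of the fixed-point equation with the individual Kleene approximants: satisfaction must be argued against the genuine greatest fixed point $\solconc$ (so that the equation $\solconc=\rhsconc(\solconc)$ can be applied at the head non-terminal), rather than against a finite approximant, and one must check that unfolding the head redex via the fixed-point equation correctly accounts for the owner's connective in $\interconcof{\opof{F}}$. I expect the careful bookkeeping of this unfolding step — relating $\semunder{\modelconc}{F\ t_1\cdots t_m}\ \solconc$ to the disjunction/conjunction over successor terms via the definition of $\rhsconc$ and the substitution lemma — to be the main obstacle, with everything else being distributivity and the game's winning convention.
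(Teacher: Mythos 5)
Your proposal is correct and takes essentially the same route as the paper's proof: the same invariant (satisfaction of $\concsem{t}\ \solconc$ by $\lang{A}$ along the play), maintained by unfolding the head redex through the fixed-point equation $\solconc = \rhsconcof{\solconc}$ together with the substitution lemma (Lemma~\ref{lemma:substitution-conc-games}), with the same owner-based case split (a satisfied disjunction yields a choice for $\playere$; a satisfied conjunction survives any choice of $\playera$) and the same endgame analysis (infinite plays won by convention, finite plays ending in an atom $w$ satisfied iff $w \in \lang{A}$). The only cosmetic difference is that you factor the prefix out as a residual assignment, whereas the paper keeps $\prependf{a_1 \ldots a_n}$ applied inside the formulas throughout.
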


\begin{proof}
    In what follows, whenever we refer to a term $t$, we mean a term built over $N\disunion T$, but not over $\detof{T}$.
    The terminals $\opof{F}$ are excluded because they do not occur in the game, they are only introduced in the determinized scheme.

    We will demonstrate a strategy for $\playere$ that maintains the invariant that the current (variable-free) term $t$ reached is such that
    $\concsem{t}\ \solconc$
    is satisfied by $\lang{A}$.
    All plays are infinite or generate a word $w$.
    Since we maintain
    $\concsem{t}\ {\solconc}$
    is satisfied by $\lang{A}$, if $t$ represents a word $w$, we know $w$ is accepted by $A$ and Player $\playere$ wins the game.

    Initially, we have
    $\concsem{S}\ {\solconc} = \solconcof{S}$
    which is satisfied by $\lang{A}$ by assumption.
    Thus, suppose play reaches a term $t$ such that
    $\concsem{t}\ {\solconc}$
    is satisfied by $\lang{A}$.
    There are two cases.

    In the first case
    $t = a_1\ (\cdots\ (a_n\ \wordend))$
    and let
    $w = a_1 \ldots a_n$.
    Since
    \[
        \concsem{a_1(\cdots(a_n(\wordend)))}\ {\solconc}
        =
        \prepend{a_1\ldots a_n}{\varepsilon}
        =
        w
    \]
    and $w$ is satisfied by $\lang{A}$, we know
    $w \in \lang{A}$
    and Player $\playere$ has won the game.

    In the second case, we have
    $t = a_1(\cdots(a_n(F\ t_1 \cdots t_m)))$.
    By assumption, we know
    \begin{multline*}
            \concsem{a_1(\cdots(a_n(F\ t_1 \cdots t_m)))}\ {\solconc} = \\
            \prepend{a_1 \ldots a_n}{
               (\concsem{F}\ {\solconc})\ %
               (\concsem{t_1}\ {\solconc})\ %
               \cdots
               \ (\concsem{t_m}\ {\solconc})
            }
    \end{multline*}
    is satisfied by $\lang{A}$.
    Let
    $F = e_1$,
    \ldots,
    $F = e_\ell$
    be the rewrite rules for $F$.
    There are two subcases.
    \begin{enumerate}
        \item
            If $F$ is owned by $\playere$, then since
            $\concsem{F} =
             \concsem{e_1} \lor \cdots \lor \concsem{e_\ell}$
            there must exist some $i$ such that
            \[
                \prepend{a_1 \ldots a_n}{
                   (\concsem{e_i}\ {\solconc})\ %
                   (\concsem{t_1}\ {\solconc})\ %
                   \cdots
                   \ (\concsem{t_m}\ {\solconc})
                }
            \]
            is satisfied by $\lang{A}$.
            The strategy of Player $\playere$ is to choose the $\nth{i}$ rewrite rule.

            We need to show the invariant is maintained.
            Let
            $e_i = \lambda x_1, \ldots, x_m . e$.
            We have (using the substitution lemma, Lemma~\ref{lemma:substitution-conc-games}),
            \begin{align*}
                &\prepend{a_1 \ldots a_n}{
                   (\concsem{\lambda x_1, \ldots, x_m . e}\ {\solconc})\ %
                   (\concsem{t_1}\ {\solconc})\ %
                   \cdots
                   \ (\concsem{t_m}\ {\solconc})
                }
                \\
                =\ &\prepend{a_1 \ldots a_n}{
                   \concsem{(\lambda x_1, \ldots, x_m . e)\ t_1\ \ldots\ t_m}\ {\solconc}
                }
                \\
                =\ &\prepend{a_1 \ldots a_n}{
                   \concsem{e[x_1 \mapsto t_1, \ldots, x_m \mapsto t_m]}\ {\solconc}
                }
                \\
                =\ &\concsem{a_1(\cdots(a_n(e[x_1 \mapsto t_1, \ldots, x_m \mapsto t_m])))}\ {\solconc} \ .
            \end{align*}
            Note that the term $a_1(\cdots(a_n(e[x_1 \mapsto t_1, \ldots, x_m \mapsto t_m])))$ is the result of Player $\playere$ rewriting $F$ via $F = e_i$.
            Since the satisfaction by $\lang{A}$ passes through the equalities, Player $\playere$'s move maintains the invariant as required.

        \item
            If $F$ is owned by $\playera$ the argument proceeds as in the previous case.
            The key difference is that we have to show satisfaction is maintained no matter which move $\playera$ chooses.
            However, since in this case
            $\concsem{F} =
             \concsem{e_1} \land \cdots \land \concsem{e_\ell}$
            then for all $i$ we have
            \[
                \prepend{a_1 \ldots a_n}{
                   (\concsem{e_i}\ {\solconc})\ %
                   (\concsem{t_1}\ {\solconc})\ %
                   \cdots
                   \ (\concsem{t_m}\ {\solconc})
                }
            \]
            is satisfied by $\lang{A}$.
            The remainder of the argument is identical.
    \end{enumerate}
\end{proof}
\begin{lemma}[Player $\playera$]
\label{lemma:player-a-wins}
    If $\solconcof{S}$ is not satisfied by $\lang{A}$ there is a winning strategy \mbox{for $\playera$.}
\end{lemma}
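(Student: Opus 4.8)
The plan is to prove the dual of Lemma~\ref{lemma:player-e-wins}: Player $\playera$ maintains the invariant that the current variable-free term $t$ has $\concsem{t}\ \solconc$ \emph{not} satisfied by $\lang{A}$, and additionally forces the play to terminate. The first ingredient is to turn non-satisfaction of the greatest fixed point into a finite witness. Since $\concsem{t}$ is $\meet$-continuous in the valuation (Proposition~\ref{prop:sem-cont}) and the meet on the ground domain is union of sets of formulas, we have $\concsem{t}\ \solconc = \bigsqcap_{i\in\N}\concsem{t}\ \soliter{\modelconc}{i}$, so $\concsem{t}\ \solconc$ fails to be satisfied iff some approximant $\concsem{t}\ \soliter{\modelconc}{n}$ already contains a formula false under $\lang{A}$. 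For a term with $\concsem{t}\ \solconc$ unsatisfied I would define $\mathrm{rank}(t)$ to be the least such $n$; this is a finite natural number and will serve as the termination measure.

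Second, I would establish that $\playera$ can preserve non-satisfaction, mirroring the two sub-cases of Lemma~\ref{lemma:player-e-wins} but with the roles of conjunction and disjunction exchanged. If the head non-terminal $F$ is owned by $\playera$, then $\concsem{F}$ is the conjunction $\concsem{e_1}\wedge\cdots\wedge\concsem{e_\ell}$ of its rules, and a conjunction that is unsatisfied has an unsatisfied conjunct, so $\playera$ selects a rule $e_j$ witnessing this. If $F$ is owned by $\playere$, then $\concsem{F}$ is the disjunction $\concsem{e_1}\vee\cdots\vee\concsem{e_\ell}$, which can be unsatisfied only if \emph{every} disjunct is unsatisfied, so any rule $\playere$ chooses preserves the invariant. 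In both cases the substitution lemma (Lemma~\ref{lemma:substitution-conc-games}) identifies the semantics of the successor term with the chosen sub-formula, so non-satisfaction transfers. For the base case, if the play reaches a complete word $w = a_1(\cdots(a_n\ \wordend))$ then $\concsem{w}\ \solconc = \set{w}$, which is unsatisfied precisely when $w\notin\lang{A}$, i.e.\ $w\in\overline{\lang{A}}$, so maintaining the invariant until a word is produced guarantees a win for $\playera$.

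The main obstacle is termination: the arena admits infinite plays, and these are won by $\playere$, so the invariant alone is insufficient and $\playera$ must drive the play to an end. Here the approximant index is the ranking measure. Expanding the head non-terminal drops one approximant level, because $\soliter{\modelconc}{r}(F)$ is computed by interpreting $F$'s determinised right-hand side $\opof{F}\ e_1\cdots e_\ell$ over $\soliter{\modelconc}{r-1}$; the difficulty is an index mismatch, in that the arguments $t_1,\ldots,t_m$ are interpreted at level $r$ while the unfolded body drops to level $r-1$, so a plain application of the substitution lemma does not by itself exhibit a successor term of strictly smaller rank. To bridge this I would follow the approach announced just before Section~\ref{Section:Framework} and introduce a correctness relation between approximant-indexed domain elements and terms, defined at ground kind by ``an unsatisfied element of rank at most $r$ corresponds to a term from which $\playera$ can force a word outside $\lang{A}$ within $r$ expansions'' and lifted to function kinds in the logical-relation style used for the framework of Section~\ref{Section:Framework}, requiring related arguments to be sent to related results. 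One then shows, by induction over the structure of terms, that $\concsem{t}\ \soliter{\modelconc}{r}$ is related to $t$ at level $r$ and that $\playera$'s witness-selecting choice strictly decreases the rank at each non-terminal step.

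Finally, since the rank is a natural number that strictly decreases along any play consistent with this strategy, no such play can be infinite; every maximal play therefore produces a word, which by the base case lies in $\overline{\lang{A}}$. Hence the constructed strategy is winning for $\playera$, completing the proof. I expect the lifting of the correctness relation to function kinds and the verification of strict rank decrease across the index mismatch to be the technically heaviest step.
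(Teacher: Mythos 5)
Your first two ingredients coincide with the paper's own proof: the finite-witness step is exactly its ``limit case'' (the meet of a descending chain of formula sets is their union, so unsatisfaction of $\concsem{t}\ \solconc$ descends to some finite approximant, via Proposition~\ref{prop:sem-cont}), and the preservation step with $\wedge$ and $\vee$ exchanged is its treatment of the non-terminal case. The gap is in your termination measure. Define, as you propose, $\mathrm{rank}(t)$ as the least $n$ with $\concsem{t}\ \solconciter{n}$ unsatisfied by $\lang{A}$, and suppose $w(F\ t_1\cdots t_m)$ has rank $r$ with $\playera$ owning $F$. Unfolding $\solconciter{r}(F)$ exposes the rule bodies evaluated at level $r-1$, so $\playera$ can choose $e_j$ such that the \emph{mixed-level} element
\begin{align*}
M \ =\ \prependf{w}\bigl(\concsem{e_j}\ \solconciter{r-1}[\vec{x} \mapsto \concsem{\vec{t}}\ \solconciter{r}]\bigr)
\end{align*}
is unsatisfied. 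But the successor term's level-$(r-1)$ semantics is
\begin{align*}
S' \ =\ \prependf{w}\bigl(\concsem{e_j[\vec{x}\mapsto\vec{t}]}\ \solconciter{r-1}\bigr)
\ =\ \prependf{w}\bigl(\concsem{e_j}\ \solconciter{r-1}[\vec{x} \mapsto \concsem{\vec{t}}\ \solconciter{r-1}]\bigr)\ ,
\end{align*}
and since $\solconciter{r}\leq\solconciter{r-1}$ and all maps involved are monotonic, $M \leq S'$ in the implication order. Unsatisfaction propagates only downwards in that order (if the weaker element is unsatisfied, so is the stronger), so unsatisfaction of $M$ says nothing about $S'$; it yields only unsatisfaction of the successor's level-$r$ semantics, i.e.\ $\mathrm{rank}(e_j[\vec{x}\mapsto\vec{t}]) \leq r$ rather than $< r$. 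The strict decrease your final step relies on is therefore not available, and a correctness relation engineered to certify ``strict rank decrease at each non-terminal step'' would be aiming at a statement that is false for this notion of rank.

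The paper's resolution is to drop the rank from the relation altogether. Its relation $\soundfor$ asserts at ground kind only that $\playera$ \emph{has a winning strategy} from $w(t)$ whenever $\prepend{w}{\fmla}$ is unsatisfied --- and because infinite plays are lost by $\playera$, ``winning'' already subsumes termination; at function kinds it is the logical-relation lifting quantifying over \emph{all} sound argument pairs, with no level attached to the arguments. The nested induction (outer on the approximant index $i$ of the non-terminal valuation, inner on term structure) then never meets your mismatch: in the step case for $F$ at level $i+1$, the argument values are arbitrary sound values passed through unchanged, only the head's valuation drops to level $i$, and the level-$i$ hypothesis already hands $\playera$ complete winning (hence play-terminating) strategies from every successor. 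Well-foundedness comes solely from the induction on $i$, bottoming out vacuously at $i=0$, where $\concsem{F}\ \nu^0 = \ftrue$ is always satisfied. So your instinct to introduce a logical relation is exactly right, but the relation must assert unconditional winning rather than rank-bounded winning: the approximant index should live only in the valuation that drives the induction, not in the relation or in a per-term measure.
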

\begin{proof}
    In what follows, whenever we refer to a term $t$, we mean a term built over $N\disunion T$, but not over $\detof{T}$.
    The terminals $\opof{F}$ are excluded because they do not occur in the game, they are only introduced in the determinized scheme.

    For
    $\fmla \in \domainconcof{\ground}$
    and a variable-closed term $t$ of kind $\ground$, we define \emph{$\fmla$ to be sound for $t$}, denoted $\fmla \soundfor t$, if for all $w\in T^*$ such that
    $\prepend{w}{\fmla}$
    is not satisfied by $\lang{A}$, Player $\playera$ has a winning strategy from term $w(t)$.
    For $w=\varepsilon$, we set $\prepend{\varepsilon}{\fmla}=\fmla$ and let $\varepsilon(t)=t$.
    We can now restate the lemma as
    \begin{align}
    \label{sound}
        \solconcof{S} \soundfor S
        \ .
    \end{align}
    In particular, since $\solconcof{S}$ is not satisfied by $\lang{A}$ it is the case that
    $\prepend{\varepsilon}{\solconcof{S}}$ is not satisfied.
    This means Player $\playera$ has a winning strategy from
    $\varepsilon(S) = S$.
%

    In general, for
    $\soln \in \domainconcof{\kappa_1 \to \kappa_2}$,
    we will also define
    $\soln \soundfor t$
    for terms of kind
    $\kappa_1 \to \kappa_2$.
    That is, for a variable-closed term $t$ of kind
    $\kappa_1 \to \kappa_2$
    and a function
    $\soln \in \domainconcof{\kappa_1 \to \kappa_2}$,
    we define
    $\soln \soundfor t$
    to hold whenever for all variable-closed terms $t'$ of kind $\kappa_1$ and
    $\soln' \in \domainconcof{\kappa_1}$
    such that
    $\soln' \soundfor t'$
    we have
    $\soln\ \soln' \soundfor t\ t'$:
    \[
        \soln \soundfor t,
        \text{ if }
        \forall \ \soln', t'
        \text{ such that }
        \soln' \soundfor t',
        \text{ we have }
        \soln\ \soln' \soundfor t\ t' \ .
    \]

    Similarly, we need to extend $\soundfor$ to terms $t$ with free variables $\vec{x}=x_1 \ldots x_m$.
    Here, we make the free variables explicit and write $t (\vec{x})$.
    We define for
    $\soln : (V\prightarrow \domainconc) \rightarrow \domainconc$
    that
    $\soln \soundfor t(\vec{x})$
    by requiring that for any variable-closed terms
    $t_1, \ldots, t_m$
    and any
    $\soln_1, \ldots, \soln_m \in \domainconc$
    with $\soln_j \soundfor t_j$ for all $1\leq j\leq m$,
    we have
    $\soln\ \nu
     \soundfor
     t [\forall j \colon x_j \mapsto t_j]$,
    where $\nu$ maps $x_j$ to $\soln_j$.

    We now show the following.
    For every number of iterations $i$ in the fixed-point calculation,
    we have $\concsem{t}\ {\solconc^i} \soundfor t$, for all terms $t$ built over the terminals and non-terminals in the scheme of interest.
    After the induction, we will show that the result holds for the greatest fixed point.
    Note that we have a nested induction:
    the outer induction is along $i$, the inner is along the structure of terms.

    Since we are inducting over non-closed terms, we will have to extend $\solconc^i$ to assign valuations to free-variables.
    Thus we will write $\nu^i$ to denote a valuation such that
    \mbox{$\nu^i(F) = \solconc^i(F)$}
    for any non-terminal $F$.


    \noindent
    \textsfbf{Base case $i$.}
    \\
    In the base case, we have $i=0$ and $\nu^i=\top$ for all non-terminals.
    We proceed by induction on the structure of terms.
    We will emphasize if an argumentation is independent of the iteration count.
    This is the case for all terms except non-terminals.

    \noindent
    \textsfbf{Base case $t$.}
    \\
    The base cases of the inner induction that are independent of the iteration count are the following.
    \begin{enumerate}
    \item
        \textsfbf{Case $t = \wordend$.}\\
        For all $i$, we have
        $\concsem{\wordend}\ {\nu^i} = \varepsilon$.
        Take any word $w$ such that
        $\prepend{w}{\varepsilon}$
        is not satisfied by $\lang{A}$.
        No moves can be made from $w(\varepsilon)$ and Player $\playera$ has won the game.

    \item
        \textsfbf{Case $t = a$.}
        \\
        We again reason over all $i$ and show that
        \[
            \concsem{a}\ {\nu^i}\ \soln{} =
            \prependf{a}(\soln{}) \soundfor a(t)
        \]
        for any variable-closed term
        $t : \ground$
        and any $\soln{}$ so that $\soln{} \soundfor t$.
        Take any word $w$ such that
        $\prepend{w}{\prepend{a}{\soln{}}}$
        is not satisfied by $\lang{A}$.
        It follows that
        $\prepend{wa}{\soln{}}$
        is also not satisfied by $\lang{A}$.
        From
        $\soln{} \soundfor t$,
        Player $\playera$ has a winning strategy from $wa(t)$.
        Since $wa(t) = w(a(t))$ we are done.

    \item
        \textsfbf{Case $t = x$.}
        \\
        For all $i$ and all extensions $\nu^i$ of $\solconc^i$, we have
        \[
            \concsem{x}\ {\nu^i}\ =
            \nu^i(x).
        \]
        Take any $\nu^i(x)=\soln$ and any variable-closed term $t'$ with $\soln \soundfor t'$.
        Then $\nu^i(x) \soundfor x[x \mapsto t']$ is immediate.\\[-0.2cm]
\end{enumerate}
The only base case of the inner induction that depends on the iteration count is $t=F$.
Let $F$ take $m$ arguments and consider variable-closed terms $t_1, \ldots, t_m$ with corresponding $\soln_j$ such that $\soln_j \soundfor t_j$.
We have
\begin{align*}
            \concsem{F}\ {\nu^0}\ \soln_1\ \ldots\ \soln_m =
            \solconcof{F}^{0}\ \soln_1\ \ldots\ \soln_m =
            \ftrue.
\end{align*}
Thus, trivially $\concsem{F}\ {\nu^0} \soundfor F$ since true is never unsatisfied.\\[0.2cm]
\textsfbf{Step case $t$.}
\\
In both cases, the argumentation is independent of the actual iteration count.
Therefore, we give it for a general $i$ rather than for $0$.
\begin{enumerate}
    \item
        \textsfbf{Case $t = t'\ t''$.}
        \\
        Assume we already know that
        \begin{align*}
            \concsem{t'}\ {\nu^i} \soundfor t'
            \quad
            \text{ and }
            \quad
            \concsem{t''}\ {\nu^i} \soundfor t''
            \ .
        \end{align*}
        Our task is to show that
        \begin{align*}
            \concsem{t'\ t''}\ {\nu^i} =
            (\concsem{t'}\ {\nu^i})\ (\concsem{t''}\ {\nu^i})
            \soundfor t'\ t''
            \ .
        \end{align*}
        Let the free variables be $x_1, \ldots, x_n$ and consider $\soln_1 \soundfor t_1$ to $\soln_n\soundfor t_n$.
        Let $\nu^i$ map $x_j$ to $\soln_j$ for all $1\leq j\leq n$.
        By the definition of $\soundfor$ for terms with free variables, we have
        $\concsem{t'}\ {\nu^i} \soundfor t' [\forall j: x_j \mapsto t_j]$
        and
        $\concsem{t''}\ {\nu^i} \soundfor t'' [\forall j: x_j \mapsto t_j]$.
        Then, by the definition of $\soundfor$ for functions, we obtain
        \begin{align*}
            \concsem{t'\ t''}\ {\nu^i} =&\
            (\concsem{t'}\ {\nu^i})\ (\concsem{t''}\ {\nu^i})\\
             \soundfor&\
            (t'[\forall j: x_j \mapsto t_j])\ (t''[\forall j: x_j \mapsto t_j]) =
            (t'\ t'')[\forall j: x_j \mapsto t_j] \ .
        \end{align*}
        This means
        $\concsem{t'\ t''}\ {\nu^i} \soundfor t'\ t''$
        as required.

    \item
        \textsfbf{Case $t = \lambda x.e$.}
        \\
        Let the free variables of $e$ be $x, x_1,\ldots, x_n$.
        For $\concsem{\lambda x.e}\ {\nu^i}\soundfor \lambda x.e$, we have to argue that for any $\soln_1\soundfor t_1$ to $\soln_n\soundfor t_n$ with $\nu^i$ mapping $x_i$ to $\soln_i$ for all $1\leq i\leq n$, we get
        \begin{align*}
            \concsem{\lambda x.e}\ {\nu^i} \soundfor (\lambda x.e)[\forall j: x_j\mapsto t_j] \ .
        \end{align*}
        This in turn means that for any $\soln\soundfor t$, we have to show
        \begin{align*}
            (\concsem{\lambda x.e}\ {\nu^i})\ \soln\soundfor ((\lambda x.e)[\forall j: x_j\mapsto t_j])\ t \ .
        \end{align*}
        By the definition of the semantics, we have
        \begin{align*}
            (\concsem{\lambda x.e}\ {\nu^i})\ \soln = \concsem{e}\ {\nu^i}[x\mapsto \soln] \ .
        \end{align*}
        Moreover, since the $t_j$ are variable-closed, they in particular are not affected by replacing $x$ and we get
        \begin{align*}
        ((\lambda x.e)[\forall j: x_j\mapsto t_j])\ t = (\lambda x.(e[\forall j: x_j\mapsto t_j]))\ t.
        \end{align*}
        In the game, $\lambda$-redexes of the form $(\lambda x.e)\ t$ do not occur at all:
        When a non-terminal $F$ is rewritten to its right-hand side $\lambda x.e$, this yields $e[x\mapsto t]$ within a single step.
        This means the game equates $(\lambda x.(e[\forall j: x_j\mapsto t_j]))\ t$ with $e[x\mapsto t, \forall j: x_j\mapsto t_j]$.
        Hence, all that remains to be shown is
        \begin{align*}
            \concsem{e}\ {\nu^i}[x\mapsto \soln] \soundfor e[x\mapsto t, \forall j: x_j\mapsto t_j] \ .
        \end{align*}
        This holds by the hypothesis of the inner induction, showing $\concsem{e}\ {\nu^i}\soundfor e$.
\end{enumerate}

\noindent
\textsfbf{Step case $i$.}
\\
We again do an induction along the structure of terms.
The only case that has not been treated in full generality is $F$.
We now show that $\concsem{F}\ {\nu^{i+1}}\soundfor F$.
Let $F$ take $m$ arguments and consider $\soln_1\soundfor t_1$ to $\soln_m\soundfor t_m$.
The task is to prove $\concsem{F}\ {\nu^{i+1}}\ \soln_1\ \ldots\ \soln_m\soundfor F\ t_1\ \ldots\ t_m$.
To ease the notation, assume there are two right hand sides $e_1', e_2$ for $F$, \ie we have the rules
$F = \lambda x_1 \ldots \lambda x_m.e_1$ and $F = \lambda x_1 \ldots \lambda x_m.e_2$.
This means the right-hand side in the determinised scheme is
$F = \lambda x_1 \ldots \lambda x_m . (\opof{F}\ e_1\ e_{2})$.
Then,
\begin{align*}
    \concsem{F}\ {\nu^{i+1}}
    &= \nu^{i+1}(F)\\
    &= \concsem{ \lambda x_1 \ldots \lambda x_m . (\opof{F}\ e_1\ e_{2}) }\ {\nu^i}\\
    &= v_1, \ldots, v_m \mapsto \concsem{(\opof{F}\ e_1\ e_{2})[x_1 \mapsto v_1, \ldots, x_m \mapsto v_m]}\ {\nu^i}\\
    &= v_1, \ldots, v_m \mapsto \concsem{(\opof{F}\ e_1[\vec{x} \mapsto \vec{v}]\ e_{2}[\vec{x} \mapsto \vec{v}])}\ {\nu^i}\\
    &= v_1, \ldots, v_m \mapsto \interconcof{\opof{F}}\
    \left(
    \concsem{e_1[\vec{x} \mapsto \vec{v}]}\ {\nu^i}
    \right)
    \
    \left(
    \concsem{e_{2}[\vec{x} \mapsto \vec{v}])}\ {\nu^i}
    \right)
\end{align*}
Here, $\interconcof{\opof{F}}$ is a conjunction or disjunction, depending on the owner of $F$.
Recall that the conjunction and disjunction of functions are defined by evaluating the argument functions separately and combining the results.
This means
\begin{align*}
    \concsem{F}\ {\nu^{i+1}}
    &= v_1, \ldots, v_m \mapsto \interconcof{\opof{F}}\
    \left(
    \concsem{e_1[\vec{x} \mapsto \vec{v}]}\ {\nu^i}
    \right)
    \
    \left(
    \concsem{e_{2}[\vec{x} \mapsto \vec{v}])}\ {\nu^i}
    \right)
    \\
    &= v_1, \ldots, v_m \mapsto
    \left(
    \concsem{e_1[\vec{x} \mapsto \vec{v}]}\ {\nu^i}
    \right)
    \ (\vee/\wedge)\
    \left(
    \concsem{e_{2}[\vec{x} \mapsto \vec{v}])}\ {\nu^i}
    \right)
    \\
    &=
    \left(
    v_1, \ldots, v_m \mapsto
    \concsem{e_1[\vec{x} \mapsto \vec{v}]}\ {\nu^i}
    \right)
    \ (\vee/\wedge)\
    \left(
    v_1, \ldots, v_m \mapsto
    \concsem{e_{2}[\vec{x} \mapsto \vec{v}])}\ {\nu^i}
    \right)
    \\
    &=
    \left(
    \concsem{\lambda x_1 \ldots \lambda x_m . e_1}\ {\nu^i}
    \right)
    \ (\vee/\wedge)\
    \left(
    \concsem{\lambda x_1 \ldots \lambda x_m . e_2}\ {\nu^i}
    \right)
    \\
    &=
    \left(
    \concsem{e_1'}\ {\nu^i}
    \right)
    \ (\vee/\wedge)\
    \left(
    \concsem{e_2'}\ {\nu^i}
    \right)
    \ .
\end{align*}
With the same reasoning, we obtain
\begin{align*}
    &\concsem{F}\ {\nu^{i+1}}\ \soln_1\ \ldots\ \soln_m\\
    &\quad = (\concsem{e_1'}\ {\nu^i}\ \soln_1\ \ldots\ \soln_m)\ (\vee/\wedge)\ (\concsem{e_2'}\ {\nu^i}\ \soln_1\ \ldots\ \soln_m).
\end{align*}
We have to prove that for any $w\in T^*$, if $\lang{A}$ does not satisfy the formula
\begin{align*}
    &\ \prepend{w}{(\concsem{e_1'}\ {\nu^i}\ \soln_1\ \ldots\ \soln_m)\ (\vee/\wedge)\ (\concsem{e_2'}\ {\nu^i}\ \soln_1\ \ldots\ \soln_m)}\\
    =& \ \prepend{w}{\concsem{e_1'}\ {\nu^i}\ \soln_1\ \ldots\ \soln_m}\ (\vee/\wedge)\ \prepend{w}{\concsem{e_2'}\ {\nu^i}\ \soln_1\ \ldots\ \soln_m},
\end{align*}
then Player~$\playera$ has a winning strategy from $w(F\ t_1 \ldots\ t_m)$.

Assume Player~$\playere$ owns $F$ and the formula is not satisfied.
If Player~$\playera$ owns $F$, the reasoning is similar.
Since we have a disjunction for Player~$\playere$, $\prepend{w}{\concsem{e_1'}\ {\nu^i}\ \soln_1\ \ldots\ \soln_m}$ is not satisfied.
By the hypothesis of the outer induction, we obtain $\concsem{e_1'}\ {\nu^i}\ \soundfor e_1$ and thus $\concsem{e_1'}\ {\nu^i}\ \soln_1\ \ldots\ \soln_m \soundfor e_1'\ t_1\ \ldots\ t_m$.
As in the case of $\lambda$-abstraction above, we use that the game identifies $e_1'\ t_1\ \ldots\ t_m$ and $e_1 [x_1 \mapsto t_1, \ldots e_m \mapsto t_m]$.
Hence, Player $\playera$ has a winning strategy from $w(e_1 [x_1 \mapsto t_1, \ldots e_m \mapsto t_m])$.
The same argumentation applies to $\prepend{w}{\concsem{e_2}\ {\nu^i}\ \soln_1\ \ldots\ \soln_m}$.
Consequently, whichever move Player $\playere$ makes at
$w(F\ t_1\ \ldots\ t_m)$,
Player $\playera$ has a winning strategy.
\\

\noindent
This finishes the outer induction, proving that $\concsem{t}\ {\solconc^i} \soundfor t$ for all terms $t$ and all $i \in \N$.
We would like to conclude $\concsem{t}\ {\solconc} \soundfor t$.
Since the cppo under consideration is not finite, this needs to be proven separately.\\

\noindent
\textsfbf{Limit case.}
\\
We have shown
$\concsem{t}\ {\solconc^i} \soundfor t$
for all $i \in \N$; we now show
$\concsem{t}\ {\solconc} \soundfor t$
noting by Kleene that
$\solconc = \bigsqcap_{i \in \N} \solconc^i$.
Once we have this we have
$\solconcof{S} \soundfor S$
which proves the lemma.

We formulate a slightly more general induction hypothesis for induction over kinds:
Given a descending sequence of $\soln_i$ for all $i \in \N$ such that each $\soln_i \soundfor t$, we have
$\bigsqcap_{i \in \N} \soln_i \soundfor t$.
In the base case we have $t$ is of kind $\ground$ and we assume
$\soln_i \soundfor t$.
We now argue
$\bigsqcap_{i \in \N} \soln_i \soundfor t$.

Take any $w$ and suppose
$\prepend{w}{\bigsqcap_{i \in \N} \soln_i}$
is not satisfied, then we need to show by the definition of $\soundfor$ that Player $\playera$ has a winning strategy.
Since $\meet$ is conjunction, if
\[
    \prepend{w}{\bigsqcap\limits_{i \in \N} \soln_i} =
    \bigsqcap\limits_{i \in \N} \prepend{w}{\soln_i}
\]
is not satisfied, it must be the case that for some $i$ we have
$\prepend{w}{\soln_i}$
is not satisfied.
In this case, we have $\soln_i \soundfor t$ by assumption and thus by the definition of $\soundfor$ that Player $\playera$ has a winning strategy from $w(t)$.
This proves $\bigsqcap_{i \in \N} \soln_i \soundfor t$.

If $t$ is of kind
$\kappa_1 \to \kappa_2$
we need to show for all
$\soln \soundfor t'$
that
$(\bigsqcap_{i \in \N} \soln_i)\ \soln \soundfor t\ t'$.
We have by the definition of $\meet$ over functions
\[
    (\bigsqcap\limits_{i \in \N} \soln_i)\ \soln
    =
    \bigsqcap\limits_{i \in \N} (\soln_i\ \soln)
\]
Since by assumption on $\soln_i$ and definition of $\soundfor$ for function kinds, we have
$\soln_i\ \soln \soundfor t\ t'$
for each~$i$.
By the induction on the kind, we obtain
$\bigsqcap_{i \in \N}(\soln_i\ \soln) \soundfor t\ t'$
.
Since
$(\bigsqcap_{i \in \N} \soln_i)\ \sol{}
 =
 \bigsqcap_{i \in \N} (\soln_i\ \sol{})$
we establish the desired statement that finishes the induction.

Finally, since
$\concsem{t}\ {\solconc^i}$
satisfies the conditions of the above induction hypothesis and because we have already shown
$\concsem{t}\ {\solconc^i} \soundfor t$
for all $t$, we obtain
\[
    \bigsqcap\limits_{i \in \N} (\concsem{t}\ {\solconc^i}) \soundfor t
    \ .
\]
Then, since using continuity of $\concsem{t}$ we have
\[
    \concsem{t}\ {\solconc} =
    \concsem{t}\ (\bigsqcap\limits_{i \in \N} {\solconc^i}) =
    \bigsqcap\limits_{i \in \N} (\concsem{t}\ {\solconc^i})
\]
we obtain the lemma as required.
\end{proof}


\section{Proofs for Section~\ref{Section:Framework}}
\label{Section:Framework-proofs}

\subsection{Generalising Precision Properties to Functions}
\label{Section:LiftingSurjectivity-proof}

We show that several properties needed for precision can be lifted from the ground domain to function domains.

\begin{lemma}\label{Lemma:LiftingSurjectivity}
    If $\precisionsur$ holds, then for every $\kappa\in \Kappa$ and every $\valr\in\domainrof{\kappa}$ there is a compatible $\vall\in\domainlof{\kappa}$ with $\absof{\vall}=\valr$.
\end{lemma}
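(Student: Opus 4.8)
The plan is to proceed by structural induction on the kind $\kappa$, and to strengthen the statement slightly so that the induction simultaneously produces, at every kind, a \emph{section} of $\abs$: a map $\gamma_\kappa : \domainrof{\kappa} \to \domainlof{\kappa}$ such that $\gamma_\kappa(\valr)$ is compatible and $\absof{\gamma_\kappa(\valr)} = \valr$ for every $\valr$. The statement of the lemma is exactly the existence part of such a section. For the base case $\kappa = \ground$ this is immediate: $\precisionsur$ says $\abs$ is onto $\domainrof{\ground}$, so for each $\valr \in \domainrof{\ground}$ we may pick some preimage $\vall \in \domainlof{\ground}$, and $\vall$ is compatible because every ground element is compatible by definition. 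Fixing one such choice for each $\valr$ gives $\gamma_\ground$.

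For the inductive step $\kappa = \kappa_1 \to \kappa_2$, let $\valr \in \domainrof{\kappa_1 \to \kappa_2}$ be given; it is a $\meet$-continuous map from $\domainrof{\kappa_1}$ to $\domainrof{\kappa_2}$. Using the induction hypothesis at $\kappa_2$, fix the section $\gamma_{\kappa_2}$, and define the candidate preimage $\vall \in \domainlof{\kappa_1 \to \kappa_2}$ by $\vall\ u = \gamma_{\kappa_2}(\valr\ \absof{u})$ for $u \in \domainlof{\kappa_1}$; note that $\absof{u}$ is a total map, so this is defined on all of $\domainlof{\kappa_1}$. Two verifications are then purely formal. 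First, compatibility: if $u, u'$ are compatible with $\absof{u} = \absof{u'}$, then $\vall\ u = \vall\ u'$ by construction, which gives the first compatibility clause, while the second clause holds because $\gamma_{\kappa_2}$ returns compatible elements. Second, the abstraction identity: for every compatible $u$ we have $\absof{\vall\ u} = \absof{\gamma_{\kappa_2}(\valr\ \absof{u})} = \valr\ \absof{u}$, and since the induction hypothesis at $\kappa_1$ supplies, for each $\valr_1 \in \domainrof{\kappa_1}$, a compatible $u$ with $\absof{u} = \valr_1$, the definition of function abstraction yields $\absof{\vall}\ \valr_1 = \valr\ \valr_1$ for all $\valr_1$, i.e.\ $\absof{\vall} = \valr$.

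The step I expect to be the main obstacle is showing that $\vall$ actually lies in $\domainlof{\kappa_1 \to \kappa_2} = \contfun{\domainlof{\kappa_1}}{\domainlof{\kappa_2}}$, that is, that it is $\meet$-continuous; the two checks above impose no continuity and go through with $\precisionsur$ alone. Since $\vall$ is the composite $\gamma_{\kappa_2} \circ \valr \circ \abs$, continuity of $\valr$ comes for free, but I would need the chosen section $\gamma_{\kappa_2}$ to be $\meet$-continuous and $\abs$ to be $\meet$-continuous on $\domainlof{\kappa_1}$. I would therefore carry the requirement ``$\gamma_\kappa$ is $\meet$-continuous'' through the induction and, for the $\abs$ factor, invoke the companion lifting of $\precisioncont$ (continuity of $\abs$ on function domains) at the smaller kind $\kappa_1$, so that $\vall$ becomes a composite of $\meet$-continuous maps. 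Securing this continuous selection is the delicate point of the argument, and it is the place where properties of $\abs$ beyond bare surjectivity have to be used.
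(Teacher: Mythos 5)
Your construction coincides with the paper's own. The paper also builds the preimage fiber-wise: for each $\valr$ with $\fr\ \valr=\valr'$ it fixes one compatible $\vall'\in\abs^{-1}(\valr')$ (surjectivity at $\kappa_2$) and sets $\fl\ \vall=\vall'$ for every compatible $\vall\in\abs^{-1}(\valr)$, imposing nothing on incompatible arguments --- this is exactly your $\gamma_{\kappa_2}\circ\fr\circ\abs$. Your two ``purely formal'' verifications also match the paper's: the two compatibility clauses, and $\absof{\fl}\ \valr=\absof{\fl\ \vall}=\absof{\vall'}=\valr'$, where surjectivity at $\kappa_1$ ensures the default-to-top clause in the definition of function abstraction never fires. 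The sole divergence is the step you flag as the main obstacle: the paper never checks that the constructed $\fl$ is $\meet$-continuous, i.e.\ that it belongs to $\contfun{\domainlof{\kappa_1}}{\domainlof{\kappa_2}}$ at all; its proof simply ends after the two checks above.

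Your proposed repair of that step, however, does not go through. First, it is not available under the lemma's hypotheses: the lemma assumes only $\precisionsur$, and it is invoked inside the proof of Lemma~\ref{Lemma:MainTransfer}, whose standing assumptions are $\precisionsur$, $\precisionrel$, and $\precisioncomp$ but \emph{not} $\precisioncont$; a proof that consumes $\precisioncont$ (or its lifting to $\kappa_1$) therefore establishes a statement too weak for the paper's own use of the lemma. Second, even granting $\precisioncont$, the object you need cannot be secured: a surjective $\meet$-continuous map between cppos need not admit a $\meet$-continuous, or even monotone, section. Take $\domainlof{\ground}=\set{\top_l,a,b}$ with $a,b$ incomparable below $\top_l$, let $\domainrof{\ground}$ be the chain $\top_r>x>y$, and let $\abs$ map $\top_l,a,b$ to $\top_r,x,y$ respectively; this $\abs$ is monotone, continuous, and surjective, yet its unique section maps the comparable pair $x>y$ to the incomparable pair $a,b$. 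So your induction cannot even be initialized with a continuous $\gamma_\ground$. Your underlying instinct is sound --- bare surjectivity cannot force continuity of the preimage; indeed, with a non-monotone $\abs$ on small finite chains one can exhibit an $\fr$ possessing no $\meet$-continuous compatible preimage whatsoever --- but that observation cuts against the lemma under $\precisionsur$ alone (and against the paper's proof, which silently skips the membership check), rather than being something a fix along the lines you sketch can supply.
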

\begin{proof}
    We show that, if $\precisionsur$ holds, then for every $\kappa\in \Kappa$ and every $\valr\in\domainrof{\kappa}$ there is a compatible $\vall\in\domainlof{\kappa}$ with $\absof{\vall}=\valr$. 
    We proceed by induction on kinds.
    The base case is given by the assumption (P1) and the fact that every ground element is compatible.
    Assume we have the required surjectivity of $\abs$ for $\kappa_1$ and $\kappa_2$ and consider $\fr\in\domainrof{\kappa_1\rightarrow \kappa_2}$.
    The task is to find a compatible function $\fl$ so that $\absof{\fl}=\fr$.
    Assume $\fr\ \valr = \valr'$.
    By surjectivity for $\kappa_1$, there are compatible elements in $\abs^{-1}(\valr)$, and similar for $\valr'$.
    Let $\vall'$ be a compatible element that is mapped to $\valr'$ by $\abs$.
    We define $\fl\ \vall = \vall'$ for all compatible $\vall \in \abs^{-1}(\valr)$.
    Since $\abs$ is total on $\domainof{\kappa_1}$, this assigns a value to all compatible $\vall$.
    We do not impose any requirements on how to map elements that are not compatible.

    We argue that $\fl$ is compatible. To this end, consider compatible $\vall^1$ and $\vall^2$ with $\absof{\vall^1}=\absof{\vall^2}$.
    By definition, both are mapped identically by $\fl$, $\fl\ \vall^1=\fl\ \vall^2$.
    Hence, in particular the abstractions coincide.
    Moreover, given a compatible $\vall$, we defined $\fl\ \vall = \vall'$ to be a compatible element.

    Concerning the equality of the functions, we have $\absof{\fl}\ \valr =
        \absof{\fl\ \vall} =
        \absof{\vall'} = \valr'$.
    The first equality is the definition of abstraction for functions and the fact that
    $\abs^{-1}(\valr)$ contains compatible elements, one of them being $\vall$, the second is the fact that $\vall$ is mapped to $\vall'$, and the last is by
    $\vall' \in \abs^{-1}(\valr')$.
\end{proof}

\begin{lemma}\label{Lemma:LiftingContinuity}
    If $\precisionsur$ and $\precisioncont$ hold, then for all $\kappa\in \Kappa$ and all descending chains of compatible elements $(f_i)_{i \in \N}$ in $\domainof{\kappa}$, we have $\bigmeet_{i\in\N} f_i$ compatible and $\absof{\bigsqcap_{i\in \N}f_i}=\bigsqcap_{i\in \N}\absof{f_i}$.
\end{lemma}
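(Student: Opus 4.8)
The plan is to prove both assertions---compatibility of $\bigmeet_{i\in\N} f_i$ and the commutation $\absof{\bigmeet_{i\in\N} f_i}=\bigmeet_{i\in\N}\absof{f_i}$---\emph{simultaneously} by induction on the kind $\kappa$. The base case $\kappa=\ground$ is immediate: every ground element is compatible by definition, so the meet is trivially compatible, and the claimed equality is exactly the ground-level continuity assumption $\precisioncont$.

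For the step $\kappa=\kappa_1\to\kappa_2$, write $f=\bigmeet_{i\in\N} f_i$. The observation used throughout is that meets of functions are computed pointwise, so $f\ \vall=\bigmeet_{i\in\N}(f_i\ \vall)$ for every $\vall\in\domainlof{\kappa_1}$; when $\vall$ is compatible, $(f_i\ \vall)_{i\in\N}$ is a descending chain of compatible elements of $\domainlof{\kappa_2}$, since each $f_i$ is compatible. I would first establish that $f$ is compatible. For condition (b), each $f\ \vall=\bigmeet_{i\in\N}(f_i\ \vall)$ is compatible by the induction hypothesis at $\kappa_2$. For condition (a), given compatible $\vall,\vall'$ with $\absof{\vall}=\absof{\vall'}$, I apply the induction hypothesis at $\kappa_2$ to rewrite $\absof{f\ \vall}=\bigmeet_{i\in\N}\absof{f_i\ \vall}$ and $\absof{f\ \vall'}=\bigmeet_{i\in\N}\absof{f_i\ \vall'}$; since each $f_i$ is compatible we have $\absof{f_i\ \vall}=\absof{f_i\ \vall'}$ termwise, so the two meets coincide.

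Having shown $f$ compatible, I would prove the equality by evaluating both sides at an arbitrary $\valr\in\domainrof{\kappa_1}$. By Lemma~\ref{Lemma:LiftingSurjectivity} (which holds under $\precisionsur$) there is a compatible $\vall\in\domainlof{\kappa_1}$ with $\absof{\vall}=\valr$. Then $\absof{f}\ \valr=\absof{f\ \vall}$ by the definition of function abstraction---legitimate precisely because $f$ has just been shown compatible---and the induction hypothesis at $\kappa_2$ gives $\absof{f\ \vall}=\bigmeet_{i\in\N}\absof{f_i\ \vall}$. On the other side, pointwise meet together with the definition of function abstraction for each compatible $f_i$ yields $(\bigmeet_{i\in\N}\absof{f_i})\ \valr=\bigmeet_{i\in\N}(\absof{f_i}\ \valr)=\bigmeet_{i\in\N}\absof{f_i\ \vall}$. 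As $\valr$ was arbitrary, the two functions agree.

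The main obstacle is the bookkeeping of the simultaneous induction: the equality at $\kappa_1\to\kappa_2$ must be derived only \emph{after} compatibility of $f$ has been established at the same kind, because every application of the defining equation $\absof{f}\ \absof{\vall}=\absof{f\ \vall}$ is licensed solely once $f$ is known to be compatible. One must also check that each appeal to the inductive equality is made to a genuine descending chain of compatible elements of $\domainlof{\kappa_2}$, namely $(f_i\ \vall)_{i\in\N}$, which is why the pointwise description of function meets is recorded first.
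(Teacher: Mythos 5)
Your proof is correct and takes essentially the same route as the paper: a simultaneous induction on kinds, using the pointwise definition of meets on functions, establishing compatibility of $\bigmeet_{i\in\N} f_i$ first (condition (b) via the induction hypothesis at $\kappa_2$ applied to the chain $(f_i\ \vall)_{i\in\N}$, condition (a) via compatibility of each $f_i$), and only then deriving the equality by evaluating at an arbitrary $\valr$ and invoking Lemma~\ref{Lemma:LiftingSurjectivity} for a compatible preimage. The only cosmetic difference is that for condition (a) you compare $\absof{f_i\ \vall}$ and $\absof{f_i\ \vall'}$ termwise, whereas the paper routes both sides through $(\bigmeet_{i\in\N}\absof{f_i})\ \absof{\vall}$; both steps are licensed by the same compatibility facts.
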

\begin{proof}
    We proceed by induction on kinds to show that, if $\precisionsur$ and $\precisioncont$ hold, then for all kinds $\kappa\in \Kappa$ and for all descending chains of compatible values $f_1, f_2, \ldots \in\domainof{\kappa}$, we have
    $\bigsqcap_{i\in \N}f_i$ again compatible and $\absof{\bigsqcap_{i\in \N}f_i}=\bigsqcap_{i\in \N}\absof{f_i}$.
    The base case is the assumption.

    In the induction step, let $\kappa=\kappa_1\rightarrow \kappa_2$ and $f_1, f_2,\ldots\in\domainlof{\kappa}$ be a descending chain of compatible elements.
    Let $\vall\in \domainlof{\kappa_1}$ be compatible. 
    The following equalities will be helpful:
    \begin{align*}
    \absof{(\bigmeet_{i\in \N} f_i)\ \vall}=\absof{\bigmeet_{i\in \N} (f_i\ \vall)}=\bigmeet_{i\in \N} \absof{f_i\ \vall} = \bigmeet_{i\in \N} (\absof{f_i}\ \absof{\vall})=(\bigmeet_{i\in \N} \absof{f_i})\ \absof{\vall}.
    \end{align*}
    The first equality is the definition of $\meet$ on functions, the second is the induction hypothesis for $\kappa_2$, the third is compatibility of the $f_i$ and $\vall$, the last is again $\meet$ on functions.

To show compatibility, note that the above implies $\absof{(\bigmeet_{i\in \N} f_i)\ \vall}=\absof{(\bigmeet_{i\in \N} f_i)\ \vall'}$ as long as $\absof{\vall}=\absof{\vall'}$, for all compatible $\vall, \vall'\in\domainlof{\kappa_1}$. 
For compatibility of $(\bigmeet_{i\in \N} f_i)\ \vall$ with $\vall\in\domainlof{\kappa_1}$ compatible, note that $(\bigmeet_{i\in \N} f_i)\ \vall=\bigmeet_{i\in \N} (f_i\ \vall)$. The latter is the meet over a descending chain of compatible elements in $\kappa_2$. By the induction hypothesis on $\kappa_2$, it is again compatible. 

For $\meet$-continuity, consider a value $\valr\in\domainrof{\kappa_1}$. 
By Lemma~\ref{Lemma:LiftingSurjectivity}, there is a compatible $\vall\in\domainlof{\kappa_1}$ with $\absof{\vall}=\valr$.
    We have
    \begin{align*}
        \absof{\bigsqcap_{i\in \N}f_i}\ \valr
        =\absof{(\bigsqcap_{i\in \N}f_i)\ \vall}= (\bigmeet_{i\in \N} \absof{f_i})\ \absof{\vall}=(\bigmeet_{i\in \N} \absof{f_i})\ \valr.
        \end{align*}
        The first equality is the definition of abstraction on functions. 
        Note that we need here the fact that $\bigsqcap_{i\in \N}f_i$ is compatible by the induction hypothesis.
        The second equality is the auxiliary one from above.
        The last equality is by $\absof{\vall}=\valr$.
\end{proof}

\begin{lemma}\label{Lemma:LiftingBottom}
    If $\precisionbot$ holds,
    then
    $\absof{\toplof{\kappa}}=\toprof{\kappa}$
    for all
    $\kappa\in \Kappa$.
\end{lemma}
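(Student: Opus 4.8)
The plan is to prove the statement by induction on the structure of kinds $\kappa \in \Kappa$, exactly mirroring the inductive liftings in Lemmas~\ref{Lemma:LiftingSurjectivity} and~\ref{Lemma:LiftingContinuity}. The base case $\kappa = \ground$ is precisely the hypothesis $\precisionbot$, namely $\absof{\toplof{\ground}} = \toprof{\ground}$, so there is nothing to do there. The whole content is in the function case $\kappa = \kappa_1 \to \kappa_2$, where I would assume as induction hypothesis that $\absof{\toplof{\kappa_2}} = \toprof{\kappa_2}$.

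The starting point for the step is to unfold the two top elements. On the left, $\toplof{\kappa_1 \to \kappa_2}$ is by definition the function sending every argument to $\toplof{\kappa_2}$; on the right, $\toprof{\kappa_1 \to \kappa_2}$ is the function sending every argument to $\toprof{\kappa_2}$. So it suffices to show $\absof{\toplof{\kappa_1 \to \kappa_2}}\ \valr = \toprof{\kappa_2}$ for every $\valr \in \domainrof{\kappa_1}$. I would handle this by a case split along the definition of function abstraction. If $\toplof{\kappa_1 \to \kappa_2}$ happens to be non-compatible, then directly $\absof{\toplof{\kappa_1 \to \kappa_2}} = \toprof{\kappa_1 \to \kappa_2}$ and we are done. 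If it is compatible, then for a given $\valr$ either there is a compatible $\vall \in \domainlof{\kappa_1}$ with $\absof{\vall} = \valr$, in which case $\absof{\toplof{\kappa_1 \to \kappa_2}}\ \valr = \absof{\toplof{\kappa_1 \to \kappa_2}\ \vall} = \absof{\toplof{\kappa_2}} = \toprof{\kappa_2}$ by the induction hypothesis (using that the function is constantly $\toplof{\kappa_2}$), or there is no such $\vall$, in which case the definition again gives $\absof{\toplof{\kappa_1 \to \kappa_2}}\ \valr = \toprof{\kappa_2}$.

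The key observation that makes this clean is that in \emph{every} branch of the case analysis the value comes out as $\toprof{\kappa_2}$, so the two sub-cases of the compatible case collapse and no delicate reasoning about which branch applies is needed. Consequently $\absof{\toplof{\kappa_1 \to \kappa_2}}$ is the constant function to $\toprof{\kappa_2}$, which is exactly $\toprof{\kappa_1 \to \kappa_2}$, completing the induction. Optionally, to keep the argument uniform and always land in the compatible branch, one can strengthen the induction hypothesis to also record that $\toplof{\kappa}$ is compatible: this follows since a constant function trivially satisfies clause (a) of compatibility, and clause (b) reduces to compatibility of $\toplof{\kappa_2}$, which is the strengthened hypothesis (with $\toplof{\ground}$ compatible because all ground elements are).

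I do not expect any genuine obstacle here; the lemma is essentially bookkeeping against the definition of the lifted abstraction. The only point requiring a moment of care is the case split in that definition, and the observation above that all branches yield $\toprof{\kappa_2}$, so that the equality propagates through the functional lifting without ever needing the existence of a compatible preimage of $\valr$.
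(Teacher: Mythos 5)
Your proposal is correct and follows essentially the same route as the paper's proof: induction on kinds with the base case given by $\precisionbot$, and in the function case the same three-way case split along the definition of function abstraction (non-compatible top element, compatible with no compatible preimage of $\valr$, compatible with a compatible preimage handled via the induction hypothesis and constancy of $\toplof{\kappa_1\rightarrow\kappa_2}$). The optional strengthening to track compatibility of top elements is a harmless addition the paper does not need, since it dispenses with the non-compatible case in one line.
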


\begin{proof}
    We show that, if $\precisionbot$ holds,
    then
    $\absof{\toplof{\kappa}}=\toprof{\kappa}$
    for all
    $\kappa\in \Kappa$.
    We proceed by induction on kinds.
    The base case is given by the assumption (P3).
    Assume for $\kappa_2$, we have
    $\absof{\toplof{\kappa_2}}=\toprof{\kappa_2}$.
    Consider function
    $\toplof{\kappa_1\rightarrow \kappa_2}\in\domainlof{\kappa_1\rightarrow\kappa_{2}}$.
    We have to show
    $\absof{\toplof{\kappa_1\rightarrow \kappa_2}}=\toprof{\kappa_1\rightarrow \kappa_2}$.
    If the given top element is not compatible, this holds.
    Assume it is. 
    For
    $\valr \in \domainrof{\kappa_1}$, there are two cases.
    If there is no compatible $\vall\in \domainlof{\kappa_1}$ with $\absof{\vall}=\valr$, we have
    \begin{align*}
        \absof{\toplof{\kappa_1\rightarrow \kappa_2}}\ \valr 
        = \toprof{\kappa_2}
        = \toprof{\kappa_1\rightarrow \kappa_2}\ \valr.
    \end{align*}
    If there is such a $\vall$, we obtain
    \begin{align*}
        \absof{\toplof{\kappa_1\rightarrow \kappa_2}}\ \valr =\absof{\toplof{\kappa_1\rightarrow \kappa_2}\ \vall} = \absof{\toplof{\kappa_2}}=\toprof{\kappa_2} = \toprof{\kappa_1\rightarrow \kappa_2}\ \valr.
    \end{align*}
    The first equality is the definition of abstraction for functions, the next is the fact that
    $\toplof{\kappa_1\rightarrow \kappa_2}$
    maps every element
    $\vall \in \domainlof{\kappa_1}$
    to $\toplof{\kappa_2}$.
    The image of $\toplof{\kappa_2}$ is $\toprof{\kappa_2}$ by the induction hypothesis.
    The last equality is the definition of $\toprof{\kappa_1\rightarrow \kappa_2}$.
\end{proof}

\subsection{Proof of Lemma~\ref{Lemma:MainTransfer}}
\label{Section:MainTransfer-proof}

\begin{proof}
    \newcommand\proofcase[1]{Case #1:\ }
    Assume $\precisionsur$, $\precisionrel$, and $\precisioncomp$ hold. 
    We show, for all terms $t$ and all compatible $\nu$, $\semunder{\modell}{t}\ \nu$ is compatible and $\absof{\semunder{\modell}{t}\ \nu}=\semunder{\modelr}{t}\ \absof{\nu}$. 
    We proceed by structural induction on $t$.
    
    \begin{enumerate}
        \item
            \textsfbf{Case $F$, $x$.}
            \\
            By the assumption, $\semunder{\modell}{F}\ \nu = \nu(F)$ is compatible. 
            Moreover,
            \[
                \absof{\semunder{\modell}{F}\ \nu}=\absof{\nu(F)}=\absof{\nu}(F)=\semunder{\modelr}{F}\ \absof{\nu}
            \]
            holds.
            For $x\in V$, the reasoning is similar.\\[0.2cm]
        \item
            \textsfbf{Case terminal $s$.}
            \\
            Note that $\semunder{\modell}{s}\ \nu=\interlof{s}$. 
            If $s$ is ground, the claim holds by $\precisionrel$.
            Let $s:\kappa_1\rightarrow \kappa_2$. 
            For compatibility, consider $\vall, \vall'\in\domainof{\kappa_1}$ compatible with  $\absof{\vall}=\absof{\vall'}$. 
            Then 
            \begin{align*}
            \absof{\interlof{s}\ \vall}=\interrof{s}\ \absof{\vall} = \interrof{s}\ \absof{\vall'} = \absof{\interlof{s}\ \vall'}.
            \end{align*}
            The first equality is $\precisionrel$, the next is $\absof{\vall}=\absof{\vall'}$, and the last is again $\precisionrel$. 
            The second requirement on compatibility is satisfied by $\precisioncomp$.\\[0.2cm]
             To show $\absof{\semunder{\modell}{s}\ \nu} = \semunder{\modelr}{s}\ \absof{\nu}$, consider a value $\valr \in\domainrof{\kappa_1}$.
            By Lemma~\ref{Lemma:LiftingSurjectivity}, there is some compatible $\vall\in\domainlof{\kappa_1}$ with $\absof{\vall}=\valr$.  
            We have
            \begin{align*}
                    \absof{\interlof{s}}\ \valr =
                \absof{\interlof{s}\ \vall}=
                \interrof{s}\ \absof{\vall}
                =\interrof{s}\ \valr.
            \end{align*}
            The first equality is compatibility of $\interlof{s}$ and the definition of function abstraction. 
            The next equality is $\precisionrel$.
            The last is $\absof{\vall}=\valr$.\\[0.4cm]
    \end{enumerate}
         
    For the induction step, assume the claim holds for $t_1$ and $t_2$.
        
    \begin{enumerate}
        \item 
            \textsfbf{Case $t_1\ t_2$.}
            \\
            For compatibility, observe that
            $\semunder{\modell}{t_1\ t_2}\ \nu=(\semunder{\modell}{t_1}\ \nu)\ (\semunder{\modell}{t_2}\ \nu)$.
            Moreover, $\semunder{\modell}{t_1}\ \nu$ and $\semunder{\modell}{t_2}\ \nu$ are both compatible by the induction hypothesis.
            By definition of compatibility, applying a compatible function to a compatible argument yields a compatible value.
            Hence, $\semunder{\modell}{t_1\ t_2}\ \nu$ is compatible.\\[0.2cm]
            For the equality, note that
            \begin{align*}
                \semunder{\modelr}{t_1\ t_2}\ \absof{\nu} =
                (\semunder{\modelr}{t_1}\ \absof{\nu})\ (\semunder{\modelr}{t_2}\ \absof{\nu}) =
                \absof{\semunder{\modell}{t_1}\ \nu}\ \absof{\semunder{\modell}{t_2}\ \nu}.
            \end{align*}
            The first equality is by the definition of the semantics, the second is the induction hypothesis.
            Compatibility justifies the first of the following equalities.
            The second is again the definition of the semantics:
            \begin{align*}
                \absof{\semunder{\modell}{t_1}\ \nu}\ \absof{\semunder{\modell}{t_2}\ \nu} =
                \absof{(\semunder{\modell}{t_1}\ \nu)\ (\semunder{\modell}{t_2}\ \nu)} =
                \absof{\semunder{\modell}{t_1\ t_2}\ \nu}.
            \end{align*}
            
        \item
            \textsfbf{Case $\lambda x:\kappa.t_1$.}
            \\
            We argue for compatibility.
            Consider compatible $\vall$ and $\vall'$ with $\absof{\vall}=\absof{\vall'}$.
            By definition of the semantics and the induction hypothesis, we have
            \begin{align*}
                \absof{(\semunder{\modell}{\lambda x.t_1}\ \nu)\ \vall} =
                \absof{\semunder{\modell}{t_1}\ \nu[x\mapsto \vall]} =
                \semunder{\modelr}{t_1}\ \absof{\nu[x\mapsto \vall]} \ .
            \end{align*}
            For $\vall'$, the reasoning is similar.
            Since
            $\absof{\vall}=\absof{\vall'}$,
            we have
            $\absof{\nu[x\mapsto \vall]}=\absof{\nu[x\mapsto \vall']}$.
            Hence,
            $\semunder{\modelr}{t_1}\ \absof{\nu[x\mapsto \vall]}=\semunder{\modelr}{t_1}\ \absof{\nu[x\mapsto \vall']}$.
            We conclude the desired equality.
        
            For the second requirement in compatibility, let $\vall$ be compatible.
            By definition of the semantics, $(\semunder{\modell}{\lambda x.t_1}\ \nu)\ \vall = \semunder{\modell}{t_1}\ \nu[x\mapsto \vall]$.
            Since $\nu$ and $\vall$ are compatible, $\nu[x\mapsto \vall]$ is compatible.
            Hence, $\semunder{\modell}{t_1}\ \nu[x\mapsto \vall]$ is compatible by the induction hypothesis.\\[0.2cm]
            To prove
            $\semunder{\modelr}{\lambda x.t_1}\ \absof{\nu} = \absof{\semunder{\modell}{\lambda x.t_1}\ \nu}$, consider an arbitrary value $\valr \in \domainrof{\kappa}$. 
            Let $\vall\in\domainlof{\kappa_1}$ be compatible with $\absof{\vall}=\valr$, which exists by Lemma~\ref{Lemma:LiftingSurjectivity}. 
            We have:
            \begin{align*}
                (\semunder{\modelr}{\lambda x.t_1}\ \absof{\nu})\ \valr =
                \semunder{\modelr}{t_1}\ \absof{\nu}[x \mapsto \valr] =
                \semunder{\modelr}{t_1}\ \absof{\nu[x\mapsto \vall]} \ .
            \end{align*}
            We showed above that $\semunder{\modell}{\lambda x.t_1}\ \nu$ is compatible. 
            Using the definition of abstraction for functions and the definition of the semantics, the other function yields
            \begin{align*}
                \absof{\semunder{\modell}{\lambda x.t_1}\ \nu}\ \valr = \absof{(\semunder{\modell}{\lambda x.t_1}\ \nu)\ \vall}= \absof{\semunder{\modell}{t_1}\ \nu[x\mapsto \vall]} \ .
            \end{align*}
            With the induction hypothesis, 
            $\absof{\semunder{\modell}{t_1}\ \nu[x\mapsto \vall]}=\semunder{\modelr}{t_1}\ \absof{\nu[x\mapsto \vall]}$. 
    \end{enumerate}
\end{proof}

\subsection{Proof of Theorem~\ref{Theorem:FixedPointTransfer}}

\begin{proof}
    Recall $\solfl^0$ and $\solfr^0$ are the greatest elements of the respective domains.
    We have
    \begin{align*}
    \absof{\solfl}=\absof{\bigsqcap_{i\in \N} \rhsfunl^{i}(\solfl^0)}= \bigsqcap_{i\in\N}
    \absof{\rhsfunl^{i}(\solfl^0)} = \bigsqcap_{i\in\N}
    \rhsfunr^{i}(\solfr^0) = \solfr.
    \end{align*}
    The first equality is Kleene's theorem. 
    The second equality uses the fact that each $\rhsfunl^{i}(\solfl^0)$ is compatible and that they form a descending chain (both by induction on~$i$), and then applies Lemma~\ref{Lemma:LiftingContinuity}.
    The third equality also relies on compatibility of the $\rhsfunl^{i}(\solfl^0)$ and invokes Lemma~\ref{Lemma:MainTransfer}.
    Moreover, it needs $\absof{\solfl^0}=\solfr^0$ by Lemma~\ref{Lemma:LiftingBottom}.
    The last equality is again Kleene's theorem.
\end{proof}


\section{Proofs for Section~\ref{Section:Instantiation}}

\subsection{Proof of Lemma~\ref{lem:abs-defined}}

\begin{proof}
    Observe
    $\interabsof{\wordend} = Q_f = \acceptfrom{\varepsilon}$.
    Given a formula $\fmla\in \boolof{\acceptfrom{T^*}}$, we have to show that $\predecf{a}(\fmla)\in  \boolof{\acceptfrom{T^*}}$.
    Since $\predecf{a}$ distributes over conjunction and disjunction, it is sufficient to show the requirement for atomic propositions.
    Consider $Q=\acceptfrom{w}$. We have
    $\interabsof{a}\ \acceptfrom{w} = \predecf{a}(\acceptfrom{w}) = \acceptfrom{a.w}$.
    Finally,  $\interabsof{\opof{F}}$ with $F\in N$ is conjunction or disjunction, and there is nothing to do as the formula structure is not modified.
\end{proof}

\subsection{Proof of Lemma~\ref{lem:abs-continuous}}
\label{Section:abs-continuous-proof}

\begin{proof}
    We require, for all terminals $s$, $\interabsof{s}$ is $\meet$-continuous over the respective lattices.
    We remark that the case $s = \opof{F}$ is identical to Lemma~\ref{Lemma:ContinuityRHSConc}.
    Hence, we show the case $s = a \in \Gamma$.
    Given a descending chain $(x_i)_{i\in\N}$, we have to show
    $\interof{a}\ (\bigsqcap_{i\in\N} x_i) =
     \bigsqcap_{i\in\N} (\interof{a}\ x_i)$.
    Recall that the meet of formulas is conjunction, and that we are in a finite domain.
    The latter means that the infinite conjunction is really the conjunction of finitely many formulas.
    Now $\predecf{a}$ is defined to distribute over finite conjunctions.
    We have
    \[
        \interof{a}\ (\bigsqcap\limits_{i\in\N} x_i)
        =
        \predec{a}{\bigwedge\limits_{i\text{ finite}} x_i}
        =
        \bigwedge\limits_{i\text{ finite}}\predec{a}{x_i}
        =
        \bigsqcap\limits_{i\in\N} (\interof{a}\ x_i)
    \]
    as required.
\end{proof}

\subsection{Proof of Proposition~\ref{Proposition:PrecisionAbs}}
\label{Section:PrecisionAbs-proof}

\begin{proof}
    To show $\abs$ is precise, we have to show $\precisionsur$ to $\precisioncomp$.
    For $\precisionsur$, it is sufficient to argue that for every set of states $Q\in \acceptfrom{T^*}$ there is a word that is mapped to it --- which holds by definition.
    For formulas, note that $\abs=\acceptfromf$ distributes over conjunction and disjunction, which means we can take the same connectives in the concrete as in the abstract and replace the leaves appropriately. Note that we only need a set consisting of one formula.\\[0.2cm]
    $\precisioncont$ is satisfied by the concrete meet being the union of sets of formulas and $\abs$ being defined by an element-wise application.\\[0.2cm]
    For $\precisionbot$, note that the greatest elements are $\set{\ftrue}$ for $\domainconcof{\ground}$ and $\ftrue$ for $\domainabsof{\ground}$. By definition, $\absof{\set{\ftrue}}=\absof{\ftrue}=\ftrue$. \\[0.2cm]
    For $\precisionrel$, consider $\wordend$.
    We have
    $\absof{\interconcof{\wordend}} = \absof{\set{\varepsilon}}=\acceptfrom{\varepsilon}=Q_f=\interabsof{\wordend}$.
    The first equality is by definition of the concrete interpretation, the second is the definition of $\abs$, the third uses the fact that $\varepsilon$ is accepted precisely from the final states, and the last equality is the interpretation of the $\wordend$ in the abstract domain.

    For a letter $a$ and a word $w\subseteq T^*$, we have
    \begin{align*}
    \absof{\interconcof{a}\ w} = \absof{\prependf{a}(w)}= \absof{a.w} = \acceptfrom{a.w}=\predecf{a}(\acceptfrom{w}) = \interabsof{a}\ \absof{w}.
    \end{align*}
    The first equality is the interpretation of $a$ in the concrete, the second is the definition of prepending a letter, the third is the definition of the abstraction, the next is how taking predecessors changes the set of states from which a word is accepted, and the last equality is the interpretation of $a$ in the abstract domain and the definition of the abstraction function.
    The relation generalizes to formulas by noting that both the concrete interpretation and the abstract interpretation of $a$ distribute over conjunction and disjunction.
    It also generalizes to sets of formulas by noting that $\prependf{a}$ is applied to all elements in the set and, in the abstract domain, $\predecf{a}$ distributes over conjunction.

    Let $F$ be a non-terminal owned by $\playera$.
    To simplify the notation, let the associated operation be binary,
        $\opof{F}:\ground\rightarrow \ground \rightarrow \ground$.
    Let $\setfmla_1,\setfmla_2\in\domainconcof{\ground}$ be sets of formulas.
    We have
    \begin{align*}
    \absof{\interconcof{\opof{F}}\ \setfmla_1\ \setfmla_2} = \absof{\setfmla_1\cup \setfmla_2}&= \bigwedge_{\fmla\in\setfmla_1\cup \setfmla_2}\absof{\fmla}\\
    & = \bigwedge_{\fmla\in\setfmla_1}\absof{\fmla}\ \wedge \bigwedge_{\fmla\in\setfmla_2}\absof{\fmla} = \interabsof{\opof{F}}(\absof{\setfmla_1}\ \absof{\setfmla_2}).
    \end{align*}
    The first equality is the concrete interpretation of $\opof{F}$.
    The second is the definition of the abstraction function.
    The third equality holds as we work up to logical equivalence.
    The last is the abstract interpretation of $\opof{F}$ and again the definition of the abstraction.

    Assume $F$ is owned by $\playere$ and $\opof{F}$ is again binary.
    Consider $\setfmla_1,\setfmla_2\in\domainconcof{\ground}$.
    It will be convenient to denote $\Set{\fmla_1\vee\fmla_2}{\fmla_1\in \setfmla_1,\fmla_2\in \setfmla_2}$ by $\setfmla$. We have
   \begin{align*}
    \absof{\interconcof{\opof{F}}\ \setfmla_1\ \setfmla_2} = \absof{\setfmla}&=
    \bigwedge_{\fmla_1\vee\fmla_2\in \setfmla}\absof{\fmla_1\vee\fmla_2}\\
    &=\bigwedge_{\fmla_1\in \setfmla_1, \fmla_2\in \setfmla_2}(\absof{\fmla_1}\vee\absof{\fmla_2})\\
    &= (\bigwedge_{\fmla_1\in\setfmla_1}\absof{\fmla_1})\vee (\bigwedge_{\fmla_2\in\setfmla_2}\absof{\fmla_2})
     = \interabsof{\opof{F}}(\absof{\setfmla_1}\ \absof{\setfmla_2}).
    \end{align*}
    The first equality is the concrete interpretation of $\opof{F}$, the second is the definition of $\abs$ on sets of formulas.
    The third equality is the fact that $\abs$ distributes over disjunctions and rewrites the iteration over the elements of $\setfmla$.
    The following equality is distributivity of conjunction over disjunction, and the fact that we work up to logical equivalence.
    The last is the abstract interpretation of $\opof{F}$ and the definition of the abstraction function.\\[0.2cm]
    It remains to show $\precisioncomp$.
    For $\interconcof{\wordend}$ and $\interconcof{a}$, there is nothing to do as all ground values are compatible.
    Assume $F$ is owned by $\playera$ and $\opof{F}$ is binary.
    The proof for $\playere$ is similar.
    We show that, given a set of formulas $\setfmla$, the function
    $\setfmla\cup -$ is compatible.
    An inspection of the proof of~$\precisionrel$ shows that for any set of formulas $\fmla_1$, we have
    \begin{align*}
    \absof{\setfmla\cup \setfmla_1} = \absof{\setfmla}\wedge \absof{\setfmla_1}.
    \end{align*}
    Hence, if $\absof{\setfmla_1}=\absof{\setfmla_2}$, then $\absof{\setfmla\cup \setfmla_1}=    \absof{\setfmla\cup \setfmla_2}$.
    That $\setfmla\cup \setfmla_1$ is compatible holds as the element is ground.
\end{proof}

\subsection{Proof of Lemma~\ref{Lemma:OptDefined}}

\begin{proof}v
    We have
    $\interoptof{\wordend}=\bigvee Q_f= \absof{Q_f}=\absof{\acceptfrom{\varepsilon}}$.
    For $\interoptof{a}$, we note that both the abstract and the optimized interpretation distribute over conjunctions and disjunctions.
    Hence, it remains to consider whether the application to leaves results in a disjunction that is the image of an abstract set.
    Let $Q=\acceptfrom{w}$.
    We have
    \begin{align*}
    \interoptof{a}\ \absof{\acceptfrom{w}} = \interoptof{a}\ (\bigvee Q)
    &=\bigvee_{q\in Q}\interoptof{a}\ q\\
    &=\bigvee_{q\in Q}\bigvee \predecf{a}(\set{q})\\
    &=\bigvee \predecf{a}(Q)\\
    &=\absof{\predecf{a}(Q)}
    =\absof{\predecf{a}(\acceptfrom{w})}=\absof{\acceptfrom{a.w}}.
    \end{align*}
    The first equality is the definition of the abstraction function.
    Then we apply distributivity of the optimized interpretation of $a$ over disjunctions.
    The following equality is the actual interpretation of $a$ in the optimized model.
    The next equality uses $\predecf{a}(Q)=\bigcup_{q\in Q}\predecf{a}(q)$.
    The following is again the definition of the abstraction function.
    Then we replace $Q$ by its definition.
    Finally, we note the interplay between $\predecf{a}$ and $\acceptfrom{-}$.

    For conjunction and disjunction, which are used as the interpretation of $\opof{F}$ depending on the player, we note that $\abs$ distributes to the arguments.
    Hence, if the arguments are $\absof{\fmla_1}$ and $\fmla_2$, we have
    $\absof{\fmla_1}\wedge \absof{\fmla_2}=\absof{\fmla_1\wedge \fmla_2}$.
\end{proof}

\subsection{Proof of Lemma~\ref{Lemma:OptContinuous}}
\label{Section:OptContinuous-proof}

\begin{proof}
    We need, for all terminals $s$, $\interoptof{s}$ is $\meet$-continuous over the respective lattices.
    We remark that the case $s = \opof{F}$ is identical to Lemma~\ref{Lemma:ContinuityRHSConc}.
    The case $s = a \in \Gamma$ follows from distributivity of $\interoptof{a}$ as in the proof of Lemma~\ref{lem:abs-continuous}.
\end{proof}


\subsection{Proof of Proposition~\ref{Proposition:PrecisionOpt}}
\label{Section:PrecisionOpt-proof}

\begin{proof}
    We show the optimized abstraction is precise.
    Surjectivity in $\precisionsur$ holds by definition as does $\precisionbot$.
    Also $\meet$-continuity in $\precisioncont$ is by the fact that the meets over the concrete domain are finite, and hence the definition of $\abs$ already yields continuity.
    We argue for $\precisionrel$.\\[0.2cm]
    For $\wordend$, Lemma~\ref{Lemma:OptDefined} yields $\interoptof{\wordend}=\absof{Q_f}$, which is $\absof{\interabsof{\wordend}}$ as required.
    For $a$, the same lemma shows $\interoptof{a}\ \absof{\acceptfrom{w}}=\absof{\predecf{a}(\acceptfrom{w})}$, which is $\absof{\interabsof{a}\ \acceptfrom{w}}$.
    The equality generalizes to formulas as both, the abstraction function and the interpretations distribute over conjunctions and disjunctions.
    For $\opof{F}$, assume it is a binary conjunction.
    We have
    \begin{align*}
    \interoptof{\opof{F}}\ \absof{\fmla_1}\ \absof{\fmla_2}=
    \absof{\fmla_1}\wedge \absof{\fmla_2}=\absof{\fmla_1\wedge \fmla_2}=\absof{\interabsof{\opof{F}}\ \fmla_1\ \fmla_2}.
    \end{align*}
    The first equality is the definition of the interpretation in the optimized model, the next is distributivity of $\abs$ over conjunction.
    Finally, we have the interpretation of $\opof{F}$ in the abstract model.\\[0.2cm]
    For $\precisioncomp$, there is nothing to do for
    $\interconcof{\wordend}$ and $\interconcof{a}$, as all ground values are compatible.
    We consider the conjunctions and disjunctions used to resolve the non-determinism.
    Consider a formula $\fmla$.
    The task is to show that the function $\fmla\wedge -$ is compatible.
    Consider $\fmla_1$ and $\fmla_2$ with $\absof{\fmla_1}=\absof{\fmla_2}$.
    Then
    \begin{align*}
    \absof{\fmla\wedge \fmla_1}=\absof{\fmla}\wedge\absof{\fmla_1}=\absof{\fmla}\wedge\absof{\fmla_2}=\absof{\fmla\wedge\fmla_2}.
    \end{align*}
    The first equality is distributivity of the abstraction function over conjunctions.
    The next is the assumed equality.
    The third is again distributivity.
    Compatibility of $\fmla\wedge\fmla_1$ holds as ground values are always compatible.
\end{proof}

\subsection{Proof of Corollary~\ref{Theorem:Complexity}}
\label{Section:Complexity-proof}

To show the complexity, we argue the upper and lower bounds separately.

\begin{proof}[Proof of Proposition~\ref{lem:membership}]
    We need to argue that $\sol{\modelopt}$ can be computed in $(k+1)$-times exponential time.
    We have that $\sol{\modelopt} = \bigsqcap_{i \in  \N} \rhsfunopt^{i}(\solfl^0)$.
    Since the domains $\domainoptof{\kappa}$ are finite for all kinds $\kappa$, there is an index $i_0 \in \N$ such that $\sol{\modelopt} = \bigsqcap_{i = 0}^{i_0} \rhsfunopt^{i}(\solfl^0) = \rhsfunopt^{i_0}(\solfl^0)$.
    In the following, we will see that the number of iterations, \ie the index $i_0$ is at most $(k+1)$-times exponential, and that one iteration can be executed in $(k+1)$-times exponentially many steps.

    First, we reason about the number of iterations.
    For a partial order $\calD$, we define its \emph{height} $h(\calD)$ as the length of the longest strictly descending chain, \ie the height is $m$ if the longest such chain is of the shape
    \[
        x_0 > x_1 > \ldots > x_k
        \ .
    \]
    The height of the domain is an upper bound for $i_0$ by its definition:
    If for some index $i_1$ we have $\rhsfunopt^{i_1}(\solfl^0) = \rhsfunopt^{i_1 + 1}(\solfl^0)$, we know $\bigsqcap_{i = 0}^{i_0} \rhsfunopt^{i}(\solfl^0) = \rhsfunopt^{i_01}(\solfl^0)$ and thus $i_1 = i_0$.
    Such an index $i_1$ has to exist and has to be smaller than the height of the domain, otherwise the sequence of the $\rhsfunopt^{i}(\solfl^0)$ would form a chain that is strictly longer than the height, a contradiction to the definition.

    It remains to see what the height of our optimized domain is.
    Recall that $\rhsfunopt$ has the type signature $(N \to \domainopt) \to (N \to \domainopt)$.
    Our goal in the following is to determine $h(N \to \domainopt )$.
    We can identify $N \to \domainopt$ with $\domainoptof{F_1} \times \ldots \times \domainoptof{F_\ell}$, where $F_1, \ldots, F_\ell$ are the non-terminals of the scheme.
    The height of this product domain is the sum of its height.
    We are done if we show that even the domain $\domainoptof{F}$ with the maximal height is $(k+1)$-times exponentially high, since the number of non-terminals is polynomial in the input scheme.

    In the following we prove:
    If kind $\kappa$ is of order $k'$, then $\domainoptof{\kappa}$ has $(k'+1)$-times exponential height.
    For the induction step, we also need to consider the cardinality of $\domainoptof{\kappa}$,
    therefore, we strengthen the statement and also prove that the cardinality $\card{\domainoptof{\kappa}}$ is $(k'+2)$-times exponential.

    We proceed by induction on $k'$.

    In the base case $k' = 0$, we necessarily have $\kappa = \ground$, and indeed the domain \mbox{$\alpha(\boolof{\acceptfrom{T^*}}) \subseteq \boolof{\QNFA}$} is singly exponentially high.
    To see that this is the case, consider a strictly decreasing chain $(\phi_j)_{j \in \N}$ of positive boolean formulas over $\QNFA$, \ie a chain where each formula is strictly implied by the next.
    To each formula, $\phi_j$, we assign the set $\calQ_j = \Set{Q \subseteq \QNFA}{ Q \text{ satisfies } \phi_j}$ of assignments under which $\phi_j$ evaluates to true.
    That $\phi_j$ is strictly implied by $\phi_{j+1}$ translates to the fact that $\calQ_j$ is a strict subset of $ \calQ_{j+1}$.
    This gives us that the sets $\calQ_j$ themselves form a strictly ascending chain in $\pwrset{\pwrset{\QNFA}}$, and it is easy to see that such a chain has length at most $\card{ \pwrset{\QNFA}  } = 2^{\card{\QNFA}}$.

    Furthermore, we can represent each equivalence class of formulas in $\boolof{\QNFA}$ by a representative in conjunctive normal form, \ie by an element of $\pwrset{\pwrset{\QNFA}}$.
    This shows that the cardinality of the domain is indeed bounded by
    $\card{ \pwrset{\pwrset{\QNFA}} } = 2^{ \card{\pwrset{\QNFA}} } = 2^{2^{\card{\QNFA}}}$.

    Now assume the statement holds for $k'$, and consider $\kappa$ of order $k'+1$.
    We need an inner induction on the arity $m$ of $\kappa$.

    Since $\ground$ is the only kind of arity $0$, and does not have order $k'+1$ for any $k'$, there is nothing to do in the base case.

    Now assume that $\kappa = \kappa_1 \to \kappa_2$.
    By the definitions of arity and order, we know that $\kappa_1$ is of order at most $k'$, therefore we now by the outer induction that the height of $\domainoptof{\kappa_1}$ is at most $(k'+1)$-times exponential.
    The order of $\kappa_2$ is at most $(k'+1)$, but the arity of $\kappa_2$ is strictly less than the arity of $\kappa$, thus we get by the inner induction that the height of $\domainoptof{\kappa_2}$ is at most $(k'+2)$-times exponential.

    The domain $\domainoptof{\kappa_1 \to \kappa_2} = \contfun{\domainoptof{\kappa_1}}{\domainoptof{\kappa_2}}$ is a subset of all functions from $\domainoptof{\kappa_1}$ to $\domainoptof{\kappa_2}$.
    Let us reason about the height of this more general function domain.
    We know that its height is the height of the target times the size of the source, \mbox{\ie $h(\domainoptof{\kappa_2}) \cdot \card{ \domainoptof{\kappa_1} }$.}
    The induction completes the proof, as both $h(\domainoptof{\kappa_2})$ and $\card{ \domainoptof{\kappa_1} }$ are at most $(k'+2)$-times exponential.

    It remains to argue that each iteration can be implemented in at most $(k+1)$-times exponentially many steps.
    To this end, we argue that each element of $\domainoptof{\kappa}$ can be represented by an object of size $(k'+1)$-times exponential, where $k'$ is the order of $\kappa$.
    It is easy to see that all operations that need to be executed on these objects, namely evaluation, conjunction, disjunction, and predecessor computation can be implemented in polynomial time in the size of the objects.

    Let $k' = 0$, \ie $\kappa = \ground$.
    We again represent each element of $\domainoptof{\ground}$ by a formula over $\QNFA$ in conjunctive normal form, \ie as an element of $\pwrset{ \pwrset{ \QNFA}}$.
    In the worst case, one single formula $\phi$ contains everyone of the $2^{\QNFA}$ many clauses, each clause having size at most $\card{\QNFA}$.
    This means that one formula needs at most singly exponential space.

    For the induction step, consider $\kappa$ of order $k+1$.
    As above, we need an inner induction on the arity of $\kappa$, for which the base case is trivial.

    Let $\kappa = \kappa_1 \to \kappa_2$.
    An element of $\domainoptof{\kappa}$ is a function that assigns to each of the $\card{\domainoptof{\kappa_1}}$-many elements of $\domainoptof{\kappa_1}$ an element of $\domainoptof{\kappa_2}$.
    In the previous part of the proof, we have argued, that $\card{\domainoptof{\kappa_1}}$ is at most $(k+1)$ times exponential.
    By the induction on the arity, we know that each object in $\domainoptof{\kappa_2}$ can be represented in at most $(k+2)$-times exponential space.
    This shows that objects of  $\domainoptof{\kappa}$ can be represented using $(k+2)$-times exponential space, and finishes the proof.
\end{proof}


We show that determining the winner in a higher-order word game is $(k+1)\mathsf{EXP}$-hard for an order-$k$ recursion scheme.

\begin{proof}[Proof of Proposition~\ref{lem:hardness}]
    \newcommand\kitpdatape{$k$-PDA$^{+}$\xspace}
    \newcommand\kitpda{$k$-PDA\xspace}
    \newcommand\initcell{\circ}
    We begin with a result due to Engelfriet~\cite{E91} that shows alternating $k$-iterated pushdown automata with a polynomially bounded auxiliary work-tape (\kitpdatape) characterize the $(k+1)\mathsf{EXP}$ word languages.
    We fix any $(k+1)\mathsf{EXP}$-hard language and its corresponding alternating \kitpda $B$.
    Let
    $\lang{B}$
    be the set of words accepted by $B$.
    Deciding if a given word $w$ is in the language defined by $B$ is $(k+1)\mathsf{EXP}$-hard in the size of $w$ (recall $B$ is fixed).
    We show that this problem can be reduced in polynomial time to an inclusion problem
    $\lang{B'} \subseteq \lang{A}$
    for some $k$-iterated pushdown automaton (without work-tape) (\kitpda) $B'$ and NFA $A$ of size polynomial in the length of $w$.
    From $B'$, we can construct in polynomial time an equivalent game over a scheme $G$.
    This will show the game language inclusion problem for order-$k$ schemes is $(k+1)\mathsf{EXP}$-hard.

    In an alternating \kitpdatape, there are two Players $\playere$ and $\playera$.
    When decided whether a word $w$ is in the language of a \kitpdatape, $\playere$ will attempt to prove the word is in the language, while $\playera$ will try to refute it.

    We first describe how to obtain $B'$ from $B$.
    Since the word $w$ is fixed, we can force $B$ to output the word $w$ by forming a product of $w$ with the states of $B$.
    Call this automaton $B \times w$.
    This reduces the word membership problem to the problem of determining whether $B \times w$ can reach an accepting state.
    Next, to remove the worktape from $B \times w$ (and form $B'$) we replace the output of $B \times w$ (which will always be $w$ or empty) with a series of guesses of the worktape.
    That is, a transition of $B \times w$ will be simulated by $B'$ by first making a transition as expected, and then outputting a guess (consistent with the transition) of what the worktape of $B \times w$ should be.
    The automaton $A$ will accept a guessed sequence of worktapes iff it is able to find an error in the sequence.
    The word $w$ will be in the language of $B$ if $B'$ is able to reach a final state and produce a word $w'$ that is correct; that is, $w'$ is \emph{not} in the language of $A$.

    Note, here, the reversal of the roles of the Players.
    In $B$, control states are owned by $\playere$ or $\playera$.
    When determining if
    $w \in \lang{B}$
    for some $w$, the first Player $\playere$ tries to show the word is accepted,
    while the second Player $\playera$ tries to force a non-accepting run.
    In $B'$, however, $w$ is accepted iff the output of $B'$ is not included in the language of $A$.
    Thus, $\playera$ will effectively be aiming to prove that
    $w \in \lang{B}$.

    In more detail, we take any $(k+1)\mathsf{EXP}$-hard language and its equivalent (fixed) alternating \kitpdatape.
    Given a word $w$, deciding
    $w \in \lang{B}$
    is $(k+1)\mathsf{EXP}$-hard.
    We define $B'$ directly from $B$ rather than going through the intermediate $B \times w$.

    A transition
    $(p, a, o, \sigma, p')$
    of $B$ means the following.
    From control state $p$, upon reading a character $a$ from $w$, apply operation $o$ to the work-tape (which may become stuck if not applicable) and operation $\sigma$ to the stack (which may also become stuck if not applicable).
    Next, move to control state $p'$, from which the remainder of $w$ is to be read.

    Let $m$ be the polynomial bound on the size of the work-tape of $A$ given the input word $w$.
    Let $\Sigma$ be the alphabet of the work-tape.
    Let the set of work-tape operations
    $O = \set{o_1, \ldots, o_n}$
    and work-tape positions
    $P = \oneto{m}$
    be disjoint from $\Sigma$.
    Also, let
    $\initcell \in \Sigma$
    be the initial symbol appearing in each cell of the initial work-tape.
    We will construct $A'$ such that
    \[
        \lang{A'} \subseteq \initcell^m \left( P O  \Sigma^m \right)^\ast \ .
    \]
    That is, $A'$ outputs a sequence of work-tape configurations separated by positions in $P$ and operations in $O$.
    That is, $A'$ will simulate a run of $A$ over $w$.

    For every control state $p$ of $A$, we will have control states
    $(p, w')$
    of $A'$, where $w'$ is a suffix of $w$.
    We will also have
    $(p, w', o)$
    where $o$ is a work-tape operation to be applied.
    Then for each transition
    $(p, a, o, \sigma, p')$
    of $B$ we have a transition
    $((p, aw'), \varepsilon, \sigma, (p', w', o))$
    of $B \times w$.
    From
    $(p', w', o)$
    the automaton $B'$ will output some character from $P$
    (a guess at the work-tape head position),
    followed by $o$ (to indicate the operation applied).
    It will then be able to output any word from $\Sigma^m$ (a guess of the work-tape contents) before moving to
    $(p', w')$
    and continuing the simulation.
    Initially, $B'$ will simply output $\initcell^m$ and move to control state
    $(p, w)$
    where $p$ is the initial control state of $B$.

    The final step in defining $B'$ is to assign ownership of the control states.
    Recall, we needed to switch the roles of the Players.
    Thus, we define
    $O((p, w)) = \playere$
    whenever $p$ belongs to $\playera$ in $B$.
    All other control states of $B'$ are owned by $\playera$.
    We define the accepting control states to be those of the form
    $(p, \varepsilon)$
    where $p$ is accepting in $B$.
    Observe these have no outgoing transitions.

    Next we define the regular automaton $A$ which detects mistakes in the work-tape.
    Such an error is either due to a poorly updated cell, or due to a poorly updated head position.
    The set of work-tape operations $O$ is such that there is a mapping
    \[
        \pi : (P \times O \rightarrow P)
              \cup
              (P \times \Sigma \times P \times O \rightarrow \Sigma \cup \{\bot\})
    \]
    where
    $\bot \notin \Sigma$
    and
    \begin{itemize}
    \item
        $\pi(i, o) = j$
        means if the head is at position $i$, it is at position $j$ after operation $o$,
        and
    \item
        $\pi(i, \alpha, j, o) = \beta$
        means, if the head is at position $i$, $\alpha$ is the contents of the cell at position $j$, and operation $o$ is applied, then $\beta$ is the contents of the cell after applying $o$.
        If
        $\beta = \bot$
        then $o$ could not be applied to this work-tape and became stuck.
        (E.g.\ if $i = j$ and the operation required the head to read a character other than $\alpha$.)
    \end{itemize}
    Thus, we require the following regular language, for which a polynomially-sized regular automaton is straightforward to construct.
    Let $\Gamma = \Sigma \cup P \cup O$.
    \[
        \lang{A}
        =
        \left(
            \Gamma^\ast
            \left(
                \bigcup\limits_{\pi(i, o) \neq j}
                i o \Sigma^m j
            \right)
            \Gamma^\ast
        \right)
        \cup
        \left(
            \Gamma^\ast
            \left(
                \bigcup\limits_{\pi(i, \alpha, j, o) \neq \beta}
                    i o \Sigma^j \alpha \Gamma^{m+2} \beta
            \right)
            \Gamma^\ast
        \right) \ .
    \]

    We have thus defined a \kitpda $B'$ that produces some word $w'$ not accepted by $A$ iff $w$ is accepted by $B$.

    The final step is to produce a game over a scheme $G$ that is equivalent to the game problem for $k$-iterated pushdown automata.
    This is in fact a straightforward adaptation of the techniques introduced by Knapik~\textit{et al.}~\cite{KNU02}.
    However, we choose to complete the sketch using definitions from Hague~\textit{et al.}~\cite{HMOS08} as we believe these provide a clearer reference.
    In particular, we adapt their Definition 4.3.

    The key to the reduction is a tight correspondence (given in op.~cit.) between configurations
    $(q, s)$
    of a $k$-iterated pushdown automaton, and terms of the form\footnote{%
        In fact, in op.~cit.~non-terminals had the form
        $F^{a, e}_q \vec{\Psi} \vec{\Psi}_{k-1} \cdots \vec{\Psi}_0$.
        where $e$ and $\vec{\Psi}$ are used to handle \emph{collapse links}, which we do not need here.
    }
    $F^a_q \vec{\Psi}_{k-1} \cdots \vec{\Psi}_0$.
    That is, every configuration is represented (in a precise sense) by such a term and every term of such a form represents a configuration.
    Moreover, for every transition
    $(q, a, o, \sigma, q')$
    of the pushdown automaton, when
    $o \neq \varepsilon$
    we can associate a rewrite rule of the scheme
    \[
        F^a_q  = \lambda \vec{x} . o(e_{(q', \sigma)})
    \]
    such that the term obtained by applying the rewrite rule to
    $F^a_q \vec{\Psi}_{k-1} \cdots \vec{\Psi}_0$
    is a term
    $o(F^b_{q'} \vec{\Psi}'_{k-1} \cdots \vec{\Psi}'_0)$
    where
    $F^b_{q'} \vec{\Psi}'_{k-1} \cdots \vec{\Psi}'_0$
    represents the configuration reached by the transition.
    That is,
    $(q', \sigma(s))$.
    When
    $o = \varepsilon$
    we simply omit $o$, that is
    \[
        F^a_q  = \lambda \vec{x} . e_{(q', \sigma)} \ .
    \]

    To each non-terminal, we assign
    $O(F^a_q) = \playere$
    whenever $q$ is a $\playere$ control state.
    Otherwise,
    $O(F^a_q) = \playera$.
    For every accepting control state $q$ we introduce the additional rule
    \[
        F^a_q = \lambda \vec{x} . \wordend \ .
    \]
    Finally, we have an initial rule
    \[
        S = t
    \]
    where $t$ is the term representing the initial configuration.

    Given the tight correspondence between configurations and transitions of the \kitpda and terms and rewrite steps of $G$, alongside the direct correspondence between the owner of a control state $q$ and the owner of a non-terminal of $G$, it is straightforward to see, via induction over the length of an accepting run in one direction, or derivation sequence in the other, that $B'$ is able to produce a word not in $A$ iff a word not in $A$ is derivable from $S$.
    Thus, we have reduced the word acceptance problem for some alternating \kitpdatape to the game problem for language inclusion of a scheme.
    This shows the problem is
    $(k+1)\mathsf{EXP}$-hard.
\end{proof}

\end{document}